\documentclass[peerreview,12pt]{IEEEtran}
\textheight=9.00in
\textwidth=7.25in
\oddsidemargin=-0.375in
\evensidemargin=-0.375in
\topmargin=-0.125in
\headheight=0.0in
\headsep=0.2in

\usepackage{graphicx}
\usepackage{times,amsmath,amssymb,array,stfloats,midfloat,psfrag,setspace,multirow}
\usepackage[noadjust]{cite}
\setstretch{1.5}
\usepackage{mathrsfs}

\newtheorem{theorem}{\textbf{Theorem}}
\newtheorem{definition}{\textbf{Definition}}%[section]
\newtheorem{lemma}{\textbf{Lemma}}%[section]
%[section]
\newtheorem{corollary}{\textbf{Corollary}}%[section]
\hyphenation{op-tical net-works semi-conduc-tor}

\newtheorem{remark}{Remark}
\newcommand{\R}{\mathcal{R}}
\newcommand{\bR}{\mathbf{R}}
\newcommand{\uR}{\underline{\mathcal{R}}}
\newcommand{\oR}{\overline{\mathcal{R}}}
\newcommand{\MLD}{_{\textrm{MLD}}}
\newcommand{\MD}{_\textrm{MD}}

\newcommand{\e}{\varepsilon}
\newcommand{\V}{\tilde{V}}

\renewcommand{\L}{\mathscr{L}}
\newcommand{\N}{\mathcal{N}}
\newcommand{\E}{\mbox{${\mathbb E}$}}

\newcommand{\D}{\mathbf{D}}

\renewcommand{\S}{\mathcal{S}}
%\renewcommand{\comment}[1]{}

%opening
\title{Asymmetric Multilevel Diversity Coding and Asymmetric Gaussian Multiple Descriptions}
%\author{}
\date{}
\begin{document}

\author{Soheil Mohajer,~\IEEEmembership{Student Member,~IEEE}, Chao Tian,~\IEEEmembership{Member,~IEEE},  and\\ Suhas N. Diggavi,~\IEEEmembership{Member,~IEEE}
\thanks{S. Mohajer and S. N. Diggavi are with the School of Computer and Communication Sciences, Ecole Polytechnique Federale de Lausanne, Switzerland. C. Tian  is with  
AT\&T Labs-Research, Florham Park, New Jersey, USA.}
}
% \authorblockN{Soheil Mohajer$^\dagger$, Chao Tian$^\ddagger$, and Suhas N. Diggavi$^\dagger$\\[7mm]}
% \authorblockA{$^\dagger$ LICOS - EPFL, 
% Lausanne, Switzerland,
% Email: \{soheil.mohajer, suhas.diggavi\}@epfl.ch\\
% $^\ddagger$ AT\&T Labs-Research, 
% Florham Park, NJ 07932, USA, 
% Email: tian@research.att.com}
% }

\maketitle

\begin{abstract}
We consider the asymmetric multilevel diversity (A-MLD) coding problem, 
where a set of $2^K-1$ information sources, ordered in a decreasing 
level of importance, is encoded into $K$ messages (or descriptions). 
There are $2^K-1$ decoders, each of
which has access to a non-empty subset of the encoded messages. Each
decoder is required to reproduce the information sources up to a 
certain importance level depending on the combination of descriptions available to it. We obtain a single letter 
characterization of the achievable rate region for the
$3$-description problem. In contrast to symmetric multilevel diversity coding,
source-separation coding is not sufficient in the asymmetric
case, and ideas akin to network coding need to be used strategically.
Based on the intuitions gained in treating the A-MLD problem, we derive
inner and outer bounds for the rate region of the asymmetric Gaussian multiple description (MD) 
problem with three descriptions. Both the inner and outer bounds have a similar geometric structure to 
the rate region template of the A-MLD coding problem, and moreover, we show that 
the gap between them is small, which results in an approximate 
characterization of the asymmetric Gaussian three description rate region. 
\end{abstract}

\section{Introduction}

In the symmetric multilevel diversity coding (MLD) problem
\cite{RocheYeungHau:97}, $K$ source sequences are encoded into $K$
descriptions, which are sent to the decoders through noiseless
channels. These source sequences have a decreasing levels of importance, and 
each decoder has access to a non-empty
subset of the descriptions. The goal of the encoder is to produce the
descriptions such that each decoder with $k$ available descriptions is 
able to reconstruct the $k$ most important source sequences.
The symmetric
MLD problem was motivated by fault-tolerant
storage for disk arrays and for incremental priority encoding on
packet erasure channels; see \cite{RocheYeungHau:97} for more details. The MLD 
problem with three levels was solved by Roche \emph{et al.} in \cite{RocheYeungHau:97}, and 
the result was later extended by Yeung and Zhang \cite{YeungZhang:99} to 
an arbitrary number of levels. It was shown that source-separation coding\footnote{This was called 
 superposition coding in these papers. In order not to confuse this
  with the common terminology of broadcast channels, the new
  terminology has been adopted here, as suggested by R. Yeung.} 
is optimal for the symmetric problem. This means that each source
sequence can be compressed separately, and then the descriptions are obtained
 by concatenating the compressed source sequences appropriately.

In this work we formulate the asymmetric multilevel diversity (A-MLD) coding
problem. The problem can be understood as a refined version of symmetric MLD coding problem, and 
it is naturally applicable in distributed disk storage applications 
with asymmetric (unequal) reliabilities, in contrast to
symmetric (equal) reliabilities which motivate the symmetric MLD problem. 
Similarly, for packet erasure applications, the erasure probabilities for the
sub-packets may not be equal because the paths over which they are sent
may have different reliabilities. As such, in both applications, we
may wish to utilize not just the number of the encoders which are
accessible, but also their identities, since the descriptions are no 
longer symmetric. Therefore, the difference between the MLD and 
A-MLD problem is that in the asymmetric version the levels of reconstruction is determined 
by the specific combination of descriptions available to them, not 
just the number of descriptions. 

More precisely, $2^K-1$ source sequences are encoded into $K$ descriptions 
at the encoder. 
The $2^K-1$ decoders are ordered in a specific way, 
and the goal of the encoder is to produce the descriptions such that the $k$-th decoder 
 is able to reconstruct the $k$ most important source sequences, for
$k=1,\dots,2^K-1$. In this work, we only consider the $3$-description
case and provide a complete characterization of the achievable rate region. In
particular we show that source-separation coding coding is \emph{not}
optimal for this problem, and the source sequences in different levels
have to be jointly encoded (like in {\em network coding}) in an
optimal coding strategy. We also show that the scheme using \emph{linear} combinations
of these compressed sequences is optimal. We note that various special 
cases of $3$-description problem were studied 
in\footnote{We would like to thank R. Yeung for bringing this work to our attention.} \cite{Hau-thesis:95}, where, however, only no more than \emph{three} information sources
were considered. The characterization we provide in this work strictly subsumes those 
considered in \cite{Hau-thesis:95}. 

Let us now turn to a closely related problem, namely the multiple description (MD) problem.
In this problem a source is mapped into $K$ 
descriptions and sent to $2^K-1$ decoders, just as in the A-MLD coding problem. 
The decoders are required to reconstruct the source sequence within certain distortions 
using the available descriptions. The MD rate region characterization is long-standing 
open problem in information theory with a long history \cite{ElgCov82,ozarow,ZhaBer87}. Despite 
many important results, the problem is still open, even for the quadratic Gaussian 
case with only three descriptions. Using the intuitions gained in treating the A-MLD problem 
as well as the sum-rate lower bound for symmetric Gaussian MD problem
recently discovered in \cite{TianMohDig08}, we develop inner and outer bounds for the MD 
rate region, both of which bear similar geometric structure to the A-MLD coding rate region. 
Moreover, the gap between the bounds is small (less than 1.3 bits in terms of the Euclidean distance between the bounding planes), yielding an approximate characterization. One surprising consequence of this result is that the proposed simple architecture based on successive 
refinement (SR) \cite{EquitzCover:91} and A-MLD
coding is in fact close to optimality. From an engineering viewpoint,
this suggests that one can design simple and flexible MD codes that
are (approximately) optimal. 

One important observation leading to this work is the intimate
connection between the multilevel diversity (MLD) coding problem and
the MD problem observed in \cite{TianMohDig08-2}. There we showed that
for the symmetric MD problem, achievable rate region based on SR
coding coupled with symmetric multilevel diversity (S-MLD) coding
provides good approximation to the MD rate region under symmetric
distortion constraints; perhaps more interestingly, the achievable
rate region has the same geometric structure as that of the symmetric
MLD coding rate region. In fact, the symmetric MLD coding result is
essential for establishing the symmetric MD result in
\cite{TianMohDig08-2}. The result in \cite{TianMohDig08-2} suggests a
general approach in treating lossy source coding problems: first solve
a corresponding a lossless version of the problem, then extend the
results and intuitions to its lossy counterpart to yield an
approximate characterization. This is exactly our motivation to
formulate the A-MLD coding problem, and indeed the result given in
this work further illustrates the effectiveness of this approach.

The paper is organized as follows. In Section II, we
introduce the notations and provide a formal definition of the problems.
In Section III, we present the main results of the paper.  
We prove the main theorem for rate region characterization of the 
A-MLD problem in Section IV. In Section V, we focus on deriving the outer
and inner bounds for the rate region of the A-MD problem. Finally, Section
VI concludes the paper. Some of the detailed and technical proofs are
given in the appendix.

\section{Notations and Problem Formulation}

In this section we provide  formal definitions for both the asymmetric multilevel diversity (A-MLD) and the asymmetric multiple description (A-MD) coding problems. Since we need to use the result of the A-MLD problem when treating the A-MD problem, we may use different notations for these problems in order to avoid confusion.  

\subsection{Asymmetric Multilevel Diversity Coding}

Let $\{(V_{1,t},V_{2,t},\dots,V_{2^K-1,t})\}_{t=1,2,\dots}$ be an independent and identically
distributed process sampled from a finite size alphabet
$\mathcal{V}_1\times\mathcal{V}_2\times\cdots\times\mathcal{V}_{2^K-1}$ with
time index $t$. This can be considered as $2^K-1$ pieces of independent data streams,
namely, $\{V_{1,t}\},\dots,\{V_{2^K-1,t}\}$, where each data stream is an independently 
and identically distributed sequence. The data streams are ordered with decreasing
importance, e.g., consecutive refinements of a single source. We use $V_i^n$ to denote a 
length $n$ sequence of $V_i$, namely, $V_i^n=(V_{i,1},\dots,V_{i,n})$.

Define the  vector random variables $U_j$ as $U_{j}\triangleq(V_{1},\dots,V_{j})$ for
$j=1,\dots,2^K-1$, and $U_{0}\triangleq 0$.  We use $U_j^n$ to denote length $n$ sequences of 
$U_j$. We may simply use  $U^n$ to denote 
$U_{2^K-1}^n=(V_{1}^n,\dots,V_{2^K-1}^n)$ for brevity. Note that $U_j^n$ is a two-dimensional array, whose elements are 
independent of each other along both directions, $i=1,\dots,j$, and $t=1,\dots,n$. 

The Shannon entropy rate of the source $V_k$ is denoted by $h_k$. We also denote the entropy of 
$U_j$ by $H_j$, where the independence of sources $V_k$'s implies
\begin{eqnarray}
H_j=H(U_j)=H(V_1,\dots,V_j)=\sum_{i=1}^j H(V_i)=\sum_{i=1}^j h_i.
\end{eqnarray}

%Let $\Omega_K=\{0,1\}^K\setminus\mathbf{0}$ be the set of all non-zero binary sequences of length $K$. 
%We use the elements of $\Omega_K$ as indicators, as explained in the following.

The A-MLD problem can be described as follows. Consider $2^K-1$ source sequences which are 
fed to a single encoder. The encoder produces $K$ descriptions, denoted as 
$\Gamma_1,\Gamma_2\dots,\Gamma_K$ to encode the source sequences. The descriptions are 
sent over $K$ perfect channel. There are $2^K-1$ decoders, each has access to a non-empty subset of
the descriptions, $\S \subseteq \{\Gamma_1,\Gamma_2\dots,\Gamma_K\}$,  and wishes to decode losslessly the
source data streams below a certain \emph{level}, which is a function
of the description set $\S$. Fig.~\ref{fig:3AMLD-setup} illustrates the problem setting for $K=3$, and 
a specific decoding requirement for the decoders. 

Formally, we define the notion of
\emph{ordering level} to connect the decoding requirement of the decoders to their available description subsets as follows.
\begin{definition}
A valid \emph{ordering level} (or simply \emph{ordering}) on the non-empty subsets\footnote{For the rest of this paper, by \emph{subset} we always mean a non-empty subset although it is not precisely mentioned. } of $\{\Gamma_1,\Gamma_2\dots,\Gamma_K\}$ is a one-to-one
mapping $\L:\mathcal{P}(\{\Gamma_1,\Gamma_2\dots,\Gamma_K\})\setminus \emptyset \longrightarrow
\{1,\dots,2^K-1\}$ satisfying
\begin{description}
\item[(i)]   $\L(\{\Gamma_1\}) < \L(\{\Gamma_2\}) < \cdots <\L(\{\Gamma_K\})$,
\item[(ii)]  $\mathcal{S} \subset \mathcal{T}$ implies $\L(\mathcal{S}) < \L(\mathcal{T})$,
\end{description}
where $\mathcal{P}(M)$ is the power set of  $M$.
\label{def:order}
\end{definition}

The ordering level will be used to determine the decoding requirements 
of the decoders, \emph{e.g.,} a decoder with a set of descriptions $\S$ 
needs to decode the first $\L(\S)$ source streams. Condition (i) is given 
to avoid permuted repetition of the levels, where without loss of generality, 
we assume an initial ordering on the single description decoders. Condition 
(ii) is a natural fact that if  $\mathcal{S}$ is a subset of  $\mathcal{T}$, then 
the corresponding decoder can not do better than what decoder $\mathcal{T}$ 
can. We may simplify the notation occasionally, by omitting the braces, 
\emph{e.g.,} $\L(\Gamma_1,\Gamma_2)=\L(\{\Gamma_1,\Gamma_2\})$. 
The inverse mapping $\L^{-1}(k)$ is well defined, which is the subset of 
descriptions  whose ordering level is $k$.

An $(n; \L; M_i, i\in\{1,2,\dots,K\})$ MLD-code is defined by a set of encoding functions
\begin{eqnarray}
F_i: \mathcal{V}_1^n \times \mathcal{V}_2^n \times \cdots \times \mathcal{V}_{2^K-1}^n 
\longrightarrow \{1,2,\dots,M_i\},\quad
i\in\{1,2,\dots,K\},
\end{eqnarray}
and decoding functions
\begin{eqnarray}
\displaystyle
G_{\S}: \prod_{j: \Gamma_j\in\S}\{1,\dots,M_j\} \longrightarrow
\mathcal{V}_1^n \times \mathcal{V}_2^n \times \cdots \times \mathcal{V}_{\L(\S)}^n, \quad
 \S\subseteq\{\Gamma_1,\dots,\Gamma_K\},
\end{eqnarray}
where $\prod$ denotes a set product. We define
\begin{eqnarray}
\hat{U}_{\L(\S)}^n(\S)\triangleq G_{\S}(F_j(U^n); j:\Gamma_j\in\S )
\end{eqnarray}
and $\hat{V}_i^n(\S)$ is the corresponding part of $\hat{U}_{\L(\S)}^n(\S)$, for $i\leq \L(\S)$.

A rate tuple $\bR^{\L}=(R_1,R_2,\dots,R_K)$ is called admissible for a prescribed ordering $\L$, if for any $\e> 0$
and sufficiently large $n$, there exist an $(n; \L; M_i, i\in\{1,2,\dots,K\})$ MLD-code such that
\begin{eqnarray}
\frac{1}{n}\log M_i \leq R_i + \e,\quad i\in\{1,2,\dots,K\},
\end{eqnarray}
and
\begin{eqnarray}
\Pr(\hat{V}_i^n(\S) \neq V_i^n ) <\e \quad \forall \S \subseteq\{\Gamma_1,\dots,\Gamma_K\}, \textrm{ and }
\forall i\leq \L(\S)\label{level}.
\end{eqnarray}
The main goal in the (lossless) multilevel diversity coding problem 
is to characterize $\R\MLD$, the set of all
achievable rate tuples $(R_i; i\in\{1,2,\dots,K\})$ in terms of the entropy of the
source sequences and the given ordering level. We denote such rate region by $\R\MLD^{\L}$ for a specific ordering.

In this paper we consider this problem for three descriptions ($K=3$) and give a 
complete characterization of the rate region. It is straightforward to
show that there are eight possible orderings for $\{\Gamma_1,\Gamma_2,\Gamma_3\}$, which are shown in  Table~\ref{tbl:ordering}. We may further divide each ordering into sub-regimes to simplify the problem for each case. The results of this work are general and hold for all possible orderings. However, in order to illustrate the result,  we may specialize some of the arguments/theorems to the ordering level $\L_1$ defined as 
\begin{align*}
\begin{array}{c}
\L_1(\Gamma_1)=1,  \quad \L_1(\Gamma_2)=2,  \quad \L_1(\Gamma_3)=3,\\
\L_1(\Gamma_1,\Gamma_2)=4, \quad \L_1(\Gamma_1,\Gamma_3)=5, \quad \L_1(\Gamma_2,\Gamma_3)=6,\quad \L_1(\Gamma_1,\Gamma_2,\Gamma_3)=7.
\end{array}
\end{align*}
 The setting of the problem for the ordering level $\L_1$ is shown in Fig.~\ref{fig:3AMLD-setup}. 
\begin{figure}[t]
\begin{center}
 \psfrag{d1}[Bc][Bc]{$\ \mathtt{DEC}$}
 \psfrag{d2}[Bc][Bc]{$\ \mathtt{DEC}$}
 \psfrag{d3}[Bc][Bc]{$\ \mathtt{DEC}$}
 \psfrag{d4}[Bc][Bc]{$\ \mathtt{DEC}$}
 \psfrag{d5}[Bc][Bc]{$\ \mathtt{DEC}$}
 \psfrag{d6}[Bc][Bc]{$\ \mathtt{DEC}$}
 \psfrag{d7}[Bc][Bc]{$\ \mathtt{DEC}$}
  \psfrag{1}[Bc][Bc]{$\hspace{-13mm}\hat{V}_1^n$}
  \psfrag{2}[Bc][Bc]{$\hspace{-6mm}\hat{V}_1^n, \hat{V}_2^n$}
  \psfrag{3}[Bc][Bc]{$\hat{V}_1^n,\hat{V}_2^n,\hat{V}_3^n$}
  \psfrag{4}[Bc][Bc]{$\ \ $}
  \psfrag{5}[Bc][Bc]{$\ \ \ \vdots $}
  \psfrag{6}[Bc][Bc]{$\ $}
  \psfrag{7}[Bc][Bc]{$\hat{V}_1^n,\dots,\hat{V}_7^n$}
  \psfrag{e}[Bc][Bc]{$\mathtt{ENC}$}
 \psfrag{x1}[Bc][Bc]{$\ V_1^n$}
 \psfrag{x2}[Bc][Bc]{$\ V_2^n$}
 \psfrag{x3}[Bc][Bc]{$\vdots$}
 \psfrag{x7}[Bc][Bc]{$\ V_7^n$}
 \psfrag{a}[Bc][Bc]{$\Gamma_1$}
 \psfrag{b}[Bc][Bc]{$\Gamma_2$}
 \psfrag{c}[Bc][Bc]{$\Gamma_3$}
\hspace{2mm}
\includegraphics[width=10cm]{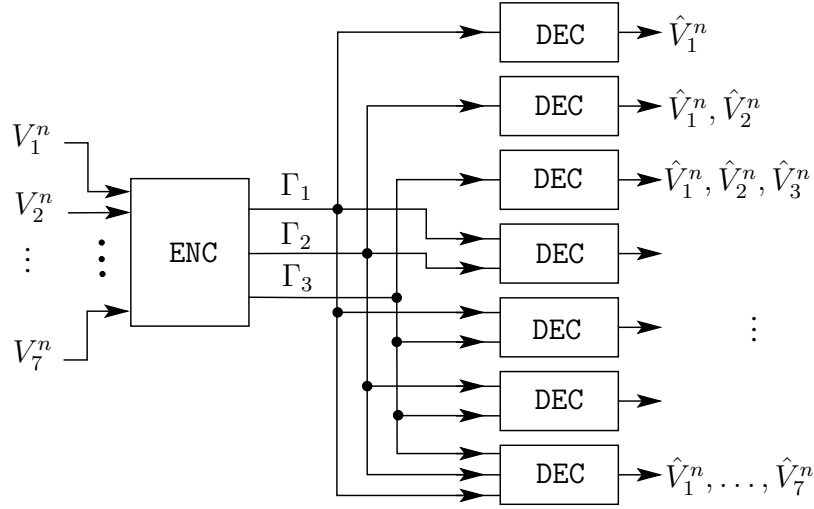}
\caption{The $3$-description asymmetric multilevel diversity coding problem for ordering level $\L_1$.}
\label{fig:3AMLD-setup}
\end{center}
\end{figure}

Fig.~\ref{desc} shows the subset of source streams
which should be recovered by each subset of descriptions in $\L_1$ setting.

\begin{figure}[t]
\begin{center}
   \psfrag{A}[Bc][Bc]{$\overbrace{\hspace{23mm}}^{\Gamma_1}$}
   \psfrag{B}[Bc][bc]{$\underbrace{\hspace{39mm}}_{\Gamma_2}$}
   \psfrag{AB}[Bc][Bc]{$\overbrace{\hspace{70mm}}^{\Gamma_1,\Gamma_2}$}
   \psfrag{ABC}[Bc][Bc]{$\underbrace{\hspace{120mm}}_{\Gamma_1,\Gamma_2,\Gamma_3}$}
   \psfrag{dot1}[Bc][Bc]{$\hspace{70mm}\cdots$}
   \psfrag{dot2}[Bc][Bc]{$\cdots$}
   \psfrag{1}[Bc][bc]{$V_1^n$}
   \psfrag{2}[Bc][bc]{$V_2^n$}
   \psfrag{3}[Bc][bc]{$V_3^n$}
   \psfrag{4}[Bc][bc]{$V_4^n$}
   \psfrag{5}[Bc][bc]{$V_5^n$}
   \psfrag{6}[Bc][bc]{$V_6^n$}
   \psfrag{7}[Bc][bc]{$V_7^n$}
   \includegraphics[width=12cm]{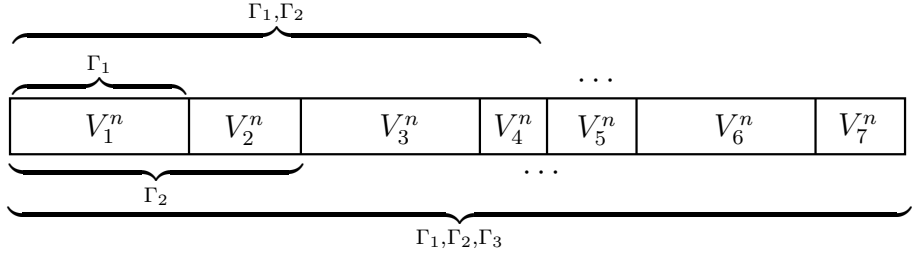}
\caption{Levels assigned to the description subsets determines the
recoverable source subsequences. The requirements corresponding to the ordering level $\L_1$ are shown in this figure.} \hspace{.1cm}
\label{desc}
\end{center}
\vspace{-5pt}
\end{figure}

\vspace{-2pt}

\subsection{Asymmetric Gaussian Multiple-Description Coding}

Let $\{X(t)\}_{t=1,2,\dots}$ be a sequence of independent and identically distributed zero mean and unit variance real-valued Gaussian source, \emph{i.e.,}  $\mathcal{X}=\mathbb{R}$, with time index $t$. Moreover, the reconstruction alphabet is also assumed to be $\mathbb{R}$. The vector $X(1),X(2),\dots,X(n)$ is denoted by $X^n$. We use capital letters for random variables, and the corresponding lower-case letters for their realization. The quality of the reconstruction is measured by the quadratic distance between the original sequence $x^n$ and the reconstructed one $\hat{x}^n$. Formally,  we define the distortion as 
\begin{eqnarray}
d(x^n, \hat{x}^n)= \frac{1}{n}\sum_{k=1}^n |x(k)-\hat{x}(k)|^2,  
\end{eqnarray}

In a general multiple description setting, the encoders produces $K$ descriptions, namely $\Gamma_1,\Gamma_2,\dots,\Gamma_K$ based on the source sequence and sends them to the decoders through noiseless channels. Each decoder receives a non-empty subset of the descriptions, and has to reconstruct the source sequence $\hat{x}^n$ which satisfies a certain level of fidelity. 

In a manner  similar to the last subsection, we denote each decoder by the corresponding set of available descriptions.
%The encoder produces three descriptions of the source, $\Gamma_1$, $\Gamma_2$, and $\Gamma_3$. There are seven decoders, each of them has access to a non-empty subset of $S=\{\Gamma_1,\Gamma_2,\Gamma_3\}$ through noiseless bit-pipes. Let $\Omega=\{0,1\}^3/\{000\}$, and $\v\in\Omega$ be a non-zero binary vector of length three. For simplicity we denote  subsets of $S$ by $\S_{\v}$, where $\Gamma_i\in \S_{\v}$ if and only if $\v_i=1$. Each decoder with descriptions in $\S_{\v}$ has to reconstruct the source sequence with a certain level of fidelity with respect to the mean squared error 
Each decoder $\S$ has a distortion constraint $D_{\S}$, and needs to reconstruct the source such that the corresponding expected distortion does not exceed this constraint. The main goal in this problem is to characterize the set of admissible rates of the descriptions in a way that such reconstructions are possible. We present a formal definition of the problem next.

An  $(n ; M_i, i\in\{1,\dots,K\}; \Delta_{\S}, \S\subseteq\{\Gamma_1,\dots,\Gamma_K\} )$ MD-code is defined as a set of encoding functions
\begin{eqnarray}
F_i: \mathcal{X}^n\longrightarrow \{1,2,\dots,M_i\},\quad
i\in\{1,2,\dots,K\},
\end{eqnarray}
and $2^K-1$ decoding functions
\begin{eqnarray}
G_{\S}: \prod_{j:\Gamma_j\in\S}\{1,\dots,M_j\} \longrightarrow
\mathcal{X}^n, \quad \S\subseteq\{\Gamma_1,\dots,\Gamma_K\},
\end{eqnarray}
with
\begin{eqnarray}
\Delta_{\S}=\E d(X^n,\hat{X}_{\S}^n),\quad  \S\subseteq\{\Gamma_1,\dots,\Gamma_K\},
\end{eqnarray}
where 
\begin{eqnarray}
\hat{X}^n_{\S}=G_{\S}(F_j(X^n), j:\Gamma_j\in\S).
\end{eqnarray}
Again, $\prod$ denotes set product, and $\E$ is the expectation operator. 

A rate tuple $\bR=(R_1,R_2,\dots,R_K)$ is called $\D=(D_{\S}; \S\subseteq\{\Gamma_1,\dots,\Gamma_K\})$-admissible if for every $\e > 0$ and sufficiently large $n$, there exists an $(n; M_i, i\in\{1,\dots,K\}; \Delta_{\S}, \S\subseteq\{\Gamma_1,\dots,\Gamma_K\})$ MD-code such that 
\begin{eqnarray}
\frac{1}{n} \log M_i \leq R_i+\e, \quad i\in\{1,\dots,K\},
\end{eqnarray}
and
\begin{eqnarray}
\Delta_{\S} \leq D_{\S}+\e, \quad \S\subseteq\{\Gamma_1,\dots,\Gamma_K\}.
\end{eqnarray}
%We may use either $D_{\v}$ or $D_{\L(\v)}$ to denote the required distortion of the decoder with access $\S_{\v}$.
We denote by $\R\MD(\D)$ the set of all $\D$-admissible rate tuples, which we seek to characterize. 

%Our approach to this problem to establish a new outer bound for the admissible region. Comparing our outer bound to a simple known inner bound shows that the gap between the bounds is at most a constant bit per description. This gives us an approximate characterization for the admissible rate region. 

Let $\mathcal{T}$ and $\S$  be two description sets, satisfying $\mathcal{T} \subseteq \mathcal{S} \subseteq \{\Gamma_1,\dots,\Gamma_K\}$. It is clear that the decoder with access to $\mathcal{S}$ can reconstruct the source sequence as \emph{well} as the one with access to $\mathcal{T}$ does, even if $D_{\mathcal{S}}\geq D_{\mathcal{T}}$. The following lemma shows that slightly modification of the distortion vector in order to satisfy such property does not change the admissible rate region.

\begin{lemma}
For a given distortion vector $\D$, define $\tilde{\D}$ as $\tilde{\D}=(\tilde{D}_{\S}; \S\subseteq\{\Gamma_1,\dots,\Gamma_K\})$, where
\begin{eqnarray*}
\tilde{D}_{\S}=\min_{\mathcal{T}:\mathcal{T}\subseteq\S} D_{\mathcal{T}}.
\end{eqnarray*}
Then $\R\MD(\tilde{\D})=\R\MD(\D)$.
\label{lm:modif}
\end{lemma}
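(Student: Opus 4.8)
The statement claims that monotonizing the distortion vector (replacing each $D_{\S}$ by the minimum over sub-descriptions $\tilde{D}_{\S}=\min_{\mathcal{T}\subseteq\S}D_{\mathcal{T}}$) leaves the admissible rate region unchanged. The natural strategy is to prove the two inclusions $\R\MD(\tilde{\D})\subseteq\R\MD(\D)$ and $\R\MD(\D)\subseteq\R\MD(\tilde{\D})$ separately. Let me think about which direction is easy and which is the real content.

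First, since $\tilde D_{\S}=\min_{\mathcal T\subseteq\S}D_{\mathcal T}\le D_{\S}$ (taking $\mathcal T=\S$), the constraints in $\tilde\D$ are pointwise at least as tight as those in $\D$. So any code satisfying all the $\tilde D_{\S}$ constraints automatically satisfies all the $D_{\S}$ constraints, giving $\R\MD(\tilde\D)\subseteq\R\MD(\D)$ immediately. The substance is the reverse inclusion.

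For $\R\MD(\D)\subseteq\R\MD(\tilde\D)$, I'd take an arbitrary $\D$-admissible rate tuple $\bR$ and an associated $(n;M_i;\Delta_\S)$ MD-code achieving $\Delta_\S\le D_\S+\e$ for all $\S$, and modify only the decoding functions—keeping the encoders $F_i$ (hence the rates $R_i$) untouched. The key observation is the monotonicity noted in the paragraph preceding the lemma: if $\mathcal T\subseteq\S$, then decoder $\S$ has access to every description that decoder $\mathcal T$ has, so decoder $\S$ can simply ignore the extra descriptions and \emph{simulate} the decoder $G_{\mathcal T}$. Concretely, I'd redefine the decoder for $\S$ to output $\hat X^n_{\mathcal T^\star}$ where $\mathcal T^\star=\arg\min_{\mathcal T\subseteq\S}D_{\mathcal T}$; this is a valid decoding function for $\S$ because $\{F_j:\Gamma_j\in\mathcal T^\star\}$ is a subset of $\{F_j:\Gamma_j\in\S\}$. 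With this choice the new expected distortion at $\S$ equals the old distortion at $\mathcal T^\star$, namely $\Delta_{\mathcal T^\star}\le D_{\mathcal T^\star}+\e=\tilde D_{\S}+\e$, so the code is $\tilde\D$-admissible at the same rates, establishing $\bR\in\R\MD(\tilde\D)$.

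Combining the two inclusions yields $\R\MD(\D)=\R\MD(\tilde\D)$. I do not expect a genuine obstacle here: the argument is essentially a formalization of the ``a larger decoder can mimic a smaller one'' remark, and the only care needed is to confirm that reusing the encoders preserves the rate constraints (it does, since the $M_i$ and the bound $\frac1n\log M_i\le R_i+\e$ are unchanged) and that the reconstruction $\hat X^n_{\mathcal T^\star}$ is genuinely computable from the descriptions available to $\S$. One minor point to state cleanly is that the same $\e$ can be used throughout, since admissibility quantifies over all $\e>0$ and the construction introduces no additional slack.
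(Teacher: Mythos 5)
Your proposal is correct and follows essentially the same route as the paper's own proof: the easy inclusion $\R\MD(\tilde{\D})\subseteq\R\MD(\D)$ from $\tilde{D}_{\S}\leq D_{\S}$, and the reverse inclusion by keeping the encoders and redefining each decoder $G_{\S}$ to output the reconstruction of the minimizing sub-decoder $\arg\min_{\mathcal{T}\subseteq\S}D_{\mathcal{T}}$. No gaps; the argument matches the paper's in both structure and detail.
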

\begin{proof}[Proof of Lemma \ref{lm:modif}]
It is clear that $\tilde{D}_{\S}\leq D_{\S}$ for all $\S\subseteq\{\Gamma_1,\dots,\Gamma_K\}$, and therefore $\R\MD(\tilde{\D})\subseteq\R\MD(\D)$. So, it remains to prove $\R\MD(\D)\subseteq\R\MD(\tilde{\D})$. Let $\bR\in\R\MD(\D)$ be an admissible rate tuple for $\D$, and $(n;M_i;\Delta_{\S})$ be a code for a given $\varepsilon$ which achieves the distortion constraints $\D$, with encoding functions $\{F_i\}$ and decoding functions $\{G_{\S}\}$. We can easily modify the decoding functions and obtain a code which satisfies $\tilde{\D}$. By the definition of $\tilde{\D}$, for all $\S$ we have $\tilde{D}_{\S}=D_{\tilde{\S}}$, where 
\begin{eqnarray*}
\tilde{\S}\triangleq \arg \min_{\mathcal{T}:\mathcal{T}\subseteq\S} D_{\mathcal{T}}.
\end{eqnarray*}
Define
\begin{align*}
\tilde{X}_{\S}^n=\tilde{G}_{\S}(F_j(X^n); j:\Gamma_j\in\S)\triangleq \hat{X}_{\tilde{\S}}^n.
\end{align*}
Obviously, 
\begin{eqnarray*}
\E d(X^n,\tilde{X}_{\S}^n)=\E  d(X^n,\hat{X}_{\tilde{\S}}^n)  \leq D_{\tilde{\S}}+\e =\tilde{D}_{\S} +\e.
\end{eqnarray*}
Thus the similar code with the modified decoding functions satisfies the constraint tuple $\tilde{\D}$, and therefore $\bR\in\R\MD(\tilde{\D})$.
\end{proof}

Given this lemma, we can assume, without loss of generality, that $D_{\mathcal{T}}\leq D_{\S}$ for all $\S \subseteq\mathcal{T}$. These distortion constraints then induce an ordering on the decoders, or equivalently on their associated subset of descriptions.

%The problem may have different setup depending on the relation between the distortion constraints. In fact the ordering level of the description subsets defined in Definition~\ref{def:order} can be applied here, and for each of them we have a different problem setting. 
 
In this work, again we focus on the three description ($K=3$) problem, and present the results in general form, \emph{i.e.,}  regardless the exact ordering. Occasionally we shall provide the proof details only for the specific sorted distortion constraints 
\begin{eqnarray*}
D_{\Gamma_1} \geq D_{\Gamma_2} \geq  D_{\Gamma_3} \geq D_{\Gamma_1 \Gamma_2} \geq D_{\Gamma_1 \Gamma_3} \geq D_{\Gamma_2 \Gamma_3} \geq D_{\Gamma_1 \Gamma_2 \Gamma_3},
\end{eqnarray*}
which induces the ordering 
\begin{eqnarray*}
\L(\Gamma_1) < \L(\Gamma_2) < \L(\Gamma_3) <\L(\Gamma_1 \Gamma_2) <\L(\Gamma_1 \Gamma_3) <\L(\Gamma_2\Gamma_3) <\L(\Gamma_1 \Gamma_2 \Gamma_3)
\end{eqnarray*}
on the subsets of descriptions, which is exactly the aforementioned ordering $\L_1$. Fig.~\ref{fig:3AMD-setup} shows the setting of this problem for the ordering $\L_1$. It is worth mentioning that the distortion constraints may also induce different ordering of subsets of the descriptions. All possible ordering functions are listed in Table~\ref{tbl:ordering}. 

\begin{figure}[t]
\begin{center}
\psfrag{d1}[Bc][Bc]{$\ \mathtt{DEC}$}
\psfrag{d2}[Bc][Bc]{$\ \mathtt{DEC}$}
\psfrag{d3}[Bc][Bc]{$\ \mathtt{DEC}$}
\psfrag{d4}[Bc][Bc]{$\ \mathtt{DEC}$}
\psfrag{d5}[Bc][Bc]{$\ \mathtt{DEC}$}
\psfrag{d6}[Bc][Bc]{$\ \mathtt{DEC}$}
\psfrag{d7}[Bc][Bc]{$\ \mathtt{DEC}$}
 \psfrag{1}[Bc][Bc]{$\hspace{-15pt}(\hat{X}^n_{\Gamma_1},D_{\Gamma_1})$}
 \psfrag{2}[Bc][Bc]{$ \hspace{-15pt}(\hat{X}^n_{\Gamma_2},D_{\Gamma_2})$}
 \psfrag{3}[Bc][Bc]{$ \hspace{-15pt}(\hat{X}^n_{\Gamma_3},D_{\Gamma_3})$}
 \psfrag{4}[Bc][Bc]{$ \hspace{-1pt}(\hat{X}^n_{\Gamma_1 \Gamma_2},D_{\Gamma_1 \Gamma_2})$}
 \psfrag{5}[Bc][Bc]{$ \hspace{-1pt}(\hat{X}^n_{\Gamma_1 \Gamma_3},D_{\Gamma_1 \Gamma_3})$}
 \psfrag{6}[Bc][Bc]{$ \hspace{-1pt}(\hat{X}^n_{\Gamma_2 \Gamma_3},D_{\Gamma_2 \Gamma_3})$}
 \psfrag{7}[Bc][Bc]{$ \hspace{15pt}(\hat{X}^n_{\Gamma_1 \Gamma_2 \Gamma_3},D_{\Gamma_1 \Gamma_2 \Gamma_3})$}
 \psfrag{e}[Bc][Bc]{$\mathtt{ENC}$}
\psfrag{x}[Bc][Bc]{$X^n$}
\psfrag{a}[Bc][Bc]{$\Gamma_1$}
\psfrag{b}[Bc][Bc]{$\Gamma_2$}
\psfrag{c}[Bc][Bc]{$\Gamma_3$}
\hspace{2mm}
\includegraphics[width=10cm]{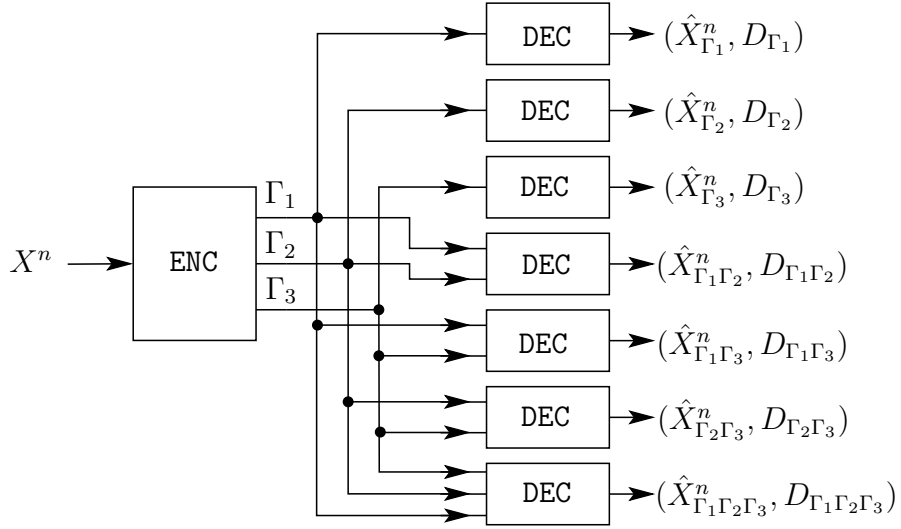}
\caption{The three description source coding problem with ordering level $\L_1$.}
\label{fig:3AMD-setup}
\end{center}
\end{figure}

\section{The Main Results}
In this section we present the main results of the paper. We state the theorems in a unified way which hold for all orderings, 
%. The converse proofs are also presented regardless of the underlying ordering, 
and also specialize it to the ordering $\L_1$ to facilitate understanding and further discussion. We start with the admissible rate region of the  A-MLD problem, $\R\MLD$, and then give an approximate characterization of the rate region of the  A-MD problem based on  the coding scheme inspired by the A-MLD problem.  

%It is essential to fix an ordering in our achievability arguments, since each ordering requires different encoding function. This %motivates our attention to the specific ordering $\L_1$, however similar arguments can be straightforwardly applied in all other %orderings. 

\subsection{The Admissible Rate Region of $3$-Description Asymmetric Multilevel Diversity Coding}
The following theorem characterizes the admissible rate region of the asymmetric multilevel diversity coding problem for an arbitrary ordering level.

\begin{theorem}
Let $\V=(V_1,\dots, V_7)$ be a given sequence of sources with entropy sequence $\mathbf{H}=(H_1,\dots,H_7)=(H(V_1),H(V_1,V_2),\dots,H(V_1,\dots,V_7))$. For a given ordering level $\L$, the rate region $\R\MLD^{\L}(\mathbf{H})$ is  the set of all  non-negative triples $(R_1,R_2,R_3)$ which satisfy

 {\makeatletter \newcounter{tempcnt}
  \setcounter{tempcnt}{\the\c@equation}
  \def\theequation{$P$\@arabic\c@equation}\makeatother
  \setcounter{equation}{0}
\begin{align}
R_i &\geq H_{\L(\Gamma_i)},\qquad i=1,2,3\label{ur1}\\
%R_2 &\geq H_{\L(010)}\label{ur2}\\
%R_3 &\geq H_{\L(001)}\label{ur3}\\
R_i+R_j &\geq H_{\min\{\L(\Gamma_i),\L(\Gamma_j)\}}+H_{\L(\Gamma_i,\Gamma_j)},\qquad i\neq j\label{ur12}\\
%R_1+R_3 &\geq H_1+H_5\label{r13}\\
%R_2+R_3 &\geq H_2+H_6\label{r23}\\
2R_i+R_j+R_k &\geq H_{\min\{\L(\Gamma_i),\L(\Gamma_j)\}}+H_{\min\{\L(\Gamma_i),\L(\Gamma_k)\}} \nonumber\\
&\phantom{=}+ H_{\min\{\L(\Gamma_i,\Gamma_j), \L(\Gamma_i,\Gamma_k)  \}}+H_{\L(\Gamma_i,\Gamma_j,\Gamma_k)}, \qquad i\neq j \neq k \label{ur1123}\\
%R_1+2R_2+R_3 &\geq H_{\L(\Gamma_1)}+H_{\L(\Gamma_2)} + H_{\min\{\L(\Gamma_1,\Gamma_2), \L(\Gamma_2,\Gamma_3)  \}}   +
%H_{\L(\Gamma_1,\Gamma_2,\Gamma_3)} \label{ur1223}\\
%R_1+R_2+2R_3 &\geq H_{\L(\Gamma_1)}+H_{\L(\Gamma_2)} + H_{\min\{\L(\Gamma_1,\Gamma_3), \L(\Gamma_2,\Gamma_3)  \}}   +
%H_{\L(\Gamma_1,\Gamma_2,\Gamma_3) }\label{r1233-1}\\
%R_1+R_2+2R_3 &\geq H_{\L(\Gamma_1)}+H_{\L(\Gamma_3)} + H_{4}  +
%H_{\L(\Gamma_1,\Gamma_2,\Gamma_3) }\label{ur1233-2}\\
R_1+R_2+R_3 &\geq H_{\L(\Gamma_1)} + H_{\min\{\L(\Gamma_1,\Gamma_2), \L(\Gamma_3) \}}   + H_{\L(\Gamma_1,\Gamma_2,\Gamma_3)},\label{ur123-1}\\
R_1+R_2+R_3 &\geq H_{\L(\Gamma_1)}+\frac{1}{2} H_{\L(\Gamma_2)} + \frac{1}{2} H_{\min\{\L(\Gamma_1,\Gamma_2)
, \L(\Gamma_1,\Gamma_3) ,\L(\Gamma_2,\Gamma_3) \}} \nonumber\\
&\phantom{=}+ H_{\L(\Gamma_1,\Gamma_2,\Gamma_3)}.\label{ur123-2}
\end{align}
\makeatletter
\setcounter{equation}{\the\c@tempcnt}\makeatother
}
\label{thm:mld}
\end{theorem}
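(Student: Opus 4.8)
The plan is to prove the two inclusions separately: a converse showing every admissible $(R_1,R_2,R_3)$ satisfies $(P1)$--$(P6)$, and an achievability showing every non-negative triple meeting $(P1)$--$(P6)$ is admissible. Since the right-hand sides are written directly in terms of $\L$, I would carry the argument for a generic ordering, using $\L_1$ only to fix intuition. The polyhedron cut out by $(P1)$--$(P6)$ is the object I ultimately want to realize, so I would first record its facet structure and its dominant corner points, since these are exactly what the achievability scheme must hit.

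For the converse I would start from Fano's inequality, which turns each decoding constraint $\Pr(\hat{V}_i^n(\S)\neq V_i^n)<\e$ into $H(U_{\L(\S)}^n\mid F_j, \Gamma_j\in\S)\le n\delta_n$ with $\delta_n\to 0$, together with $H(F_i)\le n(R_i+\e)$. The single-letter bounds $(P1)$ are then immediate from the single-description decoders. The pairwise bounds $(P2)$ rest on one key observation: if $a=\min\{\L(\Gamma_i),\L(\Gamma_j)\}$, then \emph{both} $F_i$ and $F_j$ individually determine $U_a^n$, so their codeword entropies must overlap by at least $H(U_a^n)=nH_a$. Formally, $H(F_i)+H(F_j)=H(F_i,F_j)+I(F_i;F_j)$, where $H(F_i,F_j)\ge nH_{\L(\Gamma_i,\Gamma_j)}-n\delta_n$ comes from the pair decoder and $I(F_i;F_j)\ge nH_a-2n\delta_n$ from the common reconstruction of $U_a^n$. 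The doubled bound $(P3)$ follows from the same overlap idea applied in a nested fashion: $\Gamma_i$ is paired once with $\Gamma_j$ and once with $\Gamma_k$, and expanding $2H(F_i)+H(F_j)+H(F_k)$ as a sum of conditional entropies and mutual informations produces exactly the four layers $H_{\min(\cdots)}$ on the right.

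The sum-rate bounds $(P4)$ and $(P5)$ are the delicate part. I expect $(P4)$ to follow from a single chain-rule expansion of $H(F_1)+H(F_2)+H(F_3)$ that peels off $U_{\L(\Gamma_1)}^n$, then the level $\min\{\L(\Gamma_1,\Gamma_2),\L(\Gamma_3)\}$ common to both the $\{\Gamma_1,\Gamma_2\}$ and $\{\Gamma_3\}$ decoders, and finally the full $U_{\L(\Gamma_1,\Gamma_2,\Gamma_3)}^n$. The fractional coefficients in $(P5)$ signal that it cannot come from one such expansion; instead I would symmetrize, deriving two asymmetric sum bounds that are images of each other under exchanging the roles of the descriptions entering the middle term, and then average them so that the integer weight on the exchanged level collapses to $\tfrac12$. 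This is precisely the mechanism behind the symmetric Gaussian sum-rate bound of \cite{TianMohDig08}, and I anticipate that reproducing the half-integer weight rigorously, while keeping the $o(n)$ terms under control, is the main obstacle of the converse.

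For achievability I would exploit that the $V_\ell$ are mutually independent and that the requirement is simply to deliver each $V_\ell$ to every decoder $\S$ with $\L(\S)\ge\ell$. This is a lossless multi-source delivery problem, and I would encode each description as a \emph{linear} combination (over a large enough finite field, via random binning) of the compressed source layers, arranged so that every prescribed decoder set can invert for the layers it needs. It suffices to achieve the dominant corner points of the polyhedron and then fill the region by time-sharing, using that the region is closed under increasing any $R_i$. At the corner forced by $(P5)$ the optimal code must mix layers across descriptions, which is exactly where source-separation provably fails and the network-coding-style linear mixing becomes essential. The matching obstacle on this side is to verify that the mixed-layer code attaining the $(P5)$ corner lands exactly on that facet; I would do this by computing the achieved rates of the linear code and confirming they meet $(P5)$ with equality, treating each ordering (or each sub-regime of $\L_1$) separately whenever the identity of the mixed layer changes.
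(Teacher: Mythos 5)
Your proposed mechanism for the last sum-rate bound \eqref{ur123-2} has a genuine gap. You suggest ``deriving two asymmetric sum bounds that are images of each other under exchanging the roles of the descriptions entering the middle term, and then averaging them.'' But averaging two \emph{valid} inequalities with the same left-hand side $R_1+R_2+R_3$ can never produce anything new: if $R_1+R_2+R_3\geq B_1$ and $R_1+R_2+R_3\geq B_2$ both hold, then $R_1+R_2+R_3\geq \max(B_1,B_2)\geq \tfrac{1}{2}(B_1+B_2)$, so the average is dominated by the stronger of the two. In the regime $h_3<h_4$ of ordering $\L_1$, however, \eqref{ur123-2} is a facet that strictly dominates \eqref{ur123-1} and every bound obtainable from a single ``pair-then-triple'' chain-rule expansion, so it cannot arise this way. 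Concretely, the two integer-weight bounds whose average equals \eqref{ur123-2} would be $H_{\L(\Gamma_1)}+H_{\L(\Gamma_2)}+H_7$ and $H_{\L(\Gamma_1)}+H_{\min\{\L(\Gamma_1,\Gamma_2),\L(\Gamma_1,\Gamma_3),\L(\Gamma_2,\Gamma_3)\}}+H_7$; for $\L_1$ the first ($H_1+H_2+H_7$) is valid, but the second ($H_1+H_4+H_7$) is \emph{false}: the paper's achievable corner point $Z_7$ with rates $\bigl(H_1+\tfrac{h_3+h_4}{2},\,H_2+\tfrac{h_3+h_4}{2},\,H_2+\tfrac{h_3+h_4}{2}+h_5+h_6+h_7\bigr)$ has sum falling short of $H_1+H_4+H_7$ by $\tfrac{1}{2}(h_3+h_4)$. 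The half-integer weights must instead be produced \emph{inside a single derivation}: write $H(\Gamma_1|\cdot)+H(\Gamma_2|\cdot)+H(\Gamma_3|\cdot)$ as one half of the sum over all three pairs, peel each pair to an appropriate level, and then invoke the conditional version of Han's inequality \cite{CoverThomas}, $\tfrac{1}{2}\bigl[H(\Gamma_1,\Gamma_2|W)+H(\Gamma_1,\Gamma_3|W)+H(\Gamma_2,\Gamma_3|W)\bigr]\geq H(\Gamma_1,\Gamma_2,\Gamma_3|W)$, with a case split on $\L(\Gamma_3)\gtrless\L(\Gamma_1,\Gamma_2)$ deciding the conditioning level. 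Without this subadditivity step (which you never name), \eqref{ur123-2} is out of reach, and it is precisely the step your proposal flags as the obstacle but does not supply.

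The rest of the proposal is essentially sound. Your converse for \eqref{ur12} via the overlap identity $H(F_i)+H(F_j)=H(F_i,F_j)+I(F_i;F_j)$, with $I(F_i;F_j)\geq nH_{\min\{\L(\Gamma_i),\L(\Gamma_j)\}}-o(n)$ extracted from the two single-description reconstructions, is correct and is a genuinely different route from the paper, which instead peels conditional entropies level by level using the Fano-based Lemma~\ref{lm:fano}; applying your identity a second time to the ``super-descriptions'' $(F_i,F_j)$ and $(F_i,F_k)$ does yield \eqref{ur1123}, and your expansion for \eqref{ur123-1} matches the paper's. On achievability, your outline (pre-compressed layers, linear mixing, corner points plus time-sharing, regime splitting) is exactly the paper's strategy, but be aware that this is where the bulk of the paper's work lies: enumerating the corner points in each of the three regimes of each ordering and exhibiting explicit concatenation/partition/\texttt{xor} assignments that meet them, which your plan defers rather than performs.
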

 
In the following corollary, we specialize the bounds for  the specific ordering $\L_1$.

\begin{corollary}
For the ordering level $\L_1$, the admissible rate region of the three-description A-MLD problem is given by the set of all rate triples $(R_1,R_2,R_3)$ which satisfy
 {\makeatletter \newcounter{tempcnta}
  \setcounter{tempcnta}{\the\c@equation}
  \def\theequation{$Q$\@arabic\c@equation}\makeatother
  \setcounter{equation}{0}
\begin{align}
R_1 &\geq H(V_1),\label{r1}\\
R_2 &\geq H(V_1)+H(V_2),\label{r2}\\
R_3 &\geq H(V_1)+H(V_2)+H(V_3),\label{r3}\\
R_1+R_2 &\geq 2H(V_1)+H(V_2)+H(V_3) +H(V_4),\label{r12}\\
R_1+R_3 &\geq 2H(V_1)+H(V_2)+H(V_3) +H(V_4)+H(V_5),\label{r13}\\
R_2+R_3 &\geq 2H(V_1)+2H(V_2)+H(V_3) +H(V_4)+H(V_5)+H(V_6),\label{r23}\\
2R_1+R_2+R_3 &\geq 4H(V_1)+2H(V_2)+2H(V_3) +2H(V_4)+H(V_5)+H(V_6)+H(V_7),\label{r1123}\\
R_1+2R_2+R_3 &\geq 4H(V_1)+3H(V_2)+2H(V_3) +2H(V_4)+H(V_5)+H(V_6)+H(V_7),\label{r1223}\\
R_1+R_2+2R_3 &\geq 4H(V_1)+3H(V_2)+2H(V_3) +2H(V_4)+2H(V_5)+H(V_6)+H(V_7),\label{r1233}\\
%R_1+R_2+2R_3 &\geq 4H(V_1)+3H(V_2)+3H(V_3) +H(V_4)+H(V_5)+H(V_6)+H(V_7)\label{r1233-1}\\
R_1+R_2+R_3 &\geq   3H(V_1)+2H(V_2)+2H(V_3) +H(V_4)+H(V_5)+H(V_6)+H(V_7),\label{r123-1}\\
R_1+R_2+R_3 &\geq   3H(V_1)+2H(V_2)+\frac{3}{2}H(V_3) +\frac{3}{2}H(V_4)+H(V_5)+H(V_6)+H(V_7).
\label{r123-2}
\end{align}
\makeatletter
\setcounter{equation}{\the\c@tempcnt}\makeatother
}
\end{corollary}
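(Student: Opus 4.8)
The plan is to obtain the corollary by direct substitution of the ordering $\L_1$ into the general bounds (\ref{ur1})--(\ref{ur123-2}) of Theorem~\ref{thm:mld}, followed by elementary simplification. Since Theorem~\ref{thm:mld} already characterizes $\R\MLD^{\L}(\mathbf{H})$ for an arbitrary ordering $\L$, no new achievability or converse argument is required: the content of the corollary is purely the evaluation of that template at $\L=\L_1$, and the work is entirely bookkeeping.

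First I would read off the values of $\L_1$ from its definition, namely $\L_1(\Gamma_i)=i$ for the singletons together with $\L_1(\Gamma_1,\Gamma_2)=4$, $\L_1(\Gamma_1,\Gamma_3)=5$, $\L_1(\Gamma_2,\Gamma_3)=6$, and $\L_1(\Gamma_1,\Gamma_2,\Gamma_3)=7$, and then resolve every minimum appearing on the right-hand sides. For example, in (\ref{ur123-1}) the argument $\min\{\L_1(\Gamma_1,\Gamma_2),\L_1(\Gamma_3)\}=\min\{4,3\}=3$, while in (\ref{ur123-2}) the triple minimum is $\min\{4,5,6\}=4$; the monotonicity of $\L_1$ makes each such evaluation immediate. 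Next I would expand each term through the telescoping identity $H_j=\sum_{i=1}^{j}H(V_i)$ and collect the coefficient of every $H(V_i)$.

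Carrying this out term by term, (\ref{ur1}) produces (\ref{r1})--(\ref{r3}); the three pairwise choices in (\ref{ur12}) produce (\ref{r12})--(\ref{r23}); the constraint (\ref{ur123-1}) becomes $H_1+H_3+H_7$, i.e.\ (\ref{r123-1}); and (\ref{ur123-2}) becomes $H_1+\frac{1}{2}H_2+\frac{1}{2}H_4+H_7$, i.e.\ (\ref{r123-2}). The one place that deserves care is (\ref{ur1123}): because its right-hand side is symmetric in the pair $(\Gamma_j,\Gamma_k)$ while the doubled index $i$ is distinguished, this single line of Theorem~\ref{thm:mld} unfolds into the \emph{three} distinct inequalities (\ref{r1123})--(\ref{r1233}), one for each choice $i\in\{1,2,3\}$. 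For instance, taking $i=1$ gives $H_1+H_1+H_{\min\{4,5\}}+H_7=2H_1+H_4+H_7$, whose $H(V_1),\dots,H(V_7)$ coefficients are $(4,2,2,2,1,1,1)$, matching (\ref{r1123}) exactly; the cases $i=2$ and $i=3$ give (\ref{r1223}) and (\ref{r1233}).

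The main (and essentially the only) obstacle is the accounting: I must confirm that the eleven inequalities (\ref{r1})--(\ref{r123-2}) are in exact one-to-one correspondence with the instances generated by the template --- three from (\ref{ur1}), three from (\ref{ur12}), three from (\ref{ur1123}), and one each from (\ref{ur123-1}) and (\ref{ur123-2}), for a total of eleven --- and that no minimum has been mis-evaluated and no coefficient mis-counted. Once this correspondence is verified, the corollary is an immediate consequence of Theorem~\ref{thm:mld}.
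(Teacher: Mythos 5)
Your proposal is correct and matches the paper's treatment: the paper offers no separate proof of this corollary precisely because it is the immediate evaluation of the template in Theorem~\ref{thm:mld} at $\L=\L_1$, which is exactly what you carry out, and your bookkeeping (three instances each of (\ref{ur1}), (\ref{ur12}), (\ref{ur1123}), one each of (\ref{ur123-1}) and (\ref{ur123-2}), with $\min\{4,3\}=3$ and $\min\{4,5,6\}=4$) is accurate. In particular, your evaluations $2H_1+H_4+H_7$, $H_1+H_2+H_4+H_7$, and $H_1+H_2+H_5+H_7$ for $i=1,2,3$ in (\ref{ur1123}) reproduce (\ref{r1123})--(\ref{r1233}) exactly, so nothing is missing.
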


\subsection{Approximate Rate Region Characterization of Gaussian Asymmetric $3$-Description Coding}
In the following theorems, we establish outer and inner bounds for the rate region of the Gaussian asymmetric multiple  descriptions coding. 
\begin{theorem}
For a given distortion vector $\D=(D_{\Gamma_1},\dots,D_{\Gamma_1\Gamma_2\Gamma_3})$, denote by $\uR\MD(\D)$ the set of all rate triples $(R_1,R_2,R_3)$ satisfying
{\makeatletter
\newcounter{tempcntd}
\setcounter{tempcntd}{\the\c@equation}
\def\theequation{$\mathcal{O}-$\@arabic\c@equation}\makeatother
\setcounter{equation}{0}
\begin{align}
R_i &\geq  \frac{1}{2} \log \frac{1}{D_{\Gamma_i}}, \qquad i=1,2,3 \label{uR1}\\
R_i+R_j & \geq \min\left(\frac{1}{2} \log \frac{1}{D_{\Gamma_i}}, \frac{1}{2} \log \frac{1}{D_{\Gamma_j}}\right)  +\frac{1}{2}\log\frac{1}{D_{\Gamma_i \Gamma_j} }-1, \qquad i\neq j \label{uR12}\\
2R_i+R_j +R_k & \geq \min\left(\frac{1}{2} \log \frac{1}{D_{\Gamma_i}}, \frac{1}{2} \log \frac{1}{D_{\Gamma_j}}\right)
+ \min\left(\frac{1}{2} \log \frac{1}{D_{\Gamma_i}}, \frac{1}{2} \log \frac{1}{D_{\Gamma_k}}\right)\nonumber\\
&\phantom{=}+\min\left(\frac{1}{2} \log \frac{1}{D_{\Gamma_i \Gamma_j}}, \frac{1}{2} \log \frac{1}{D_{\Gamma_i \Gamma_k}}\right)
+\frac{1}{2} \log \frac{1}{D_{\Gamma_i \Gamma_j \Gamma_k}}-3,\qquad i\neq j\neq k \label{uR1123} \\
%R_1+R_2 +2R_3 & \geq \frac{1}{2} \log \frac{1}{D_{\Gamma_1} D_{\Gamma_3}} +  \frac{1}{2} \log \frac{1}{D_{\L^{-1}(4)}}+ \frac{1}{2} \log \frac{1}{D_{\Gamma_i \Gamma_j \Gamma_k}}-\frac{5}{2}\label{uR1233}\\    
R_1+R_2 +R_3 &\geq  \frac{1}{2} \log \frac{1}{D_{\Gamma_1} }  + \min\left(\frac{1}{2} \log\frac{1}{ D_{\Gamma_1 \Gamma_2}}, \frac{1}{2} \log\frac{1}{ D_{\Gamma_3}} \right)    \nonumber\\
&\phantom{=}  + \frac{1}{2} \log \frac{1}{D_{\Gamma_1 \Gamma_2 \Gamma_3} } -2, \label{uR123-2}\\
R_1+R_2 +R_3 &\geq \frac{1}{4} \log \frac{1}{D_{\Gamma_1}^2 D_{\Gamma_2}}  +  \min\left(\frac{1}{2}\log\frac{1}{ D_{\Gamma_1 \Gamma_2}}, \frac{1}{2}\log\frac{1}{ D_{\Gamma_1 \Gamma_3}}, \frac{1}{2}\log\frac{1}{ D_{\Gamma_2 \Gamma_3}}\right) \nonumber\\
&\phantom{=} + \frac{1}{2} \log \frac{1}{D_{\Gamma_1 \Gamma_2 \Gamma_3} }   -\frac{9}{2}.\label{uR123-1}
\end{align}
\makeatletter
\setcounter{equation}{\the\c@tempcnt}\makeatother
}
Then any admissible rate triple belongs to $\uR\MD(\D)$, \emph{i.e.,} $\R\MD(\D)\subseteq\uR\MD(\D)$. 
\label{thm:md-outer}
\end{theorem}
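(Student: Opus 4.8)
The plan is to establish this converse as the Gaussian lossy analogue of the A-MLD converse behind Theorem \ref{thm:mld}: I reuse \emph{exactly} the linear combinations of elementary entropy identities that prove the A-MLD inequalities, but replace the source-entropy terms $H_{\L(\S)}$ by the Gaussian rate-distortion quantity $\frac12\log\frac{1}{D_\S}$ and pay a bounded per-symbol price at each combining step. By Lemma \ref{lm:modif} I may assume the distortions are monotone ($D_{\mathcal T}\le D_\S$ for $\S\subseteq\mathcal T$) and specialize to the sorted ordering $\L_1$; the remaining orderings follow by relabeling. Writing $W_i:=F_i(X^n)$ and $W_\S:=(W_j:\Gamma_j\in\S)$, admissibility gives $R_i\ge \frac1n H(W_i)-\e$. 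The lossless workhorse is a rate-distortion meta-lemma: since $\hat X^n_\S$ is a function of $W_\S$ with per-letter distortion at most $D_\S+\e$,
$$\frac1n H(W_\S)\ \ge\ \frac1n I(X^n;W_\S)\ =\ \frac1n\big(h(X^n)-h(X^n\mid W_\S)\big)\ \ge\ \tfrac12\log\tfrac{1}{D_\S}-\e',$$
where the last step uses the conditional maximum-entropy bound $h(X^n\mid W_\S)\le\frac n2\log\!\big(2\pi e\,D_\S\big)$ and concavity of $\log$. This step is lossless, so the additive constants in (\ref{uR12})--(\ref{uR123-1}) must originate entirely in the second ingredient below.

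Bound (\ref{uR1}) is immediate from the meta-lemma. The extra ingredient for every higher bound is a lower bound on the \emph{common information} shared by two groups of descriptions: for disjoint (or overlapping) groups $\mathcal A,\mathcal B$ with reconstruction distortions $D_{\mathcal A},D_{\mathcal B}$,
$$\frac1n I(W_{\mathcal A};W_{\mathcal B})\ \ge\ \min\!\Big(\tfrac12\log\tfrac{1}{D_{\mathcal A}},\ \tfrac12\log\tfrac{1}{D_{\mathcal B}}\Big)-1,$$
expressing that two groups each resolving $X^n$ to within the worse of the two distortions must overlap by that many bits, up to one bit of slack. The pairwise bound then follows from $H(W_i)+H(W_j)=H(W_i,W_j)+I(W_i;W_j)$, giving $\frac12\log\frac{1}{D_{\Gamma_i\Gamma_j}}+\min(\cdots)-1$, which is exactly (\ref{uR12}). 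The bound (\ref{uR1123}) on $2R_i+R_j+R_k$ decomposes via $2H(W_i)+H(W_j)+H(W_k)=H(W_i,W_j,W_k)+I(W_i;W_j)+I(W_i;W_k)+I(W_i,W_j;W_i,W_k)$ into one lossless triple term plus three overlaps (two between singletons and one between the pairs $\{W_i,W_j\}$ and $\{W_i,W_k\}$), accounting for the constant $-3$. The sum-rate bound (\ref{uR123-2}) uses $H(W_1)+H(W_2)+H(W_3)=H(W_1,W_2,W_3)+I(W_1;W_2)+I(W_1,W_2;W_3)$, i.e.\ one triple term and two overlaps, giving $-2$. Finally, (\ref{uR123-1}) is the Gaussian counterpart of the symmetrized A-MLD bound (\ref{ur123-2}) and is obtained by averaging such chains over the choice of distinguished description adapted to the sorted order, which is what produces the fractional coefficient $\frac14\log\frac{1}{D_{\Gamma_1}^2 D_{\Gamma_2}}$ and the accumulated loss $-\frac92$.

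The main obstacle is the common-information lemma itself. The naive route $I(W_{\mathcal A};W_{\mathcal B})\ge I(\hat X^n_{\mathcal A};\hat X^n_{\mathcal B})=H(\hat X^n_{\mathcal A})-H(\hat X^n_{\mathcal A}\mid\hat X^n_{\mathcal B})$ stalls because the reconstructions are discrete-valued: although $\frac1n\E\|\hat X^n_{\mathcal A}-\hat X^n_{\mathcal B}\|^2=O(\max(D_{\mathcal A},D_{\mathcal B}))$ by the triangle inequality on the errors, a bare discrete entropy is not controlled by its variance, so $H(\hat X^n_{\mathcal A}-\hat X^n_{\mathcal B})$ cannot be bounded this way; moreover the clean identity $I(W_{\mathcal A};W_{\mathcal B})\ge I(X^n;W_{\mathcal A})+I(X^n;W_{\mathcal B})-I(X^n;W_{\mathcal A},W_{\mathcal B})$ is useless here, since the $I(X^n;W_{\mathcal A},W_{\mathcal B})$ term cancels the very joint-distortion gain we are trying to exploit. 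The resolution, adapting the sum-rate technique of \cite{TianMohDig08}, is to compare against a matched Gaussian test channel at the distortion level $\max(D_{\mathcal A},D_{\mathcal B})$ and apply the entropy power inequality to the (conditional) reconstruction errors: the EPI lower-bounds the overlap by the common information of the Gaussian-optimal problem minus the worst-case gap between an arbitrary code and the Gaussian optimum, and that gap is at most one bit per sample per overlap. Establishing this EPI comparison rigorously, and then verifying that the per-overlap losses add up to precisely $1,3,2$ and $\tfrac92$ across the five families of inequalities, is where the real work lies; once it is in place, letting $\e\to0$ completes the proof that every admissible triple lies in $\uR\MD(\D)$.
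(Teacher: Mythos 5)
Your reduction of \eqref{uR1}--\eqref{uR123-2} to a pairwise ``common-information'' lemma is structurally sound, and the lemma you postulate is in fact true with exactly the constant you need: writing $I(W_{\mathcal{A}};W_{\mathcal{B}})\ge I(W_{\mathcal{A}};Y^n)+I(W_{\mathcal{B}};Y^n)-I(W_{\mathcal{A}},W_{\mathcal{B}};Y^n)$ for a single Gaussian helper $Y^n=X^n+Z^n$ with noise variance $d=\max(D_{\mathcal{A}},D_{\mathcal{B}})$, bounding the first two terms below by Lemma~\ref{lemma1} and the last above by $I(X^n;Y^n)=\frac n2\log\frac{1+d}{d}$, gives $\frac1n I(W_{\mathcal{A}};W_{\mathcal{B}})\ge\frac12\log\frac{(1+d)d}{(D_{\mathcal{A}}+d)(D_{\mathcal{B}}+d)}\ge\min\bigl(\frac12\log\frac1{D_{\mathcal{A}}},\frac12\log\frac1{D_{\mathcal{B}}}\bigr)-1$. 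With that in hand, your entropy decompositions do yield \eqref{uR12}, \eqref{uR1123} and \eqref{uR123-2} with the constants $1,3,2$. Note, however, that this repair is precisely the paper's machinery: the paper proves a parametric outer bound (Theorem~\ref{thm:md-outer-parametric}) using a Markov ladder of noisy sources $Y_i^n=X^n+Z_i^n$ and Lemmas~\ref{lemma1}--\ref{lemma2}, mirroring the A-MLD converse step by step, and then sets $d_i=D_{\L^{-1}(i)}$. As submitted, your proposal leaves its central lemma unproven --- you state yourself that the EPI comparison ``is where the real work lies'' --- so even the bounds your framework structurally covers are not actually established.

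The decisive gap is \eqref{uR123-1}, and it is not merely unfinished: the route you sketch cannot produce it. Write $\rho_{\S}=\frac12\log\frac1{D_{\S}}$. Every chain decomposition in your framework reaches a pair of descriptions only through a term $I(W_i,W_j;W_k)\ge\min(\rho_{\Gamma_i\Gamma_j},\rho_{\Gamma_k})-1$, and under the sorted ordering $\L_1$ (all singleton distortions exceed all pair distortions) each such minimum collapses to a singleton rate. Averaging the three chains over the distinguished description therefore gives only $R_1+R_2+R_3\ge \rho_{\Gamma_1\Gamma_2\Gamma_3}+\rho_{\Gamma_1}+\tfrac23\rho_{\Gamma_2}+\tfrac13\rho_{\Gamma_3}-2$, in which no pair distortion appears at all; since $\min(\rho_{\Gamma_1\Gamma_2},\rho_{\Gamma_1\Gamma_3},\rho_{\Gamma_2\Gamma_3})$ can be arbitrarily large while $\rho_{\Gamma_1},\rho_{\Gamma_2},\rho_{\Gamma_3}$ stay bounded, this can never imply \eqref{uR123-1}. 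The symmetrized alternatives fare no better: unconditional Han's inequality, or the split $H(W_1,W_2)+H(W_1,W_3)=H(W_1,W_2,W_3)+I(W_1,W_2;W_1,W_3)$, recover a pair-level term only at the cost of halving the coefficient of $\rho_{\Gamma_1\Gamma_2\Gamma_3}$, again giving a bound weaker than \eqref{uR123-1} by an unbounded amount. The paper's proof of the corresponding parametric inequality \eqref{upR123-1} needs, exactly at this point, Han's inequality applied \emph{conditionally} on the noisy source $Y_\beta^n$ at the minimum pair level: conditioning lets each description contribute its low-resolution information $I(W_i;Y_{\L(\Gamma_i)}^n)$ and still leave the residual $H(W_i\mid Y^n)$ available for the pair- and triple-level combining steps, which is what places both $\min(\rho_{\Gamma_1\Gamma_2},\rho_{\Gamma_1\Gamma_3},\rho_{\Gamma_2\Gamma_3})$ and $\rho_{\Gamma_1\Gamma_2\Gamma_3}$ in the bound at full weight. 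Your framework, built from unconditional entropy identities and a single pairwise overlap lemma, has no substitute for this mechanism, so \eqref{uR123-1} remains out of reach.
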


The bound stated in this theorem is a consequence of a more general parametric outer bound $\uR\MD^p(\D,\mathbf{d})$, defined in Theorem~\ref{thm:md-outer-parametric}. However, the current form is more convenient for comparison between the inner and outer bounds. This region is given in the following corollary for the specific ordering $\L_1$.
\begin{corollary}
Any admissible rate triple for a three-description A-MD with $\L_1$ ordering satisfies 
{\makeatletter
\newcounter{tempcnte}
\setcounter{tempcnte}{\the\c@equation}
\def\theequation{$\mathcal{O}'-$\@arabic\c@equation}\makeatother
\setcounter{equation}{0}
\begin{align}
R_1 &\geq  \frac{1}{2} \log \frac{1}{D_{\Gamma_1}},\\
R_2 &\geq  \frac{1}{2} \log \frac{1}{D_{\Gamma_2}},\\
R_3 &\geq  \frac{1}{2} \log \frac{1}{D_{\Gamma_3}},\\
R_1+R_2 & \geq \frac{1}{2} \log \frac{1}{D_{\Gamma_1} D_{\Gamma_1 \Gamma_2}}-1,\\
R_1+R_3 & \geq \frac{1}{2} \log \frac{1}{D_{\Gamma_1} D_{\Gamma_1 \Gamma_3}}-1,\\
R_2+R_3 & \geq \frac{1}{2} \log \frac{1}{D_{\Gamma_2} D_{\Gamma_2 \Gamma_3}}-1,\\
2R_1+R_2 +R_3 & \geq \frac{1}{2} \log \frac{1}{D_{\Gamma_1}^2 D_{\Gamma_1 \Gamma_2} D_{\Gamma_1 \Gamma_2 \Gamma_3}}-3,\\
R_1+2R_2 +R_3 & \geq \frac{1}{2} \log \frac{1}{D_{\Gamma_1} D_{\Gamma_2} D_{\Gamma_1 \Gamma_2} D_{\Gamma_1 \Gamma_2 \Gamma_3}}-3,\\
R_1+R_2 +2R_3 & \geq \frac{1}{2} \log \frac{1}{D_{\Gamma_1} D_{\Gamma_2} D_{\Gamma_1 \Gamma_3}  D_{\Gamma_1 \Gamma_2 \Gamma_3}}-3,\\
%R_1+R_2+2R_3 & \geq  \frac{1}{2}\log \frac{1}{D_{\Gamma_1} D_{\Gamma_3}^2 D_{\Gamma_1 \Gamma_2 \Gamma_3}}\}-\frac{5}{2}\\
R_1+R_2 +R_3 & \geq \frac{1}{2} \log \frac{1}{D_{\Gamma_1} D_{\Gamma_3} D_{\Gamma_1 \Gamma_2 \Gamma_3}}-\frac{9}{2},\\
R_1+R_2+R_3 &\geq
\frac{1}{4} \log\frac{1}{D_{\Gamma_1}^2 D_{\Gamma_2} D_{\Gamma_1 \Gamma_2} D_{\Gamma_1 \Gamma_2 \Gamma_3}^2}-2.
\end{align}
\makeatletter
\setcounter{equation}{\the\c@tempcnt}\makeatother
}
\label{cor:md-outer}
\end{corollary}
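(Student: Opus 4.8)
The plan is to derive each inequality of the corollary by specializing the corresponding general bound of Theorem~\ref{thm:md-outer} to the ordering $\L_1$ and then resolving every $\min$ operator using the distortion ordering that $\L_1$ encodes. Recall that $\L_1$ is the ordering induced by the sorted constraints $D_{\Gamma_1} \geq D_{\Gamma_2} \geq D_{\Gamma_3} \geq D_{\Gamma_1\Gamma_2} \geq D_{\Gamma_1\Gamma_3} \geq D_{\Gamma_2\Gamma_3} \geq D_{\Gamma_1\Gamma_2\Gamma_3}$. Since $t \mapsto \tfrac{1}{2}\log\tfrac{1}{t}$ is decreasing, the quantities $\tfrac{1}{2}\log\tfrac{1}{D_{\S}}$ are ordered in exactly the reverse, i.e.\ monotonically nondecreasing along the list $\Gamma_1,\Gamma_2,\Gamma_3,\Gamma_1\Gamma_2,\Gamma_1\Gamma_3,\Gamma_2\Gamma_3,\Gamma_1\Gamma_2\Gamma_3$. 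This single monotonicity fact is what collapses each $\min$ in Theorem~\ref{thm:md-outer} to a definite term, so the entire argument is a bookkeeping exercise once the comparisons are settled.

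First I would dispatch the three single-rate bounds: \eqref{uR1} with $i=1,2,3$ carries over verbatim and gives the first three inequalities. For the pairwise bounds \eqref{uR12}, the inner term $\min\!\left(\tfrac{1}{2}\log\tfrac{1}{D_{\Gamma_i}}, \tfrac{1}{2}\log\tfrac{1}{D_{\Gamma_j}}\right)$ selects the smaller-indexed singleton, since $D_{\Gamma_1}\geq D_{\Gamma_2}\geq D_{\Gamma_3}$; thus the pairs $\{1,2\}$ and $\{1,3\}$ both pick up $\tfrac{1}{2}\log\tfrac{1}{D_{\Gamma_1}}$, while $\{2,3\}$ picks up $\tfrac{1}{2}\log\tfrac{1}{D_{\Gamma_2}}$. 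Combining each with the appropriate $\tfrac{1}{2}\log\tfrac{1}{D_{\Gamma_i\Gamma_j}}$ term, merging the two logarithms, and retaining the additive constant $-1$ yields the three pairwise inequalities.

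Next I would treat the three weighted bounds \eqref{uR1123}. Here the two singleton $\min$'s again reduce to the smallest singletons involved, and the decisive step is the pair-$\min$ $\min\!\left(\tfrac{1}{2}\log\tfrac{1}{D_{\Gamma_i\Gamma_j}}, \tfrac{1}{2}\log\tfrac{1}{D_{\Gamma_i\Gamma_k}}\right)$: using $D_{\Gamma_1\Gamma_2}\geq D_{\Gamma_1\Gamma_3}\geq D_{\Gamma_2\Gamma_3}$, this selects $\Gamma_1\Gamma_2$ for $i=1$, $\Gamma_1\Gamma_2$ for $i=2$, and $\Gamma_1\Gamma_3$ for $i=3$. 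Substituting and consolidating the logarithms (keeping the constant $-3$) produces the bounds on $2R_1+R_2+R_3$, $R_1+2R_2+R_3$, and $R_1+R_2+2R_3$.

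Finally, the two mixed triple-sum bounds \eqref{uR123-2} and \eqref{uR123-1} require the only genuinely cross-level comparisons, and this is the step I expect to be the main obstacle. In \eqref{uR123-2} the term $\min\!\left(\tfrac{1}{2}\log\tfrac{1}{D_{\Gamma_1\Gamma_2}}, \tfrac{1}{2}\log\tfrac{1}{D_{\Gamma_3}}\right)$ pits the singleton $D_{\Gamma_3}$ against the pair $D_{\Gamma_1\Gamma_2}$; under $\L_1$ we have $D_{\Gamma_3}\geq D_{\Gamma_1\Gamma_2}$, so the $\min$ equals $\tfrac{1}{2}\log\tfrac{1}{D_{\Gamma_3}}$ and the combined log becomes $\tfrac{1}{2}\log\tfrac{1}{D_{\Gamma_1}D_{\Gamma_3}D_{\Gamma_1\Gamma_2\Gamma_3}}$. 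In \eqref{uR123-1} the three-way pair-$\min$ is governed by $D_{\Gamma_1\Gamma_2}\geq D_{\Gamma_1\Gamma_3}\geq D_{\Gamma_2\Gamma_3}$ and hence selects $\tfrac{1}{2}\log\tfrac{1}{D_{\Gamma_1\Gamma_2}}$, which merges with $\tfrac{1}{4}\log\tfrac{1}{D_{\Gamma_1}^2 D_{\Gamma_2}}$ and the joint term to give the last inequality. Deciding, for a genuinely asymmetric profile, whether a pair distortion or a competing singleton distortion is the binding argument of these $\min$'s is precisely where the ordering does real work: a different entry of Table~\ref{tbl:ordering} would resolve them differently and yield a structurally distinct region, so I would verify each such comparison explicitly against the $\L_1$ chain rather than rely on the generic template.
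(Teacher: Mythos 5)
Your overall plan---specialize Theorem~\ref{thm:md-outer} to $\L_1$ and resolve each $\min$ using the distortion chain---is indeed the route the paper intends (it gives no separate proof of Corollary~\ref{cor:md-outer}), and your bookkeeping is correct for the first nine inequalities: the single-rate, pairwise, and weighted $2R_i+R_j+R_k$ bounds all come out exactly as stated. The gap is in the last sum-rate inequality. Resolving the three-way $\min$ in \eqref{uR123-1} selects $\frac{1}{2}\log\frac{1}{D_{\Gamma_1\Gamma_2}}=\frac{1}{4}\log\frac{1}{D_{\Gamma_1\Gamma_2}^{2}}$, so what your merging actually produces is
\begin{align*}
R_1+R_2+R_3 \;\geq\; \frac{1}{4}\log\frac{1}{D_{\Gamma_1}^{2} D_{\Gamma_2}D_{\Gamma_1\Gamma_2}^{2}D_{\Gamma_1\Gamma_2\Gamma_3}^{2}}\;-\;\frac{9}{2},
\end{align*}
with $D_{\Gamma_1\Gamma_2}$ \emph{squared} and constant $-\frac{9}{2}$, whereas the corollary asserts $\frac{1}{4}\log\frac{1}{D_{\Gamma_1}^{2} D_{\Gamma_2}D_{\Gamma_1\Gamma_2}D_{\Gamma_1\Gamma_2\Gamma_3}^{2}}-2$. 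Neither bound implies the other: passing from yours to the stated one would require $\frac{1}{4}\log\frac{1}{D_{\Gamma_1\Gamma_2}}\geq\frac{5}{2}$, i.e.\ $D_{\Gamma_1\Gamma_2}\leq 2^{-10}$, which fails in general. (The tenth inequality has a related but harmless mismatch: specializing \eqref{uR123-2} yields the constant $-2$, which is stronger than, and hence implies, the stated $-\frac{9}{2}$; evidently the two constants in the corollary are transposed relative to the theorem.)

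To prove the eleventh inequality as stated you must bypass Theorem~\ref{thm:md-outer} and return to the parametric bound of Theorem~\ref{thm:md-outer-parametric} with $d_i=D_{\L^{-1}(i)}$, which is exactly how the paper itself proves Theorem~\ref{thm:md-outer}. Under $\L_1$ one has $\alpha=\L(\Gamma_1,\Gamma_2)$, so $d_\alpha=D_{\Gamma_1\Gamma_2}$; the two cross terms in \eqref{upR123-1} involving $D_{\Gamma_1\Gamma_3}$ and $D_{\Gamma_2\Gamma_3}$ are nonnegative, since $(1+d_\alpha)(D+d_{\L(\Gamma_3)})-(1+d_{\L(\Gamma_3)})(D+d_\alpha)=(d_{\L(\Gamma_3)}-d_\alpha)(1-D)\geq 0$, and may be dropped. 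Grouping the remaining factors,
\begin{align*}
\frac{1}{2}\log\frac{1+D_{\Gamma_2}}{2D_{\Gamma_2}}+\frac{1}{4}\log\frac{(1+D_{\Gamma_1\Gamma_2})(D_{\Gamma_1\Gamma_2}+D_{\Gamma_2})}{2(1+D_{\Gamma_2})D_{\Gamma_1\Gamma_2}}
&\;\geq\;\frac{1}{4}\log\frac{1}{D_{\Gamma_2}D_{\Gamma_1\Gamma_2}}-\frac{3}{4},\\
\frac{1}{2}\log\frac{1+D_{\Gamma_3}}{2D_{\Gamma_3}}+\frac{1}{2}\log\frac{D_{\Gamma_1\Gamma_2\Gamma_3}+D_{\Gamma_3}}{(1+D_{\Gamma_3})D_{\Gamma_1\Gamma_2\Gamma_3}}
&\;\geq\;\frac{1}{2}\log\frac{1}{D_{\Gamma_1\Gamma_2\Gamma_3}}-\frac{1}{2},
\end{align*}
and $\frac{1}{2}\log\frac{1+D_{\Gamma_1}}{2D_{\Gamma_1}}\geq\frac{1}{2}\log\frac{1}{D_{\Gamma_1}}-\frac{1}{2}$; summing gives $R_1+R_2+R_3\geq\frac{1}{4}\log\frac{1}{D_{\Gamma_1}^{2}D_{\Gamma_2}D_{\Gamma_1\Gamma_2}D_{\Gamma_1\Gamma_2\Gamma_3}^{2}}-\frac{7}{4}$, which implies the stated inequality because $-\frac{7}{4}\geq-2$. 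In short, the corollary's last inequality is true but is \emph{not} a specialization of Theorem~\ref{thm:md-outer} as printed: \eqref{uR123-1} carries the pair term with coefficient $1$, whereas the A-MLD bound \eqref{ur123-2}, the inner bound, and the corollary all carry it with coefficient $\frac{1}{2}$. Your proof needs this extra derivation from the parametric theorem to close the gap.
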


Theorem~\ref{thm:md-inner} gives an inner bound for the admissible rate region of the three-description A-MD problem.

\begin{theorem}

For a given distortion vector $\D=(D_{\Gamma_1},\dots,D_{\Gamma_1 \Gamma_2 \Gamma_3})$, let $\oR\MD (\D)$ be the set of all rate triples $(R_1,R_2,R_3)$ satisfying

{\makeatletter \newcounter{tempcntb}
  \setcounter{tempcntb}{\the\c@equation}
  \def\theequation{$\mathcal{I}-$\@arabic\c@equation}\makeatother
  \setcounter{equation}{0}
\begin{align}
R_i &\geq  \frac{1}{2} \log \frac{1}{D_{\Gamma_i}}, \qquad i=1,2,3 \label{oR1}\\
%R_2 &\geq  \frac{1}{2} \log \frac{1}{D_2}\\
%R_3 &\geq  \frac{1}{2} \log \frac{1}{D_3}\\
R_i+R_j & \geq \min\left(\frac{1}{2} \log \frac{1}{D_{\Gamma_i}}, \frac{1}{2} \log \frac{1}{D_{\Gamma_j}}\right)  +\frac{1}{2}\log\frac{1}{D_{\Gamma_i \Gamma_j} }, \qquad i\neq j \label{oR12}\\
%R_1+R_3 & \geq \frac{1}{2} \log \frac{1}{D_1 D_5}\\
%R_2+R_3 & \geq \frac{1}{2} \log \frac{1}{D_2 D_6}\\
2R_i+R_j +R_k & \geq \min\left(\frac{1}{2} \log \frac{1}{D_{\Gamma_i}}, \frac{1}{2} \log \frac{1}{D_{\Gamma_j}}\right)
+ \min\left(\frac{1}{2} \log \frac{1}{D_{\Gamma_i}}, \frac{1}{2} \log \frac{1}{D_{\Gamma_k}}\right)\nonumber\\
&\phantom{=}+\min\left(\frac{1}{2} \log \frac{1}{D_{\Gamma_i \Gamma_j}}, \frac{1}{2} \log \frac{1}{D_{\Gamma_i \Gamma_k}}\right)
+\frac{1}{2} \log \frac{1}{D_{\Gamma_i \Gamma_j \Gamma_k}},\qquad i\neq j \neq k \label{oR1123} \\
%R_1+R_2 +2R_3 & \geq \frac{1}{2} \log \frac{1}{D_{\Gamma_1} D_{\Gamma_3}} +  \frac{1}{2} \log \frac{1}{D_{\L^{-1}(4)}}+ \frac{1}{2} \log \frac{1}{D_{\Gamma_i \Gamma_j \Gamma_k}}\label{oR1233}\\    
R_1+R_2 +R_3 &\geq  \frac{1}{2} \log \frac{1}{D_{\Gamma_1} } +  \min\left(\frac{1}{2}\log\frac{1}{ D_{\Gamma_1 \Gamma_2}}, \frac{1}{2}\log\frac{1}{ D_{\Gamma_3}} \right) + \frac{1}{2} \log \frac{1}{D_{\Gamma_1 \Gamma_2 \Gamma_3} },   \label{oR123-2}\\
R_1+R_2 +R_3 &\geq \frac{1}{4} \log \frac{1}{D_{\Gamma_1}^2 D_{\Gamma_2}} +   \min\left(\frac{1}{2}\log\frac{1}{ D_{\Gamma_1 \Gamma_2}} ,\frac{1}{2}\log\frac{1}{ D_{\Gamma_1 \Gamma_3}}, \frac{1}{2}\log\frac{1}{ D_{\Gamma_2 \Gamma_3}}\right)       \nonumber\\
&\phantom{=} +\frac{1}{2} \log \frac{1}{D_{\Gamma_1 \Gamma_2 \Gamma_3} }. \label{oR123-1}
\end{align}
\makeatletter
\setcounter{equation}{\the\c@tempcnt}\makeatother}
Then any rate triple $\bR\in\oR\MD$ is achievable, \emph{i.e,} $\oR\MD\subseteq\R\MD$. 
\label{thm:md-inner}
\end{theorem}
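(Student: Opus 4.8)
The plan is to establish achievability by instantiating the architecture advertised in the introduction: a Gaussian \emph{successive refinement} (SR) code generating a nested family of reconstruction layers, composed with the optimal A-MLD code of Theorem~\ref{thm:mld} applied to the \emph{increments} of these layers. The guiding principle is the dictionary $H_{\L(\S)} \leftrightarrow \tfrac12\log\tfrac{1}{D_{\S}}$: each inequality defining $\oR\MD(\D)$ is obtained from the corresponding A-MLD inequality in Theorem~\ref{thm:mld} under this substitution. Because Theorem~\ref{thm:mld} gives the \emph{exact} lossless region, the absence of any additive penalty in $\oR\MD(\D)$ reflects the idealized, penalty-free layering of the SR step; the gap to the outer bound of Theorem~\ref{thm:md-outer} is then exactly the overhead paid in passing from the lossless layers back to the continuous Gaussian source.

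First I would invoke Lemma~\ref{lm:modif} to assume the distortions are monotone along inclusion, and then sort them as $D_{(1)}\ge D_{(2)}\ge\cdots\ge D_{(7)}$ with $D_{(\L(\S))}=D_{\S}$, consistent with the ordering induced by $\L$. Setting $D_{(0)}=1$, I define layer rates $h_k=\tfrac12\log\frac{D_{(k-1)}}{D_{(k)}}$, so that $H_k\triangleq\sum_{i=1}^k h_i=\tfrac12\log\frac1{D_{(k)}}$. Since a unit-variance Gaussian source is successively refinable \cite{EquitzCover:91}, I would build, via nested Gaussian test channels, a chain of quantized descriptions whose first $k$ layers reconstruct $X^n$ within distortion $D_{(k)}+\varepsilon$ and whose $k$-th incremental layer is carried by an index of rate $h_k+\varepsilon$. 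The essential point is that these $2^K-1=7$ increments can be treated as (asymptotically) independent ``source streams'' $V_1,\dots,V_7$ of entropy rates $h_1,\dots,h_7$, exactly matching the data model of the A-MLD problem.

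Next I would apply the achievability half of Theorem~\ref{thm:mld} to these seven increment streams under the \emph{same} ordering $\L$. This yields three descriptions $\Gamma_1,\Gamma_2,\Gamma_3$ (formed by linear combinations of the layer indices, as in the optimal A-MLD scheme) such that any decoder $\S$ recovers the first $\L(\S)$ increments losslessly, and hence reconstructs $X^n$ within distortion $D_{(\L(\S))}=D_{\S}$. Substituting $H_{\L(\S)}=\tfrac12\log\frac1{D_{\S}}$ into the A-MLD rate expressions \eqref{ur1}--\eqref{ur123-2} produces precisely the inequalities \eqref{oR1}--\eqref{oR123-1} defining $\oR\MD(\D)$, so every $\bR\in\oR\MD(\D)$ is $\D$-admissible.

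The main obstacle is making this lossless reduction rigorous: the SR increments of a continuous Gaussian source are neither exactly independent nor literally finite-alphabet ``sources'' with well-defined entropy rates, so care is needed to port the A-MLD achievability---proved for genuine discrete i.i.d.\ streams---to the quantized layers. I expect to handle this by working with the quantization indices on the jointly typical set and showing that (i) the seven incremental indices are essentially independent and near-uniform to first order in the exponent, so that the linear A-MLD combining and its decoding guarantee carry over, and (ii) lossless recovery of the first $\L(\S)$ indices implies the claimed distortion via the standard SR distortion estimate. The remaining steps---aggregating per-layer rates into the three description rates and letting $\varepsilon\to0$---are routine.
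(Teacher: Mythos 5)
Your proposal matches the paper's proof essentially step for step: produce successive refinement layers at incremental rates $h_k=\frac{1}{2}\log\bigl(D_{\L^{-1}(k-1)}/D_{\L^{-1}(k)}\bigr)$ (with $D_{\L^{-1}(0)}=1$), then apply the achievability half of Theorem~\ref{thm:mld} to the layer bit streams under the same ordering $\L$, so that the substitution $H_k=\frac{1}{2}\log\bigl(1/D_{\L^{-1}(k)}\bigr)$ turns the A-MLD inequalities into exactly the inequalities defining $\oR\MD(\D)$. The only difference is in how the discrete-source technicality is dispatched: where you propose typicality arguments to treat the SR increments as independent streams, the paper simply uses fixed-length SR codes so the index blocks are block-wise i.i.d.\ binary strings, to which the A-MLD operations (concatenation and \texttt{xor}) apply directly.
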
 

The following corollary specifies the above theorem for the ordering level $\L_1$. 

\begin{corollary}
If the distortion constraints satisfy the ordering level $\L_1$, \emph{i.e.,}
\begin{eqnarray*}
D_{\Gamma_1} \geq D_{\Gamma_2} \geq  D_{\Gamma_3} \geq D_{\Gamma_1 \Gamma_2} \geq D_{\Gamma_1 \Gamma_3} \geq D_{\Gamma_2 \Gamma_3} \geq D_{\Gamma_1 \Gamma_2 \Gamma_3},
\end{eqnarray*}
then any rate triple $(R_1,R_2,R_3)$ satisfying 
{\makeatletter \newcounter{tempcntc}
  \setcounter{tempcntc}{\the\c@equation}
  \def\theequation{$\mathcal{I}'-$\@arabic\c@equation}\makeatother
  \setcounter{equation}{0}
\begin{align}
R_1 &\geq  \frac{1}{2} \log \frac{1}{D_{\Gamma_1}},\\
R_2 &\geq  \frac{1}{2} \log \frac{1}{D_{\Gamma_2}},\\
R_3 &\geq  \frac{1}{2} \log \frac{1}{D_{\Gamma_3}},\\
R_1+R_2 & \geq \frac{1}{2} \log \frac{1}{D_{\Gamma_1} D_{\Gamma_1 \Gamma_2}},\\
R_1+R_3 & \geq \frac{1}{2} \log \frac{1}{D_{\Gamma_1} D_{\Gamma_1 \Gamma_3}},\\
R_2+R_3 & \geq \frac{1}{2} \log \frac{1}{D_{\Gamma_2} D_{\Gamma_2 \Gamma_3}},\\
2R_1+R_2 +R_3 & \geq \frac{1}{2} \log \frac{1}{D_{\Gamma_1}^2 D_{\Gamma_1 \Gamma_2} D_{\Gamma_1 \Gamma_2 \Gamma_3}},\\
R_1+2R_2 +R_3 & \geq \frac{1}{2} \log \frac{1}{D_{\Gamma_1} D_{\Gamma_2} D_{\Gamma_1 \Gamma_2} D_{\Gamma_1 \Gamma_2 \Gamma_3}},\\
%R_1+R_2 +2R_3 & \geq \frac{1}{2} \log \frac{1}{D_{\Gamma_1} D_{\Gamma_2} D_{\Gamma_1 \Gamma_3}  D_{\Gamma_1 \Gamma_2 \Gamma_3}} \\
R_1+R_2 +2R_3 & \geq \frac{1}{2}\log \frac{1}{D_{\Gamma_1} D_{\Gamma_3}^2 D_{\Gamma_1 \Gamma_2 \Gamma_3}},\\
R_1+R_2 +R_3 & \geq \frac{1}{2} \log \frac{1}{D_{\Gamma_1} D_{\Gamma_3} D_{\Gamma_1 \Gamma_2 \Gamma_3}},\\
R_1+R_2 +R_3 & \geq \frac{1}{4} \log\frac{1}{D_{\Gamma_1}^2 D_{\Gamma_2} D_{\Gamma_1 \Gamma_2} D_{\Gamma_1 \Gamma_2 \Gamma_3}^2},
\end{align}
\makeatletter
\setcounter{equation}{\the\c@tempcnt}\makeatother}
is achievable. 
\label{cor:md-inner}
\end{corollary}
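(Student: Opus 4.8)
The plan is to obtain Corollary~\ref{cor:md-inner} as a direct specialization of Theorem~\ref{thm:md-inner}. Since that theorem already guarantees $\oR\MD(\D)\subseteq\R\MD(\D)$, it suffices to show that, under the ordering $\L_1$, the defining inequalities $(\mathcal{I}\text{-}1)$ through $(\mathcal{I}\text{-}5)$ collapse to the single-logarithm inequalities $(\mathcal{I}'\text{-}1)$ through $(\mathcal{I}'\text{-}11)$ once every minimum has been resolved. The only tool needed is that $D\mapsto\tfrac{1}{2}\log\frac{1}{D}$ is strictly decreasing, so that under $D_{\Gamma_1}\geq D_{\Gamma_2}\geq D_{\Gamma_3}\geq D_{\Gamma_1\Gamma_2}\geq D_{\Gamma_1\Gamma_3}\geq D_{\Gamma_2\Gamma_3}\geq D_{\Gamma_1\Gamma_2\Gamma_3}$ the corresponding values $\tfrac{1}{2}\log\frac{1}{D_{(\cdot)}}$ are sorted in the reverse order. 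Consequently, in each $\min$ appearing in Theorem~\ref{thm:md-inner}, the minimizing argument is always the one whose distortion subscript is largest in the $\L_1$ order.

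First I would dispose of the groups requiring no case analysis or only a two-way comparison. The single-rate bounds $(\mathcal{I}\text{-}1)$ carry no minimum and pass through unchanged into $(\mathcal{I}'\text{-}1)$--$(\mathcal{I}'\text{-}3)$. For the pairwise bounds $(\mathcal{I}\text{-}2)$ the comparison is between two single-description distortions; since $D_{\Gamma_1}\geq D_{\Gamma_2}\geq D_{\Gamma_3}$, the minimizer is the lower-indexed description, giving $\tfrac{1}{2}\log\frac{1}{D_{\Gamma_1}D_{\Gamma_1\Gamma_2}}$, $\tfrac{1}{2}\log\frac{1}{D_{\Gamma_1}D_{\Gamma_1\Gamma_3}}$ and $\tfrac{1}{2}\log\frac{1}{D_{\Gamma_2}D_{\Gamma_2\Gamma_3}}$, matching $(\mathcal{I}'\text{-}4)$--$(\mathcal{I}'\text{-}6)$.

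Next I would handle the three $(\mathcal{I}\text{-}3)$ inequalities (one per choice of the doubled rate) and the two sum-rate inequalities. In $2R_i+R_j+R_k$ the first two minima again reduce by comparing single distortions, while the third compares the pairwise distortions $D_{\Gamma_i\Gamma_j}$ and $D_{\Gamma_i\Gamma_k}$; for instance with $i=3$, $\{j,k\}=\{1,2\}$ one invokes $D_{\Gamma_1\Gamma_3}\geq D_{\Gamma_2\Gamma_3}$ to select $\tfrac{1}{2}\log\frac{1}{D_{\Gamma_1\Gamma_3}}$. For $(\mathcal{I}\text{-}4)$ one compares $D_{\Gamma_1\Gamma_2}$ against $D_{\Gamma_3}$, and $D_{\Gamma_3}\geq D_{\Gamma_1\Gamma_2}$ selects $\tfrac{1}{2}\log\frac{1}{D_{\Gamma_3}}$, producing $(\mathcal{I}'\text{-}10)$. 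For $(\mathcal{I}\text{-}5)$ the three-way minimum over $\{D_{\Gamma_1\Gamma_2},D_{\Gamma_1\Gamma_3},D_{\Gamma_2\Gamma_3}\}$ is attained at $D_{\Gamma_1\Gamma_2}$, the largest of the three. After each minimum is resolved I would fold the resulting $\tfrac14$- and $\tfrac12$-weighted logarithms into one logarithm of a distortion product to recover the compact right-hand sides.

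I do not expect a genuine obstacle here: achievability is inherited verbatim from Theorem~\ref{thm:md-inner}, so no new coding construction is required, and the argument is entirely mechanical. The only place demanding care is the bookkeeping in the last two steps---correctly identifying which of the two pairwise distortions wins the third minimum in each $(\mathcal{I}\text{-}3)$ bound, and then consolidating the unequal weights (a mix of $\tfrac14$ and $\tfrac12$) into the single-logarithm product form. These are exactly the spots where an exponent in the final product is easily mistracked, so I would cross-check each consolidated exponent against the resolved minimizers before committing to each specialized inequality.
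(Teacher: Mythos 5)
Your overall route---specializing Theorem~\ref{thm:md-inner} by resolving every minimum under the $\L_1$ ordering---is exactly the route the paper intends, and your resolution of the minima is correct (each $\min$ is won by the largest distortion). The problem is that if you actually carry the computation through, two of the eleven inequalities you obtain are \emph{not} the ones in the corollary. For (\ref{oR1123}) with $i=3$, $\{j,k\}=\{1,2\}$, your (correct) selections give
$R_1+R_2+2R_3 \geq \frac{1}{2}\log\frac{1}{D_{\Gamma_1}D_{\Gamma_2}D_{\Gamma_1\Gamma_3}D_{\Gamma_1\Gamma_2\Gamma_3}}$,
whereas the corollary's ninth inequality reads $\frac{1}{2}\log\frac{1}{D_{\Gamma_1}D_{\Gamma_3}^2D_{\Gamma_1\Gamma_2\Gamma_3}}$; these agree only when $D_{\Gamma_3}^2=D_{\Gamma_2}D_{\Gamma_1\Gamma_3}$. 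For (\ref{oR123-1}), folding $\frac{1}{4}\log\frac{1}{D_{\Gamma_1}^2D_{\Gamma_2}}+\frac{1}{2}\log\frac{1}{D_{\Gamma_1\Gamma_2}}+\frac{1}{2}\log\frac{1}{D_{\Gamma_1\Gamma_2\Gamma_3}}$ into a single logarithm puts $D_{\Gamma_1\Gamma_2}^{2}$ inside the $\frac14\log$, whereas the corollary's last inequality has $D_{\Gamma_1\Gamma_2}$ to the \emph{first} power. So the claimed ``collapse'' fails precisely at the two spots you flagged but did not check.

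The second mismatch is fatal to the specialization strategy, not a bookkeeping slip. This is an achievability statement, so deducing the corollary from the theorem requires the corollary's region to be \emph{contained} in the specialized theorem's region; weakening a right-hand side is not free. Since the corollary's sum-rate inequality (first power of $D_{\Gamma_1\Gamma_2}$) is strictly weaker than your specialized (\ref{oR123-1}) whenever $D_{\Gamma_1\Gamma_2}<1$, the corollary's region is strictly larger than the region you prove, and the extra points are exactly the interesting ones: e.g., in the regime $D_{\Gamma_2}D_{\Gamma_1\Gamma_2}\leq D_{\Gamma_3}^2$, the SR images of the A-MLD corner points $Z_7$--$Z_{10}$ sit on the corollary's last inequality with equality and violate your version of it. The paper's actual argument does not specialize the displayed theorem at all: it substitutes $H_k=\frac{1}{2}\log\frac{1}{D_{\L^{-1}(k)}}$ into the $\L_1$ A-MLD region of Corollary~1, and there the sum-rate bound (\ref{r123-2}) carries the coefficient $\frac{1}{2}$ on $H_{\L(\Gamma_1,\Gamma_2)}$, which is what produces the first power of $D_{\Gamma_1\Gamma_2}$; the printed (\ref{oR123-1}) is in fact inconsistent with its own derivation by this factor of $\frac12$. (Your line 9, on the other hand, is the correct facet: it matches (\ref{r1233}) and the outer bound's counterpart, while the corollary's printed ninth inequality is redundant---it is the sum of its third and tenth inequalities---and is evidently a misprint, since as printed it admits points in regimes II and III that the SR-plus-A-MLD scheme is never shown to achieve.) In short, the theorem and the corollary as displayed are mutually inconsistent, so a purely mechanical specialization cannot succeed; a correct proof must descend to Theorem~\ref{thm:mld}/Corollary~1 through the successive-refinement construction.
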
 

Summarizing the results of Theorem~\ref{thm:md-outer} and Theorem~\ref{thm:md-inner} gives the following corollary.

\begin{corollary}
\begin{eqnarray}
\overline{\R}\MD(\D)\subseteq \R\MD(\D) \subseteq \underline{\R}\MD(\D). 
\end{eqnarray}
\end{corollary}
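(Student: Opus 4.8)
The statement is an immediate consequence of Theorem~\ref{thm:md-inner} and Theorem~\ref{thm:md-outer}, so the plan is simply to chain the two inclusions they already provide. First I would invoke Theorem~\ref{thm:md-inner}, which asserts that every rate triple in the inner region $\oR\MD(\D)$ is achievable, i.e.\ $\oR\MD(\D)\subseteq\R\MD(\D)$; this is the left-hand inclusion verbatim. Then I would invoke Theorem~\ref{thm:md-outer}, which asserts that every admissible rate triple lies in the outer region $\uR\MD(\D)$, i.e.\ $\R\MD(\D)\subseteq\uR\MD(\D)$; this is the right-hand inclusion. Transitivity of set inclusion then yields $\oR\MD(\D)\subseteq\R\MD(\D)\subseteq\uR\MD(\D)$, which is the claim, recalling that the symbols $\overline{\R}\MD$ and $\oR\MD$ (respectively $\underline{\R}\MD$ and $\uR\MD$) denote the same region in the paper's notation.

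Since nothing beyond the two cited theorems is required, there is no genuine obstacle at the level of this corollary: all of the difficulty is pushed into establishing the two theorems themselves. The achievability direction (Theorem~\ref{thm:md-inner}) demands an explicit coding scheme---presumably the successive-refinement plus A-MLD architecture advertised in the introduction---whose rates are verified to satisfy every inequality defining $\oR\MD(\D)$, while the converse direction (Theorem~\ref{thm:md-outer}) rests on the parametric outer bound $\uR\MD^p(\D,\mathbf{d})$ of Theorem~\ref{thm:md-outer-parametric} together with the sum-rate lower-bounding technique of \cite{TianMohDig08}. The one point worth stressing when presenting the corollary is that the sandwich is not vacuous: the singleton bounds $R_i\geq \frac{1}{2}\log\frac{1}{D_{\Gamma_i}}$ coincide in the two regions, and each of the remaining inner inequalities differs from its outer counterpart only by an additive constant ($1$, $3$, $2$, or $\frac{9}{2}$, depending on the inequality). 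Because the two families of bounding hyperplanes are parallel and uniformly close, the chained inclusion confines $\R\MD(\D)$ to a thin shell, which is precisely the approximate characterization of the asymmetric Gaussian three-description rate region claimed in the abstract.
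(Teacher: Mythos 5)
Your proposal is correct and matches the paper exactly: the paper offers no separate argument, stating only that ``summarizing the results of Theorem~\ref{thm:md-outer} and Theorem~\ref{thm:md-inner} gives the following corollary,'' which is precisely your transitivity-of-inclusion chain. Your added remarks on the parallel bounding hyperplanes and the additive gaps ($1$, $3$, $2$, $\tfrac{9}{2}$) correctly anticipate the paper's own follow-up discussion of the distances $\delta_{(x,y,z)}$, so nothing is missing or different in substance.
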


The result of this corollary is that the multiple description admissible rate region is bounded between two sets of hyperplanes, which are pair-wise parallel. For each pair of parallel planes, we can compute the distance between them. Denote by $\delta_{(x,y,z)}$ the Euclidean distance between two parallel planes which are orthogonal to the vector $(x,y,z)$. Then for the distortion constraints corresponding to ordering $\L_1$, we have 
\begin{align}
\delta_{(1,0,0)}&= 0,\\
\delta_{(1,1,0)}&\leq \frac{1}{\sqrt{2}}=0.7071,\\
\delta_{(2,1,1)}&\leq \frac{3}{\sqrt{6}}=1.2247,\\
\delta_{(1,1,1)}&\leq \frac{9}{4\sqrt{3}}=1.2990,
\end{align}
where the denominators are the normalizing factors, corresponding to the length of the vector $(x,y,z)$. This shows that the inner and outer bounds provide an approximate characterization for the admissible rate region, for which the Euclidean distance between the bounds in less than  $1.3$ in the worst case. Fig.~\ref{region} shows a typical pair of  inner and outer bounds for $\L_1$ ordering and the case $D_{\Gamma_2}D_{\Gamma_1\Gamma_3} \leq D_{\Gamma_3}^2\leq D_{\Gamma_2}D_{\Gamma_1 \Gamma_2}$, which is the lossy counterpart of the lossless A-MLD problem with $h_4\leq h_3\leq h_4+h_5$, discussed in Subsection~\ref{subsec:MLD-ach}, under regime \mbox{II} (see also Fig.~\ref{reg2}). 

\begin{figure}[ht]
\begin{center}
	\psfrag{r1}[Bc][bc]{$R_1$}
	\psfrag{r2}[Bc][bc]{$R_2$}
	\psfrag{r3}[Bc][bc]{$R_3$}
	\psfrag{d1}[Bc][bc]{$\leq 0.7071$}
	\psfrag{d2}[Bc][bc]{$\leq 1.2247$}
	\psfrag{d3}[Bc][bc]{$\leq 1.299$}
	\includegraphics[width=10cm]{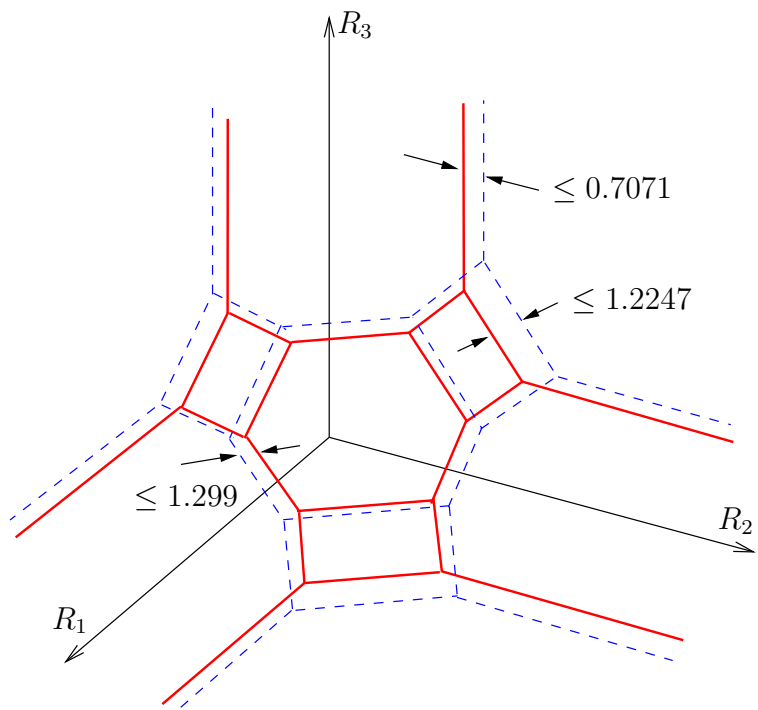}
\caption{The inner and outer bound for the admissible rate region of the Gaussian multiple descriptions problem for distortion constraints corresponding to the ordering $\L_1$, and the case $D_{\Gamma_2}D_{\Gamma_1\Gamma_3} \leq D_{\Gamma_3}^2\leq D_{\Gamma_2}D_{\Gamma_1 \Gamma_2}$.} \hspace{.1cm}
\label{region}
\end{center}
\end{figure}

\section{Asymmetric Multilevel Diversity Coding}
In this section we first prove the converse part of Theorem~\ref{thm:mld} for all orderings, and then the achievability part for ordering $\L_1$. Similar techniques can be used straightforwardly to prove the achievability for all the other orderings, and therefore complete the proof of Theorem~\ref{thm:mld}.

\subsection{The Converse Proof}
In this subsection we show that any admissible rate triple satisfies
(\ref{ur1})-(\ref{ur123-2}). The following important lemma, which 
simplifies the proof of the theorem, relates the entropy of the original
source to the reconstructed one. 
\begin{lemma}
Let $\S\subseteq\{\Gamma_1,\Gamma_2,\Gamma_3\}$ be a subset of descriptions available at a
decoder, and $i \leq j \leq \L(\S)$. Then
\begin{eqnarray}
H(\S | U_i^n) \geq H(\S | U_j^n) + n (H_j -H_i -\delta_n)
\end{eqnarray}
where $\delta_n \rightarrow 0$ as $n$ increases.
\label{lm:fano}
\end{lemma}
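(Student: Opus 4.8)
The plan is to read $\S$ as shorthand for the collection of descriptions $\{F_m(U^n):\Gamma_m\in\S\}$ available at the decoder, and to exploit that, since $i\le j$, the array $U_i^n$ is a deterministic coordinate-projection of $U_j^n$ (as $U_i=(V_1,\dots,V_i)$ is a sub-vector of $U_j=(V_1,\dots,V_j)$). The backbone of the argument is a single chain-rule identity, combined with Fano's inequality supplied by the decoding requirement.

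First I would expand the joint entropy $H(\S,U_j^n\mid U_i^n)$ in two different orders. Expanding the description first gives $H(\S\mid U_i^n)+H(U_j^n\mid \S,U_i^n)$, while expanding $U_j^n$ first gives $H(U_j^n\mid U_i^n)+H(\S\mid U_j^n,U_i^n)$. Because $U_i^n$ is a function of $U_j^n$, the last conditional entropy collapses to $H(\S\mid U_j^n)$. Equating the two expansions yields the exact identity
\begin{equation}
H(\S\mid U_i^n)=H(\S\mid U_j^n)+H(U_j^n\mid U_i^n)-H(U_j^n\mid \S,U_i^n).
\end{equation}

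Next I would evaluate the two correction terms. Since $U_i^n$ is determined by $U_j^n$ and the source is i.i.d., the first correction is $H(U_j^n\mid U_i^n)=H(U_j^n)-H(U_i^n)=n(H_j-H_i)$, where I use $H(U_j^n)=nH_j$. For the second term I would first drop the extra conditioning, $H(U_j^n\mid \S,U_i^n)\le H(U_j^n\mid \S)$, and then invoke the decoding guarantee (\ref{level}): because $j\le\L(\S)$, the decoder with description set $\S$ reconstructs $V_1^n,\dots,V_j^n$, hence all of $U_j^n$, so that by the union bound over the $j$ streams the block-error probability is at most $j\e$. Fano's inequality then gives $H(U_j^n\mid \S)\le 1+j\e\, n\log|\mathcal{V}_1\times\cdots\times\mathcal{V}_j|=:n\delta_n$, with $\delta_n\to0$. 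Substituting both estimates into the identity yields $H(\S\mid U_i^n)\ge H(\S\mid U_j^n)+n(H_j-H_i-\delta_n)$, as claimed.

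The only delicate point is the Fano step: one must verify that the per-stream constraints in (\ref{level}) combine, via the union bound, into a vanishing block-error probability for the entire array $U_j^n$, and that the resulting $\delta_n$, which absorbs the $1/n$ term and the constant $\log|\mathcal{V}_1\times\cdots\times\mathcal{V}_j|$, indeed tends to zero as the error parameter is driven down with $n$. Everything else is mechanical chain-rule bookkeeping that relies only on $i\le j$ (so that $U_i^n$ is a function of $U_j^n$) and on the i.i.d.\ product structure that makes $H(U_j^n)=nH_j$.
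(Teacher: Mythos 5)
Your proof is correct and follows essentially the same route as the paper's: both expand $H(\S,U_j^n\mid U_i^n)$ via the chain rule to isolate $H(U_j^n\mid U_i^n)=n(H_j-H_i)$, and both dispose of the residual term $H(U_j^n\mid \S,U_i^n)$ by Fano's inequality using the decoding requirement for $j\le\L(\S)$ (the paper routes this through the explicit reconstruction $\hat{U}_j^n(\S)$, which is equivalent since it is a function of $\S$). Your treatment is in fact slightly more careful on one point: you invoke the union bound over the $j$ streams to get block-error probability at most $j\e$, whereas the paper asserts $P_e=\Pr(\hat{U}_j^n(\S)\neq U_j^n)<\e$ directly; either way $\delta_n\to 0$ and the lemma follows.
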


\begin{IEEEproof}
Note that $i \leq j \leq \L(\S)$. Therefore, the decoding requirement for the 
decoder with access to $\S$ implies that the reconstructed sequence 
$\hat{U}_j^n(\S)$ equals to $U_j^n$ with high probability. Then
\begin{align}
H(\S  | U_i^n)
&\stackrel{(a)}= H(\S, \hat{U}_j^n(\S) | U_i^n)\nonumber\\
&=
H(\S, U_j^n, \hat{U}_j^n(\S) | U_i^n)-H(U_j^n | \S, \hat{U}_j^n(\S) , U_i^n)\nonumber\\
& \geq 
H(\S , U_j^n| U_i^n ) - H (U_j^n | \hat{U}_j^n(\S))\nonumber\\
&\stackrel{(b)}=
H(\S | U_j^n) + H(U_j^n| U_i^n) - H (U_j^n | \hat{U}_j^n(\S)),
\label{eq:FanoLem}
\end{align}
%\end{small}
where $(a)$ holds since $\hat{U}_j^n(\S)$ is function of $\S$,
for $j \leq \L(\S)$, and $(b)$ is due to the fact that $U_i^n$ is a subsequence of $U_j^n$ for $j\geq i$.
%to Markov string $U_i^n\leftrightarrow U_j^n \leftrightarrow \S$.
The underlying distribution of $U_i^n$ and $U_j^n$ implies $ H(U_j^n| U_i^n)=n(H_j-H_i)$. The last term in (\ref{eq:FanoLem}) can be upper bounded using the Fano's
inequality \cite{CoverThomas} as
\begin{eqnarray}
H (U_j^n | \hat{U}_j^n(\S))
\leq 
h_B(P_e) +P_e \log (|\mathcal{U}_j^n|-1)\label{Fano}
\leq 1+ n c P_e
\end{eqnarray}
where $P_e=\Pr(\hat{U}_j^n(\S) \neq U_j^n)< \e$, $h_B(p)$ defined as
\begin{align*}
h_B(p)=-p\log_2(p)-(1-p)\log_2(1-p)
\end{align*}
is the binary entropy function, and $c=\log|\mathcal{U}_j|$ is a constant. The proof is complete by setting $\delta_n=\frac{1}{n}+c P_e$.
\end{IEEEproof}

\begin{proof} [The converse proof of Theorem~\ref{thm:mld}]
Let $(R_1,R_2,R_3)$ be any admissible rate triple, and $\Gamma_i$ be a single
description with ordering level $\L(\Gamma_i)$. Recall $U^n_0=0$, and note that 
$\hat{U}_{\L(\Gamma_i)}^n(\Gamma_i)$ is a function of $\Gamma_i$. Thus
\begin{align}
n(R_i+\e)\geq H(\Gamma_i) = H(\Gamma_i | U_0^n) \stackrel{(\star)}\geq H(\Gamma_i| U^n_{\L(\Gamma_i)}
)+n( H_{\L(\Gamma_i)} -\delta_n)\geq n(H_{\L(\Gamma_i)}-\delta_n).
\end{align}
This proves
(\ref{ur1}). Note that here and in the rest of this proof all the inequalities labeled by $(\star)$ are due to Lemma~\ref{lm:fano}. 

Toward proving (\ref{ur12}), we can write
\begin{align}
n(R_i+R_j+2\e)
& \geq 
H(\Gamma_i) + H(\Gamma_j)\nonumber\\
& \geq 
H(\Gamma_i|U_0^n) + H(\Gamma_j|U_0^n)\nonumber\\
&\stackrel{(\star)}{\geq}
H(\Gamma_i|U_{\L(\Gamma_i)}^n) + n(H_{\L(\Gamma_i)}-\delta_n)+ H(\Gamma_j| U_{\L(\Gamma_j)}^n)+n(H_{\L(\Gamma_j)}-\delta_n)\nonumber\\
& \geq 
n (H_{\L(\Gamma_i)} +H_{\L(\Gamma_j)} -2\delta_n) +H(\Gamma_i|U_{\max\{\L(\Gamma_i),\L(\Gamma_j)\}}^n) +H(\Gamma_i|U_{\max\{\L(\Gamma_i),\L(\Gamma_j)\}}^n)\nonumber\\
&\geq
n (H_{\L(\Gamma_i)} +H_{\L(\Gamma_j)} -2\delta_n) +H(\Gamma_i,\Gamma_j|U_{\max\{\L(\Gamma_i),\L(\Gamma_j)\}}^n)\nonumber\\
& \stackrel{(\star)}{\geq} 
n( H_{\L(\Gamma_i)} +H_{\L(\Gamma_j)} -2\delta_n)+n(H_{\L(\Gamma_i,\Gamma_j)}-H_{\max\{\L(\Gamma_i),\L(\Gamma_j)\}}-\delta_n)\nonumber\\
&\phantom{=}+ H(\Gamma_i,\Gamma_j|U_{\L(\Gamma_i,\Gamma_j)}^n)\nonumber\\
&\geq 
n\left[H_{\min\{\L(\Gamma_i),\L(\Gamma_j)\}}+H_{\L(\Gamma_i,\Gamma_j)} -3\delta_n\right].
\end{align}

For proving (\ref{ur1123}) we can start with
\begin{align}
n(2R_i +R_j+&R_k+4\e)
\geq 2H(\Gamma_i)+H(\Gamma_j)+H(\Gamma_k)\nonumber\\
&\stackrel{(\star)}\geq 
[H(\Gamma_i|U_{\L(\Gamma_i)}^n) + nH_{\L(\Gamma_i)}-n\delta_n + H(\Gamma_j|U_{\L(\Gamma_j)}^n) + nH_{\L(\Gamma_j)} -n\delta_n]\nonumber\\
&\phantom{=}+[H(\Gamma_i|U_{\L(\Gamma_i)}^n) + nH_{\L(\Gamma_i)} -n\delta_n + H(\Gamma_k|U_{\L(\Gamma_k)}^n) + nH_{\L(\Gamma_k)} -n\delta_n]\nonumber\\
&\geq  n(2H_{\L(\Gamma_i)}+ H_{\L(\Gamma_j)} + H_{\L(\Gamma_k)} -4\delta_n) + H(\Gamma_i,\Gamma_j|U_{\max\{\L(\Gamma_i),\L(\Gamma_j)\}}^n) \nonumber\\
&\phantom{=}+ H(\Gamma_i,\Gamma_k|U_{\max\{\L(\Gamma_i),\L(\Gamma_k)\}}^n)\nonumber\\
&\stackrel{(\star)}\geq  n(2H_{\L(\Gamma_i)}+ H_{\L(\Gamma_j)} + H_{\L(\Gamma_k)} -4\delta_n) \nonumber\\
&\phantom{=}+ n(H_{\L(\Gamma_i,\Gamma_j)}-H_{\max\{\L(\Gamma_i),\L(\Gamma_j)\}} -\delta_n) + H(\Gamma_i,\Gamma_j|U_{\L(\Gamma_i,\Gamma_j)}^n)\nonumber\\
&\phantom{=}+n(H_{\L(\Gamma_i,\Gamma_k)}-H_{\max\{\L(\Gamma_i),\L(\Gamma_k)\}} -\delta_n) + H(\Gamma_i,\Gamma_k|U_{\L(\Gamma_i,\Gamma_k)}^n)\nonumber\\
&\geq  n(H_{\min\{\L(\Gamma_i),\L(\Gamma_j)\}}+ H_{\min\{\L(\Gamma_i),\L(\Gamma_k)\}}+ H_{\L(\Gamma_i,\Gamma_j)} + H_{\L(\Gamma_i,\Gamma_k)}-6\delta_n)\nonumber\\
&\phantom{=}+H(\Gamma_i,\Gamma_j,\Gamma_k|U_{\max\{ \L(\Gamma_i,\Gamma_j),\L(\Gamma_i,\Gamma_k) \}}^n)\nonumber\\
&\stackrel{(\star)}\geq
n(H_{\min\{\L(\Gamma_i),\L(\Gamma_j)\}}+ H_{\min\{\L(\Gamma_i),\L(\Gamma_k)\}}+ H_{\L(\Gamma_i,\Gamma_j)}+ H_{\L(\Gamma_i,\Gamma_k)} -6\delta_n)\nonumber\\
&\phantom{=}+n(H_{\L(\Gamma_i,\Gamma_j,\Gamma_k)}-H_{\max\{ \L(\Gamma_i,\Gamma_j),\L(\Gamma_i,\Gamma_k) \}}-\delta_n)\nonumber\\
&=n\left[H_{\min\{\L(\Gamma_i),\L(\Gamma_j)\}}+ H_{\min\{\L(\Gamma_i),\L(\Gamma_k)\}} \right.\nonumber\\
&\phantom{=}\left.+ H_{\min\{ \L(\Gamma_i,\Gamma_j),\L(\Gamma_i,\Gamma_k) \}}+ H_{\L(\Gamma_i,\Gamma_j,\Gamma_k)}-7\delta_n\right].
\end{align}

Toward proving \eqref{ur123-1} we can write

\begin{align}
n(R_1+R_2+R_3+3\e)&\geq H(\Gamma_1)+H(\Gamma_2)+H(\Gamma_3)\nonumber\\
&\stackrel{(\star)}{\geq}  n(H_{\L(\Gamma_1)}+H_{\L(\Gamma_2)}+H_{\min\{\L(\Gamma_1,\Gamma_2),\L(\Gamma_3)\}}-3\delta_n)\nonumber\\
&\phantom{=}+H(\Gamma_1|U_{\L(\Gamma_1)}^n)+H(\Gamma_2|U_{\L(\Gamma_2)}^n)+H(\Gamma_3|U_{\min\{\L(\Gamma_1,\Gamma_2),\L(\Gamma_3)\}}^n) \nonumber\\
& \geq  n(H_{\L(\Gamma_1)}+H_{\L(\Gamma_2)}+H_{\min\{\L(\Gamma_1,\Gamma_2),\L(\Gamma_3)\}}-3\delta_n)\nonumber\\
&\phantom{=}+H(\Gamma_1, \Gamma_2|U_{\L(\Gamma_2)}^n)+H(\Gamma_3|U_{\min\{\L(\Gamma_1,\Gamma_2),\L(\Gamma_3)\}}^n) \nonumber\\
&\stackrel{(\star)}{\geq}
  n(H_{\L(\Gamma_1)}+H_{\L(\Gamma_2)}+H_{\min\{\L(\Gamma_1,\Gamma_2),\L(\Gamma_3)\}}-3\delta_n)\nonumber\\
&\phantom{=}+H(\Gamma_1,\Gamma_2|U_{\min\{\L(\Gamma_1,\Gamma_2),\L(\Gamma_3)\}}^n)+n(H_{\min\{\L(\Gamma_1,\Gamma_2),\L(\Gamma_3)\}}-H_{\L(\Gamma_2)}-\delta_n)\nonumber\\
&\phantom{=} +H(\Gamma_3|U_{\min\{\L(\Gamma_1,\Gamma_2),\L(\Gamma_3)\}}^n) \nonumber\\
&\geq  n(H_{\L(\Gamma_1)}+2H_{\min\{\L(\Gamma_1,\Gamma_2),\L(\Gamma_3)\}}-4\delta_n)\nonumber\\
&\phantom{=}+H(\Gamma_1,\Gamma_2,\Gamma_3 | U_{\min\{\L(\Gamma_1,\Gamma_2),\L(\Gamma_3)\}}^n)\nonumber\\
&\stackrel{(\star)}{\geq}  n(H_{\L(\Gamma_1)}+2H_{\min\{\L(\Gamma_1,\Gamma_2),\L(\Gamma_3)\}}-4\delta_n)\nonumber\\
&\phantom{=}+n(H_{\L(\Gamma_1,\Gamma_2,\Gamma_3)}-H_{\min\{\L(\Gamma_1,\Gamma_2),\L(\Gamma_3)\}}-\delta_n)\nonumber\\
&\geq  n\left[H_{\L(\Gamma_1)}+H_{\min\{\L(\Gamma_1,\Gamma_2),\L(\Gamma_3)\}}+H_{\L(\Gamma_1,\Gamma_2,\Gamma_3)}-5\delta_n\right].
\label{sum2}
\end{align}

We need to consider two different cases in order to obtain the other sum-rate  bound in \eqref{ur123-2}. First consider the case $\L(\Gamma_3)> \L(\Gamma_1,\Gamma_2)$. Note that this implies $\min\{ \L(\Gamma_1, \Gamma_2), \L(\Gamma_1, \Gamma_3) , \L(\Gamma_2, \Gamma_3)\} =\L(\Gamma_1, \Gamma_2)$. We have
%\begin{small}
\begin{align}
n(R_1+&R_2+R_3+3\e)\geq H(\Gamma_1)+H(\Gamma_2)+H(\Gamma_3)\nonumber\\ 
&\stackrel{(\star)}{\geq} 
n(H_{\L(\Gamma_1)}+H_{\L(\Gamma_2)}+H_{\L(\Gamma_3)}-3\delta_n)+H(\Gamma_1|U_{\L(\Gamma_1)}^n)+H(\Gamma_2|U_{\L(\Gamma_2)}^n)+H(\Gamma_3|U_{\L(\Gamma_3)}^n)\nonumber
\\ 
&=
n(H_{\L(\Gamma_1)}+ H_{\L(\Gamma_2)}+H_{\L(\Gamma_3)}-3\delta_n)+\frac{1}{2}[H(\Gamma_1|U_{\L(\Gamma_1)}^n)+H(\Gamma_2|U_{\L(\Gamma_2)}^n)]\nonumber\\
&
\qquad+\frac{1}{2}[H(\Gamma_1|U_{\L(\Gamma_1)}^n)+H(\Gamma_3|U_{\L(\Gamma_3)}^n)]+\frac{1}{2}[H(\Gamma_2|U_{\L(\Gamma_2)}^n)+H(\Gamma_3|U_{\L(\Gamma_3)}^n)]\nonumber\\
&\geq  n(H_{\L(\Gamma_1)}+H_{\L(\Gamma_2)}+H_{\L(\Gamma_3)}-3\delta_n)\nonumber\\
&\phantom{=}+\frac{1}{2}[H(\Gamma_1,\Gamma_2|U_{\L(\Gamma_2)}^n)+H(\Gamma_1,\Gamma_3|U_{\L(\Gamma_3)}^n)+H(\Gamma_2,\Gamma_3|U_{\L(\Gamma_3)}^n)] \label{eq:prf:mld:sum-1}\\ 
&\stackrel{(\star)}{\geq}
n(H_{\L(\Gamma_1)}+H_{\L(\Gamma_2)}+H_{\L(\Gamma_3)}-3\delta_n)\nonumber\\
&\phantom{=}+\frac{1}{2}\left[H(\Gamma_1,\Gamma_2|U_{\L(\Gamma_1, \Gamma_2)}^n)+n(H_{\L(\Gamma_1, \Gamma_2)}-H_{\L(\Gamma_2)} -\delta_n) \right]\nonumber\\
&\phantom{=}+ \frac{1}{2}\left[H(\Gamma_1,\Gamma_3|U_{\L(\Gamma_3)}^n)+H(\Gamma_2,\Gamma_3|U_{\L(\Gamma_3)}^n)\right]\nonumber\\
&\stackrel{(a)}{\geq}
n(H_{\L(\Gamma_1)}+\frac{1}{2}H_{\L(\Gamma_2)}+\frac{1}{2}H_{\L(\Gamma_1, \Gamma_2)}+H_{\L(\Gamma_3)}-\frac{7}{2}\delta_n)\nonumber\\
&\phantom{=}+\frac{1}{2}\left[H(\Gamma_1,\Gamma_2|U_{\L(\Gamma_3)}^n) + H(\Gamma_1,\Gamma_3|U_{\L(\Gamma_3)}^n)+H(\Gamma_2,\Gamma_3|U_{\L(\Gamma_3)}^n)\right]\nonumber\\
%&\qquad+\frac{n}{2}[(H_{\L(\Gamma_1)}-H_2-\delta_n)+(H_4-H_3-\delta_n)+(H_4-H_3-\delta_n)]\label{r123-lm2}\\
&\stackrel{(b)}{\geq} 
n(H_{\L(\Gamma_1)}+\frac{1}{2}H_{\L(\Gamma_2)}+\frac{1}{2}H_{\L(\Gamma_1, \Gamma_2)}+H_{\L(\Gamma_3)}-\frac{7}{2}\delta_n)+H(\Gamma_1,\Gamma_2,\Gamma_3|U_{\L(\Gamma_3)}^n)\nonumber\\
&\stackrel{(\star)}{\geq} 
n(H_{\L(\Gamma_1)}+\frac{1}{2}H_{\L(\Gamma_2)}+\frac{1}{2}H_{\L(\Gamma_1, \Gamma_2)}+H_{\L(\Gamma_3)}-\frac{7}{2}\delta_n)+n(H_{\L(\Gamma_1,\Gamma_2,\Gamma_3)}-H_{\L(\Gamma_3)}-\delta_n)\nonumber\\
&= n\left[H_{\L(\Gamma_1)}+\frac{1}{2}H_{\L(\Gamma_2)}+\frac{1}{2}H_{\L(\Gamma_1, \Gamma_2)}+H_{\L(\Gamma_1,\Gamma_2,\Gamma_3)}-\frac{9}{2}\delta_n\right],
\label{sum1-1}
\end{align}
%\end{small}
where in  $(a)$ we have used $H(\Gamma_1,\Gamma_2|U_{\L(\Gamma_1,\Gamma_2)}^n)  \geq  H(\Gamma_1,\Gamma_2|U_{\L(\Gamma_3)}^n)$, implied by the assumption $\L(\Gamma_3)> \L(\Gamma_1,\Gamma_2)$, and $(b)$ is due to the
conditional version of Han's inequality \cite[page 491]{CoverThomas}. 
For the second case, \emph{i.e.,}
$\L(\Gamma_3) < \L(\Gamma_1,\Gamma_2)$, we have from \eqref{eq:prf:mld:sum-1},
%\begin{small}
\begin{align}
n(R_1+&R_2+R_3+3\e)
\geq  n(H_{\L(\Gamma_1)}+H_{\L(\Gamma_2)}+H_{\L(\Gamma_3)}-3\delta_n)\nonumber\\
&\phantom{=}+\frac{1}{2}\left[H(\Gamma_1,\Gamma_2|U_{\L(\Gamma_2)}^n)+H(\Gamma_1,\Gamma_3|U_{\L(\Gamma_3)}^n)+H(\Gamma_2,\Gamma_3|U_{\L(\Gamma_3)}^n)\right] \nonumber\\ 
&\stackrel{(\star)}{\geq}
n(H_{\L(\Gamma_1)}+H_{\L(\Gamma_2)}+H_{\L(\Gamma_3)}-3\delta_n)+\frac{1}{2}\Big[H(\Gamma_1,\Gamma_2|U_{\min\{\L(\Gamma_1,\Gamma_2),\L(\Gamma_1,\Gamma_3),\L(\Gamma_2,\Gamma_3)\}}^n)\nonumber\\
&\phantom{=}+ H(\Gamma_1,\Gamma_3|U_{\min\{\L(\Gamma_1,\Gamma_2),\L(\Gamma_1,\Gamma_3),\L(\Gamma_2,\Gamma_3)\}}^n)+H(\Gamma_2,\Gamma_3|U_{\min\{\L(\Gamma_1,\Gamma_2),\L(\Gamma_1,\Gamma_3),\L(\Gamma_2,\Gamma_3)\}}^n)\nonumber\\
&\phantom{=}+n(3 H_{\min\{\L(\Gamma_1,\Gamma_2),\L(\Gamma_1,\Gamma_3),\L(\Gamma_2,\Gamma_3)\}} 
-H_{\L(\Gamma_2)} -2H_{\L(\Gamma_3)} -3\delta_n)\Big]\nonumber\\
%&\qquad+\frac{n}{2}[(H_{\L(\Gamma_1)}-H_2-\delta_n)+(H_4-H_3-\delta_n)+(H_4-H_3-\delta_n)]\label{r123-lm2}\\
&\stackrel{(c)}{\geq} 
n(H_{\L(\Gamma_1)}+\frac{1}{2}H_{\L(\Gamma_2)}+\frac{3}{2}H_{\min\{\L(\Gamma_1,\Gamma_2),\L(\Gamma_1,\Gamma_3),\L(\Gamma_2,\Gamma_3)\}})-\frac{9}{2}\delta_n)\nonumber\\
&\phantom{=}+H(\Gamma_1,\Gamma_2,\Gamma_3|U_{\min\{\L(\Gamma_1,\Gamma_2),\L(\Gamma_1,\Gamma_3),\L(\Gamma_2,\Gamma_3)\}}^n)\nonumber\\
&\geq 
n(H_{\L(\Gamma_1)}+\frac{1}{2}H_{\L(\Gamma_2)}+\frac{3}{2}H_{\min\{\L(\Gamma_1,\Gamma_2),\L(\Gamma_1,\Gamma_3),\L(\Gamma_2,\Gamma_3)\}}-\frac{9}{2}\delta_n)\nonumber\\
&\phantom{=}+n(H_{\L(\Gamma_1,\Gamma_2,\Gamma_3)}-H_{\min\{\L(\Gamma_1,\Gamma_2),\L(\Gamma_1,\Gamma_3),\L(\Gamma_2,\Gamma_3)\}}-\delta_n)\nonumber\\
&= n\left[H_{\L(\Gamma_1)}+\frac{1}{2}H_{\L(\Gamma_2)}+\frac{1}{2}H_{\min\{\L(\Gamma_1,\Gamma_2),\L(\Gamma_1,\Gamma_3),\L(\Gamma_2,\Gamma_3)\}}+H_{\L(\Gamma_1,\Gamma_2,\Gamma_3)}-\frac{11}{2}\delta_n\right].
\label{sum1-2}
\end{align}
Again we have used the conditional Han's inequality in $(c)$. Putting \eqref{sum1-1} and \eqref{sum1-2} together, we obtain the bound \eqref{ur123-2}. 
\end{proof}

\subsection{Achievability}
\label{subsec:MLD-ach}
In the following we will show that the inequalities
(\ref{ur1})--(\ref{ur123-2}) provide a complete characterization of the
achievable rate region of the A-MLD problem. However, each individual case 
given in Table~\ref{tbl:ordering} needs to be considered 
separately, due to the specific strategy used in the coding scheme. For conciseness, 
we only present the analysis for the ordering level  $\L_1$, and provide 
the details of the achievability scheme for this specific ordering. 
More precisely, we show that any rate
triple $(R_1,R_2,R_3)$ satisfying \eqref{r1}--\eqref{r123-2}
is achievable, \emph{i.e.,} there exist encoding and decoding
functions with the desired rates which are able to reconstruct the required 
subset of the sources from the corresponding descriptions. This implies
$\R^{\L}\MLD(\mathbf{H})$ is achievable, and completes the proof of the theorem
for the ordering $\L_1$. Similar proof for other orderings can be straightforwardly 
completed by applying almost identical techniques. Different cases that needed to be considered 
are listed in Table~\ref{tbl:ordering}.

\begin{table}
\caption{The eight possible level orderings and the corresponding sub-regimes. }
\label{tbl:ordering}
\begin{center}
\begin{tabular}{|c|c|}
\hline
\textrm{Ordering} & \textrm{Regime}\\
\hline
\hline
\multirow{3}{*}{
$\L(\Gamma_1) < \L(\Gamma_2)< \L(\Gamma_3) < \L(\Gamma_{12}) < \L(\Gamma_{13}) < \L(\Gamma_{23}) < \L(\Gamma_{123})$} & 
$h_3\leq h_4$\\
\cline{2-2}
& $h_4\leq h_3 \leq h_4+h_5$\\
\cline{2-2}
& $h_3\geq h_4+h_5$
\\
\hline
\multirow{3}{*}{
$\L(\Gamma_1) < \L(\Gamma_2)< \L(\Gamma_3) < \L(\Gamma_{12}) < \L(\Gamma_{23}) < \L(\Gamma_{13}) < \L(\Gamma_{123})$} & 
$h_3\leq h_4$\\
\cline{2-2}
&$h_4\leq h_3 \leq h_4+h_5$\\
\cline{2-2}
&$h_3\geq h_4+h_5$
\\
\hline
\multirow{2}{*}{
$\L(\Gamma_1) < \L(\Gamma_2)< \L(\Gamma_3) < \L(\Gamma_{13}) < \L(\Gamma_{12}) < \L(\Gamma_{23}) < \L(\Gamma_{123})$} & 
$h_3\leq h_4$\\
\cline{2-2}
& $h_3 \geq h_4$ 
\\
\hline
\multirow{2}{*}{
$\L(\Gamma_1) < \L(\Gamma_2)< \L(\Gamma_3) < \L(\Gamma_{13}) < \L(\Gamma_{23}) < \L(\Gamma_{12}) < \L(\Gamma_{123})$} & 
$h_3\leq h_4$\\
\cline{2-2}
&$h_3\geq h_4$
\\
\hline
\multirow{2}{*}{
$\L(\Gamma_1) < \L(\Gamma_2)< \L(\Gamma_3) < \L(\Gamma_{23}) < \L(\Gamma_{12}) < \L(\Gamma_{13}) < \L(\Gamma_{123})$} & 
$h_3\leq h_4$\\
\cline{2-2}
&$h_3\geq h_4$
\\
\hline
\multirow{2}{*}{
$\L(\Gamma_1) < \L(\Gamma_2)< \L(\Gamma_3) < \L(\Gamma_{23}) < \L(\Gamma_{13}) < \L(\Gamma_{12}) < \L(\Gamma_{123})$} & 
$h_3\leq h_4$\\
\cline{2-2}
&$h_3\geq h_4$
\\
\hline
\multirow{2}{*}{
$\L(\Gamma_1) < \L(\Gamma_2)< \L(\Gamma_{12}) < \L(\Gamma_{3}) < \L(\Gamma_{13}) < \L(\Gamma_{23}) < \L(\Gamma_{123})$} & 
$h_3\leq h_5$\\
\cline{2-2}
& $h_3\geq h_5$ 
\\
\hline
\multirow{2}{*}{
$\L(\Gamma_1) < \L(\Gamma_2)< \L(\Gamma_{12}) < \L(\Gamma_{3}) < \L(\Gamma_{23}) < \L(\Gamma_{13}) < \L(\Gamma_{123})$} & 
$h_3\leq h_5$\\
\cline{2-2}
&$h_3\geq h_5$
\\
\hline
\end{tabular} 
\end{center}
\end{table}

Note that the $\R^{\L}\MLD$ is a polytopes specified by several 
hyperplanes in a three-dimensional space. Therefore, the region 
$\R^{\L}\MLD$ is a \emph{convex} polytopes, and it suffices to show the 
achievability only
for the \emph{corner points} 
\cite{Gold:56}; that is because a simple \emph{time-sharing} argument can be used  to 
 extend the achievability to any arbitrary point in the region
$\R^{\L}\MLD$.

Depending on the relationship of $h_3$, $h_4$, and $h_5$, some of the
inequalities in (\ref{r1})--(\ref{r123-2}) may be dominated by the
others. Note that (\ref{r123-1}) and (\ref{r123-2}) are of the form 
\begin{align*}
R_1+R_2+R_3 &\geq   3H(V_1)+2H(V_2)+2H(V_3) +H(V_4)+H(V_5)+H(V_6)+H(V_7),\\
R_1+R_2+R_3 &\geq   3H(V_1)+2H(V_2)+\frac{3}{2}H(V_3) +\frac{3}{2}H(V_4)+H(V_5)+H(V_6)+H(V_7).
\end{align*}
It is clear either one of them would be redundant and implied by the other, 
depending on whether $h_3\lessgtr h_4$.  Also if $h_3\geq h_4+h_5$, inequalities (\ref{r3}) and (\ref{r123-1}) imply
\begin{align*}
R_1+R_2+2R_3&\geq H_1+H_3+H_7+H_3\\
&\geq  H_1+H_3+H_7+H_2+h_4+h_5\\
&= H_1+H_2+H_5+H_7,
\end{align*}
which is exactly the inequality given in (\ref{r1233}), \emph{i.e.,} this inequality is  redundant in this regime. Thus, we  split the achievability proof  into  three 
regimes corresponding to the aforementioned conditions, since the proposed encoding 
schemes are slightly different for these regimes. We show the achievability
of the corner points in each case.

%
%then In other words, in some regimes one inequality may be implied
%by others, and is thus redundant.  Note that the only aspect to be
%specified in (\ref{r123-1})--(\ref{r123-2}) is to check whether
%$h_3\lessgtr h_4+h_5$, which we consider the two case separately. 
%It is also useful to study the case $h_3\leq h_4+h_5$ in two different 
%situations specified by $h_3\lessgtr h_4$, since the proposed encoding 
%schemes are slightly different. Therefore, we present this part in  three 
%regimes given below. We show the achievability
%of the corner points in each case.

To simplify matters, we perform a lossless \emph{pre-coding}, acting
on all the seven source sequences $V_i^n$'s as
\begin{align*}
E_i: \mathcal{V}_i^n&\longrightarrow \{0,1\}^{\ell_i}
\end{align*}
for $i=1,\dots,7$. This function maps the source sequence $V_i^n$ to 
$\tilde{V_i}\triangleq E_i(V_i^n)$, which can be used as a new binary 
source sequence of length $\ell_i$. This can be done by using any lossless scheme, and achieves
$\ell_i$ arbitrary close to $ nh_i$ for large enough $n$. With the new 
source sequences $\tilde{V}_i$,  we next perform further coding.

\textbf{Regime I: $h_3\geq h_4+h_5$}\\
As mentioned above, the inequalities \eqref{r1233} and \eqref{r123-2}  are dominated by the others in this regime. Therefore we only need to consider the remaining nine
%In this regime, \eqref{r123-2} is implied by  \eqref{r123-1}. Moreover, inequalities (\ref{r3}) and (\ref{r123-1}) imply
%\begin{align*}
%R_1+R_2+2R_3&\geq H_1+H_3+H_7+H_3\\
%&\geq  H_1+H_3+H_7+H_2+h_4+h_5\\
%&= H_1+H_2+H_5+H_7,
%\end{align*}
%which is exactly the inequality given in (\ref{r1233}). 
%Therefore, this inequality is redundant in
%this regime,  and we deal with at most nine
hyperplanes. In the following we list the corner points of
$\R^{\L}\MLD$ in this regime. Each corner point with coordinates $(R_1,R_2,R_3)$ is the intersection of
(at least) three hyperplane, say ($Qi$), ($Qj$),  and ($Qk$).  Such point is denoted 
by $\langle Qi,Qj,Qk\rangle:(R_1,R_2,R_3)$. In order to list all the corner points, we first find the intersection of any three hyperplanes, and then check whether the intersection point satisfies all the other inequalities. We next provide an
encoding strategy to achieve the rates prescribed by the corner points of the polytope. 
\begin{itemize}
\item $X_1=\langle$\ref{r1},\ref{r12},\ref{r1123}$\rangle: (H_1,H_4,H_7)$

This corner point is the intersection of the planes $Q_1$, $Q_4$, and $Q_7$, and determines the individual
rates of the descriptions as
\begin{eqnarray*}
(R_1,R_2,R_3)=(H_1,H_4,H_7).
\end{eqnarray*}
The scheme  for achieving this rate tuple is as
follows. $\Gamma_1$ is exactly the pre-coded sequence of $V_1^n$,
\emph{i.e.,} $\V_1$. In order to construct $\Gamma_2$ it suffices to 
concatenate the codewords $\V_1$, $\V_2$, $\V_3$, and
$\V_4$. Similarly, $\Gamma_3$ is the concatenation of all the seven
codewords. That is,
\begin{eqnarray*}
\Gamma_1:\V_1, \qquad \Gamma_2:\V_1,\V_2,\V_3,\V_4, \qquad \Gamma_3:\V_1,\V_2,\V_3,\V_4,\V_5,\V_6,\V_7.
\end{eqnarray*}
It is easy to check that the description rates are the same as the rate triple of the corner point, and all the decoding requirements at the seven decoders are satisfied. 

We will only determine the rate triples and illustrate the descriptions construction for the remaining corner points.
 
\item $X_2=\langle$\ref{r1},\ref{r13},\ref{r1123}$\rangle: (H_1,H_7-h_5,H_5)$
%\begin{eqnarray*}
%R_1=H_1,\qquad R_2=H_7-h_5, \qquad R_3=H_5
%\end{eqnarray*}
\begin{eqnarray*}
\Gamma_1:\V_1, \qquad \Gamma_2:\V_1,\V_2,\V_3,\V_4,\V_6,\V_7, \qquad \Gamma_3:\V_1,\V_2,\V_3,\V_4,\V_5.
\end{eqnarray*}

\item $X_3=\langle$\ref{r2},\ref{r12},\ref{r1223}$\rangle:(H_1+h_3+h_4,H_2, H_7 )$
%\begin{eqnarray*}
%R_1=H_1+h_3+h_4,\qquad R_2=H_2, \qquad R_3=H_7
%\end{eqnarray*}
\begin{eqnarray*}
\Gamma_1:\V_1,\V_3,\V_4, \qquad \Gamma_2:\V_1,\V_2, \qquad \Gamma_3:\V_1,\V_2,\V_3,\V_4,\V_5,\V_6,\V_7.
\end{eqnarray*}

\item $X_4=\langle$\ref{r2},\ref{r23},\ref{r1223}$\rangle:(H_1+h_3+h_4+h_7,H_2,H_6 )$
%\begin{eqnarray*}
%R_1=H_1+h_3+h_4+h_7,\qquad R_2=H_2, \qquad R_3=H_6
%\end{eqnarray*}
\begin{eqnarray*}
\Gamma_1:\V_1,\V_3,\V_4,\V_7, \qquad \Gamma_2:\V_1,\V_2, \qquad \Gamma_3:\V_1,\V_2,\V_3,\V_4,\V_5,\V_6.
\end{eqnarray*}

\item $X_5=\langle$\ref{r3},\ref{r13},\ref{r123-1}$\rangle:(H_1+h_4+h_5,H_3+h_6+h_7, H_3)$
%\begin{eqnarray*}
%R_1=H_1+h_4+h_5,\qquad R_2=H_3+h_6+h_7, \qquad R_3=H_3.
%\end{eqnarray*}

The encoding schemes for the previous corner points only involve concatenation 
of different codewords. However, concatenation is not optimal to achieve the rate triple
induced by the point $X_5$,  and we need to jointly encode the sources to construct 
the descriptions. This can be done using a modulo-$2$ summation of (parts of) the 
codewords of the same size.

The description $\Gamma_1$ is simply constructed by concatenating  
$\V_1$, $\V_4$, and $\V_5$. Similarly, $\Gamma_3$ is obtained by putting 
$\V_1$, $\V_2$, and $\V_3$ together. The second description, $\Gamma_2$, should be able 
to help $\Gamma_1$ to reconstruct $\V_3$ at the decoder with access to $\{\Gamma_1,\Gamma_2\}$, 
and help $\Gamma_3$ to reconstruct $(\V_4,\V_5)$ at decoder
$\{\Gamma_2,\Gamma_3\}$, where $\V_3$ is already provided as a part of $\Gamma_3$. 
We can use this fact to construct $\Gamma_2$ as follows. Partition\footnote{Since we are in regime I, we have $h_3\geq h_4+h_5$ and
hence, $\ell_3\geq \ell_4+\ell_5$.} 
the bit stream $\V_3$ into $\V_{3,1}$ and $\V_{3,2}$ of
lengths $\ell_3-(\ell_4+\ell_5)$ and $\ell_4+\ell_5$, respectively. 
Compute the modulo-$2$ summation  (binary \texttt{xor}) of the bitstreams $\V_{3,2}$ and 
$(\V_4,\V_5)$. The description $\Gamma_2$ is constructed by concatenating this 
new bit stream with $\V_1$, $\V_2$, $\V_{3,1}$, $\V_6$, and $\V_7$.

The partitioning 
and encoding\footnote{Note that for this corner point, a part 
of the description $\Gamma_2$ is given by $\V_{3,2}\oplus(\V_4,\V_5)$,
which linearly combines independent (compressed) source sequences,
just as the network  coding idea in the familiar Butterfly network \cite{ACLY:00}.} are illustrated in Fig.~\ref{fig:ex-part}.
\begin{eqnarray*}
\Gamma_1:\V_1,\V_4,\V_5, \qquad
\Gamma_2:\V_1,\V_2,\V_{3,1},\V_{3,2}\oplus(\V_4,\V_5),\V_6,\V_7, \qquad
\Gamma_3:\V_1,\V_2,\V_3.
\end{eqnarray*}

\begin{figure}[ht]
\begin{center}
	\psfrag{U1}[Bc][Bc]{$\tilde{V}_1$}
	\psfrag{U2}[Bc][bc]{$\tilde{V}_2$}
	\psfrag{U3}[Bc][bc]{$\tilde{V}_{3,1}$}
	\psfrag{U32}[Bc][bc]{$\tilde{V}_{3,2}$}
	\psfrag{U4}[Bc][bc]{$\tilde{V}_4$}
	\psfrag{U5}[Bc][bc]{$\tilde{V}_{5}$}
%	\psfrag{U52}[Bc][bc]{$\tilde{V}_{5,2}$}
	\psfrag{U6}[Bc][bc]{$\tilde{V}_6$}
	\psfrag{U7}[Bc][bc]{$\tilde{V}_7$}
	\psfrag{l1}[Bc][bc]{$\ell_1$}
	\psfrag{l2}[Bc][bc]{$\ell_2$}
	\psfrag{l31}[Bc][bc]{$\ell_3$}
	\psfrag{l33}[Bc][bc]{$-(\ell_4+\ell_5)$}	
	\psfrag{l32}[Bc][bc]{$\ell_4+\ell_5$}
	\psfrag{l4}[Bc][bc]{$\ell_4$}
	\psfrag{l5}[Bc][bc]{$\ell_5$}
	\psfrag{l6}[Bc][bc]{$\ell_6$}
	\psfrag{l7}[Bc][bc]{$\ell_7$}
	\psfrag{g1:}[Bc][bc]{$\Gamma_1:\ $}
	\psfrag{g2:}[Bc][bc]{$\Gamma_2:\ $}
	\psfrag{g3:}[Bc][bc]{$\Gamma_3:\ $}
	\includegraphics[width=150mm]{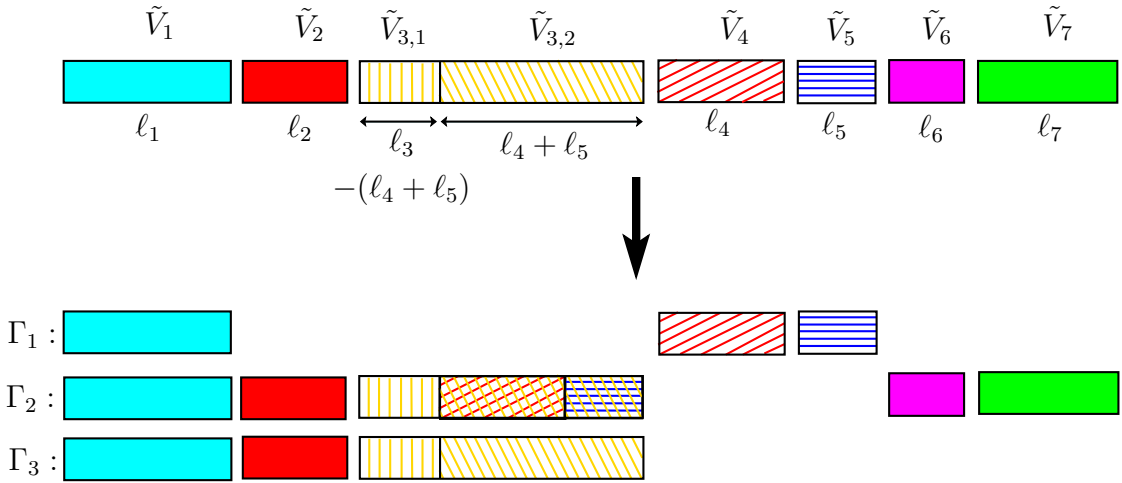}
\end{center}
\caption{Linear encoding for the corner-point $X_{5}$} 
\label{fig:ex-part}
\end{figure}

\item $X_6=\langle$\ref{r3},\ref{r23},\ref{r123-1}$\rangle:(H_1+h_3+h_7,H_2+h_4+h_5+h_6, H_3)$
%\begin{eqnarray*}
%R_1=H_1+h_3+h_7,\qquad R_2=H_2+h_4+h_5+h_6, \qquad R_3=H_3.
%\end{eqnarray*}

Partition $\V_3$ into $\V_{3,1}$ and $\V_{3,2}$ of lengths
$\ell_3-(\ell_4+\ell_5)$ and $\ell_4+\ell_5$, respectively.
\begin{eqnarray*}
\Gamma_1:\V_1,\V_{3,1},\V_{3,2}\oplus(\V_4,\V_5),\V_7 \qquad
\Gamma_2:\V_1,\V_2,\V_4,\V_5,\V_6, \qquad \Gamma_3:\V_1,\V_2,\V_3.
\end{eqnarray*}

\item $X_7=\langle$\ref{r12},\ref{r1123},\ref{r123-1}$\rangle:(H_1+h_4, H_3, H_3+h_5+h_6+h_7 )$
%\begin{eqnarray*}
%R_1=H_1+h_4,\qquad R_2=H_3, \qquad R_3=H_3+h_5+h_6+h_7
%\end{eqnarray*}

Partition $\V_3$ into $\V_{3,1}$ and $\V_{3,2}$ of lengths $\ell_3-\ell_4$ and $\ell_4$, respectively.
\begin{eqnarray*}
\Gamma_1:\V_1,\V_4, \qquad \Gamma_2:\V_1,\V_2,\V_{3,1},\V_{3,2}\oplus\V_4, \qquad \Gamma_3:\V_1,\V_2,\V_3,\V_5,\V_6,\V_7.
\end{eqnarray*}

\item $X_8=\langle$\ref{r12}, \ref{r1223}, \ref{r123-1}$\rangle:( H_1+h_3, H_2+h_4, H_3+h_5+h_6+h_7 )$
%\begin{eqnarray*}
%R_1=H_1+h_3,\qquad R_2=H_2+h_4, \qquad R_3=H_3+h_5+h_6+h_7
%\end{eqnarray*}

Partition $\V_3$ into $\V_{3,1}$ and $\V_{3,2}$ of lengths $\ell_3-\ell_4$ and $\ell_4$, respectively.
\begin{eqnarray*}
\Gamma_1:\V_1,\V_{3,1},\V_{3,2}\oplus\V_4, \qquad \Gamma_2:\V_1,\V_2,\V_4, \qquad \Gamma_3:\V_1,\V_2,\V_3,\V_5,\V_6,\V_7.
\end{eqnarray*}

\item $X_9=\langle$\ref{r13}, \ref{r1123}, \ref{r123-1}$\rangle:(H_1+h_4, H_3+h_6+h_7, H_3+h_5 )$
%\begin{eqnarray*}
%R_1=H_1+h_4,\qquad R_2=H_3+h_6+h_7, \qquad R_3=H_3+h_5
%\end{eqnarray*}

Partition $\V_3$ into $\V_{3,1}$ and $\V_{3,2}$ of lengths $\ell_3-\ell_4$ and $\ell_4$, respectively.
\begin{eqnarray*}
\Gamma_1:\V_1,\V_4,  \qquad \Gamma_2:\V_1,\V_2,\V_{3,1},\V_{3,2}\oplus\V_4,\V_6,\V_7, \qquad \Gamma_3:\V_1,\V_2,\V_3,\V_5.
\end{eqnarray*}

\item $X_{10}=\langle$\ref{r23}, \ref{r1223}, \ref{r123-1}$\rangle:(H_1+h_3+h_7, H_2+h_4, H_3+h_5+h_6 )$
%\begin{eqnarray*}
%R_1=H_1+h_3+h_7,\qquad R_2=H_2+h_4, \qquad R_3=H_3+h_5+h_6
%\end{eqnarray*}

Partition $\V_3$ into $\V_{3,1}$ and $\V_{3,2}$ of lengths $\ell_3-\ell_4$ and $\ell_4$, respectively.
\begin{eqnarray*}
\Gamma_1:\V_1,\V_{3,1},\V_{3,2}\oplus\V_4,\V_7,  \qquad \Gamma_2:\V_1,\V_2,\V_4, \qquad \Gamma_3:\V_1,\V_2,\V_3,\V_5,\V_6.
\end{eqnarray*}
\end{itemize}
The associated rate region is shown in Fig.~\ref{reg1}.
\begin{figure}[ht]
\begin{center}
   \psfrag{x1}[Bc][Bc]{$X_1$}
   \psfrag{x2}[Bc][bc]{$X_2$}
   \psfrag{x3}[Bc][bc]{$X_3$}
   \psfrag{x4}[Bc][bc]{$X_4$}
   \psfrag{x5}[Bc][bc]{$X_5$}
   \psfrag{x6}[Bc][bc]{$X_6$}
   \psfrag{x7}[Bc][bc]{$X_7$}
   \psfrag{x8}[Bc][bc]{$X_8$}
   \psfrag{x9}[Bc][bc]{$X_9$}
   \psfrag{x10}[Bc][bc]{$X_{10}$}
   \psfrag{r1}[Bc][bc]{$R_1$}
   \psfrag{r2}[Bc][bc]{$R_2$}
   \psfrag{r3}[Bc][bc]{$R_3$}
   \includegraphics[width=11cm]{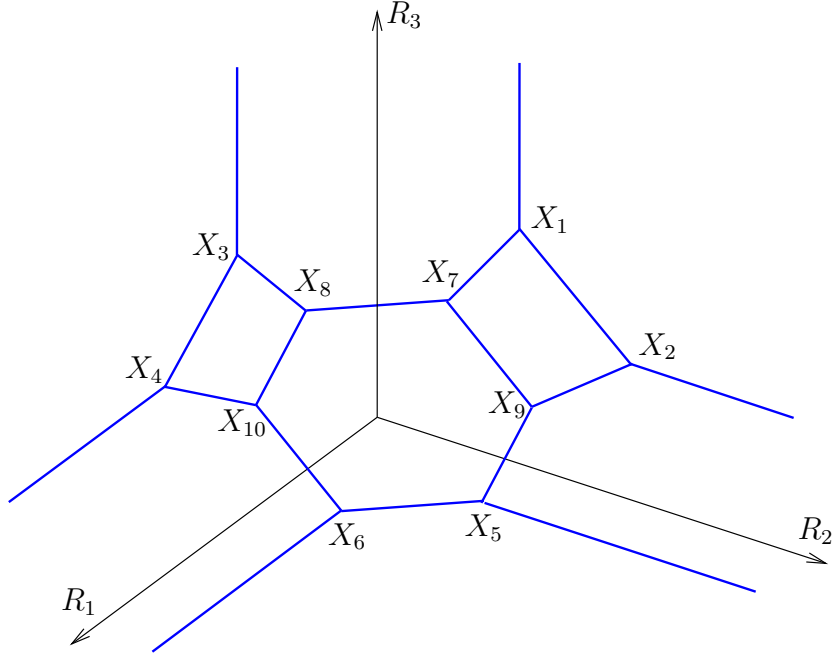}
\caption{Rate region for Regime I: $h_3\geq h_4+h_5$} \hspace{.1cm}
\label{reg1}
\end{center}
\end{figure}
% \begin{figure}
%  \centering
% \includegraphics[width=12cm]{regI.eps}
% \caption{Rate region for Regime I: $H_3\geq h_4+h_5$} \hspace{.1cm}
% \label{reg1}
% \end{figure}

\textbf{Regime II: $h_4\leq h_3 \leq h_4+h_5$}

In this regime, \eqref{r123-2} is dominated by \eqref{r123-1}. Therefore,  we only have to consider ten hyperplanes. The rates and
encoding scheme for the corner points
$Y_1=\langle$\ref{r1},\ref{r12},\ref{r1123}$\rangle$, 
$Y_2=\langle$\ref{r1},\ref{r13},\ref{r1123}$\rangle$, 
$Y_3=\langle$\ref{r2},\ref{r12},\ref{r1223}$\rangle $, 
$Y_4=\langle$\ref{r2},\ref{r23},\ref{r1223}$\rangle$, 
$Y_7=\langle$\ref{r12},\ref{r1123},\ref{r123-1}$\rangle$, 
$Y_8=\langle$\ref{r12},\ref{r1223},\ref{r123-1}$\rangle$, 
$Y_9=\langle$\ref{r13},\ref{r1123},\ref{r123-1}$\rangle$ and
$Y_{10}=\langle$\ref{r23},\ref{r1223},\ref{r123-1}$\rangle$ are exactly
the same as that of $X_1$, $X_2$, $X_3$, $X_4$, $X_7$, $X_8$, $X_9$, and
$X_{10}$, respectively. For the remaining corner points, we next provide the encoding schemes.

\begin{itemize}
\item $Y_5=\langle$\ref{r3}, \ref{r13}, \ref{r1233}$\rangle:(H_1+h_4+h_5, H_2+h_4+h_5+h_6+h_7, H_3)$
%\begin{eqnarray*}
%R_1=H_1+h_4+h_5,\qquad R_2=H_2+h_4+h_5+h_6+h_7, \qquad R_3=H_3.
%\end{eqnarray*}

Partition $\V_5$ into $\V_{5,1}$ and $\V_{5,2}$ of lengths
$\ell_3-\ell_4$ and $\ell_4+\ell_5-\ell_3$, respectively. Also
partition $\V_3$ into $\V_{3,1}$ and $\V_{3,2}$, of sizes
$\ell_3-\ell_4$ and $\ell_4$, respectively.
\begin{eqnarray*}
\Gamma_1:\V_1,\V_4,\V_5, \qquad \Gamma_2:\V_1,\V_2,\V_{3,2}\oplus\V_4,
\V_{3,1}\oplus \V_{5,1}, \V_{5,2},\V_6,\V_7, \qquad \Gamma_3:\V_1,\V_2,\V_3.
\end{eqnarray*}

\item $Y_6=\langle$\ref{r3}, \ref{r23}, \ref{r1233}$\rangle:(H_1+h_4+h_5+h_7, H_2+h_4+h_5+h_6, H_3 )$
%\begin{eqnarray*}
%R_1=H_1+h_4+h_5+h_7,\qquad R_2=H_2+h_4+h_5+h_6, \qquad R_3=H_3.
%\end{eqnarray*}

Partition $\V_5$ into $\V_{5,1}$ and $\V_{5,2}$ of lengths
$\ell_3-\ell_4$ and $\ell_4+\ell_5-\ell_3$, respectively. Also partition
$\V_3$ into $\V_{3,1}$ and $\V_{3,2}$, of sizes $\ell_3-\ell_4$ and
$\ell_4$, respectively.
\begin{eqnarray*}
\Gamma_1:\V_1,\V_4,\V_5,\V_7, \qquad \Gamma_2:\V_1,\V_2,\V_{3,2}\oplus\V_4,
\V_{3,1}\oplus \V_{5,1}, \V_{5,2},\V_6, \qquad \Gamma_3:\V_1,\V_2,\V_3.
\end{eqnarray*}

\item $Y_{11}=\langle$\ref{r13}, \ref{r1233}, \ref{r123-1}$\rangle:(H_1+h_3, H_3+h_6+h_7, H_2+h_4+h_5 )$
%\begin{eqnarray*}
%R_1=H_1+h_3, \qquad R_2=H_3+h_6+h_7, \qquad R_3=H_2+h_4+h_5.
%\end{eqnarray*}

Partition $\V_5$ into $\V_{5,1}$ and $\V_{5,2}$ of lengths
$\ell_3-\ell_4$ and $\ell_4+\ell_5-\ell_3$, respectively. Also
partition $\V_3$ into $\V_{3,1}$ and $\V_{3,2}$, of sizes
$\ell_3-\ell_4$ and $\ell_4$, respectively.
\begin{eqnarray*}
\Gamma_1:\V_1,\V_4,\V_{5,1}, \qquad \Gamma_2:\V_1,\V_2,\V_{3,2}\oplus\V_4,
\V_{3,1}\oplus \V_{5,1},\V_6,\V_7, \qquad \Gamma_3:\V_1,\V_2,\V_3,\V_{5,2}.
\end{eqnarray*}

\item $Y_{12}=\langle$\ref{r23}, \ref{r1233}, \ref{r123-1}$\rangle:(H_1+h_3+h_7, H_3+h_6, H_2+h_4+h_5 )$
%\begin{eqnarray*}
%R_1=H_1+h_3+h_7, \qquad R_2=H_3+h_6, \qquad R_3=H_2+h_4+h_5.
%\end{eqnarray*}

Partition $\V_5$ into $\V_{5,1}$ and $\V_{5,2}$ of lengths
$\ell_3-\ell_4$ and $\ell_4+\ell_5-\ell_3$, respectively. Also
partition $\V_3$ into $\V_{3,1}$ and $\V_{3,2}$, of sizes
$\ell_3-\ell_4$ and $\ell_4$, respectively.
\begin{eqnarray*}
\Gamma_1:\V_1,\V_4,\V_{5,1},\V_7, \qquad \Gamma_2:\V_1,\V_2,\V_{3,2}\oplus\V_4,
\V_{3,1}\oplus \V_{5,1},\V_6, \qquad \Gamma_3:\V_1,\V_2,\V_3,\V_{5,2}.
\end{eqnarray*}
Fig.~\ref{reg2} shows the rate region for this regime.
\begin{figure}[ht]
\begin{center}
   \psfrag{x1}[Bc][Bc]{$Y_1$}
   \psfrag{x2}[Bc][bc]{$Y_2$}
   \psfrag{x3}[Bc][bc]{$Y_3$}
   \psfrag{x4}[Bc][bc]{$Y_4$}
   \psfrag{x5}[Bc][bc]{$Y_5$}
   \psfrag{x6}[Bc][bc]{$Y_6$}
   \psfrag{x7}[Bc][bc]{$Y_7$}
   \psfrag{x8}[Bc][bc]{$Y_8$}
   \psfrag{x9}[Bc][bc]{$Y_9$}
   \psfrag{x10}[Bc][bc]{$Y_{10}$}
   \psfrag{x11}[Bc][bc]{$Y_{11}$}
   \psfrag{x12}[Bc][bc]{$Y_{12}$}
   \psfrag{r1}[Bc][bc]{$R_1$}
   \psfrag{r2}[Bc][bc]{$R_2$}
   \psfrag{r3}[Bc][bc]{$R_3$}
   \includegraphics[width=11cm]{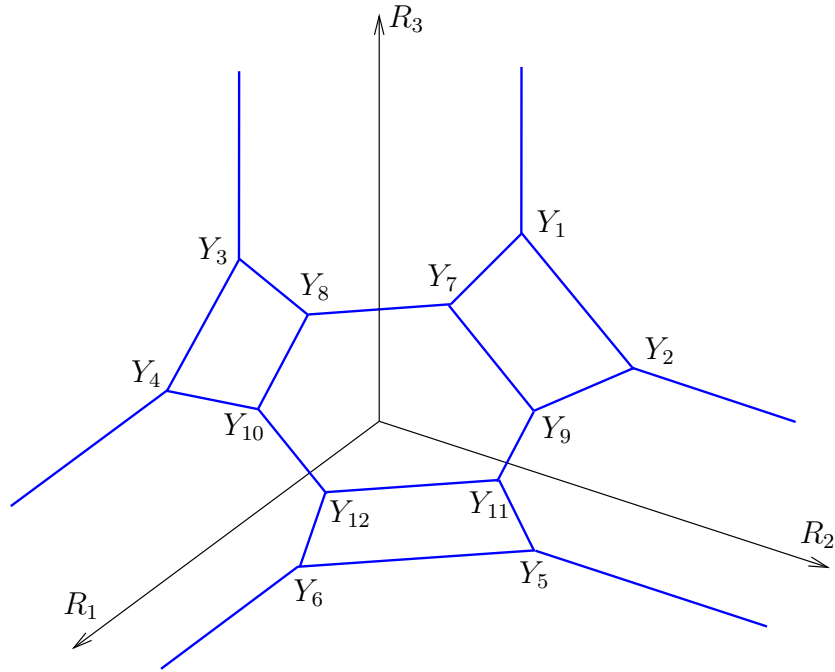}
\caption{Rate region for Regime II of ordering level $\L_1$: $h_4\leq h_3\leq h_4+h_5$} \hspace{.1cm}
\label{reg2}
\end{center}
\end{figure}
\end{itemize}

\textbf{Regime III: $h_3\leq h_4$}\\
It is clear that in this regime  \eqref{r123-1} is dominated by 
\eqref{r123-2}, and thus \eqref{r123-1} does not affect the rate region. The remaining ten inequalities 
characterize the region. The rates and coding
schemes for the points
$Z_1=\langle$\ref{r1},\ref{r12},\ref{r1123}$\rangle$, 
$Z_2=\langle$\ref{r1},\ref{r13},\ref{r1123}$\rangle$, 
$Z_3=\langle$\ref{r2},\ref{r12},\ref{r1223}$\rangle $, and
$Z_4=\langle$\ref{r2},\ref{r23},\ref{r1223}$\rangle$ are exactly the
same as that of $X_1$, $X_2$, $X_3$, and $X_4$, respectively.  The
rate tuples and the corresponding descriptions for the other corner points
are as follows.
\begin{itemize}
\item $Z_5=\langle$\ref{r3}, \ref{r13}, \ref{r1233}$\rangle:(H_1+h_4+h_5,H_2+h_4+h_5+h_6+h_7, H_3)$
%\begin{eqnarray*}
%R_1=H_1+h_4+h_5,\qquad R_2=H_2+h_4+h_5+h_6+h_7, \qquad R_3=H_3.
%\end{eqnarray*}

Partition $\V_4$ into $\V_{4,1}$ and $\V_{4,2}$ of lengths $\ell_3$
and $\ell_4-\ell_3$, respectively.
\begin{eqnarray*}
\Gamma_1:\V_1,\V_4,\V_5, \qquad
\Gamma_2:\V_1,\V_2,\V_3\oplus\V_{4,1},\V_{4,2},\V_5,\V_6,\V_7, \qquad
\Gamma_3:\V_1,\V_2,\V_3. \\[-4.5mm]
\end{eqnarray*}
\item $Z_6=\langle$\ref{r3}, \ref{r23}, \ref{r1233}$\rangle:(H_1+h_4+h_5+h_7, H_2+h_4+h_5+h_6,  H_3 )$
%\begin{eqnarray*}
%R_1=H_1+h_4+h_5+h_7,\qquad R_2=H_2+h_4+h_5+h_6, \qquad R_3=H_3.
%\end{eqnarray*}

Partition $\V_4$ into $\V_{4,1}$ and $\V_{4,2}$ of lengths $\ell_3$ and $\ell_4-\ell_3$, respectively.
\begin{eqnarray*}
\Gamma_1:\V_1,\V_4,\V_5,\V_7, \qquad \Gamma_2:\V_1,\V_2,\V_3\oplus\V_{4,1},
\V_{4,2},\V_5,\V_6, \qquad \Gamma_3:\V_1,\V_2,\V_3.\\[-4.5mm]
\end{eqnarray*}
\item $Z_7=\langle$\ref{r12}, \ref{r1123}, \ref{r1223},
\ref{r123-2}$\rangle:\left(H_1+\frac{h_3+h_4}{2}, H_2+\frac{h_3+h_4}{2},
H_2+\frac{h_3+h_4}{2}+h_5+h_6+h_7\right)$
%\begin{eqnarray*}
%R_1=H_1+\frac{h_3+h_4}{2},\quad R_2=H_2+\frac{h_3+h_4}{2}, \quad R_3=H_2+\frac{h_3+h_4}{2}+h_5+h_6+h_7.
%\end{eqnarray*}

Partition $\V_4$ into $\V_{4,1}$, $\V_{4,2}$, and $\V_{4,3}$ of
lengths $\ell_3$, $\frac{1}{2}(\ell_4-\ell_3)$ and
$\frac{1}{2}(\ell_4-\ell_3)$, respectively.
\begin{eqnarray*}
\Gamma_1:\V_1,\V_{4,1},\V_{4,2}  \quad \Gamma_2:\V_1,\V_2,\V_3\oplus\V_{4,1}, \V_{4,2}\oplus\V_{4,3},
%\end{eqnarray*}
%\begin{eqnarray*}
\quad
\Gamma_3:\V_1,\V_2,\V_3, \V_{4,3}, \V_5,\V_6,\V_7.\\[-4.5mm]
\end{eqnarray*}
\item $Z_8=\langle$\ref{r13}, \ref{r1123}, \ref{r1233},
\ref{r123-2}$\rangle:\left(H_1+\frac{h_3+h_4}{2},H_2+\frac{h_3+h_4}{2}+h_6+h_7,H_2+\frac{h_3+h_4}{2}+h_5\right)$
%\begin{eqnarray*}
%R_1=H_1+\frac{h_3+h_4}{2},\quad R_2=H_2+\frac{h_3+h_4}{2}+h_6+h_7, \quad R_3=H_2+\frac{h_3+h_4}{2}+h_5.
%\end{eqnarray*}

Partition $\V_4$ into $\V_{4,1}$, $\V_{4,2}$, and $\V_{4,3}$ of
lengths $\ell_3$, $\frac{1}{2}(\ell_4-\ell_3)$ and
$\frac{1}{2}(\ell_4-\ell_3)$, respectively.
\begin{eqnarray*}
\Gamma_1:\V_1,\V_{4,1},\V_{4,2}  \quad \Gamma_2:\V_1,\V_2,\V_3\oplus\V_{4,1}, \V_{4,2}\oplus\V_{4,3},\V_6,\V_7,
%\end{eqnarray*}
%\begin{eqnarray*}
\quad
\Gamma_3:\V_1,\V_2,\V_3, \V_{4,3}, \V_5. \\[-4.5mm]
\end{eqnarray*}
\item $Z_9=\langle$\ref{r23}, \ref{r1223},
\ref{r123-2}$\rangle:\left(H_1+\frac{h_3+h_4}{2}+h_7,
H_2+\frac{h_3+h_4}{2}, H_2+\frac{h_3+h_4}{2}+h_5+h_6\right)$
%\begin{eqnarray*}
%R_1=H_1+\frac{h_3+h_4}{2}+h_7,\quad R_2=H_2+\frac{h_3+h_4}{2}, \quad R_3=H_2+\frac{h_3+h_4}{2}+h_5+h_6.
%\end{eqnarray*}

Partition $\V_4$ into $\V_{4,1}$, $\V_{4,2}$, and $\V_{4,3}$ of
lengths $\ell_3$, $\frac{1}{2}(\ell_4-\ell_3)$ and
$\frac{1}{2}(\ell_4-\ell_3)$, respectively.
\begin{eqnarray*}
\Gamma_1:\V_1,\V_{4,1},\V_{4,2},\V_7  \quad \Gamma_2:\V_1,\V_2,\V_3\oplus\V_{4,1},\V_{4,2}\oplus\V_{4,3},
%\end{eqnarray*}
%\begin{eqnarray*}
\quad
\Gamma_3:\V_1,\V_2,\V_3, \V_{4,3}, \V_5,\V_6. \\[-4mm]
\end{eqnarray*}
\item $Z_{10}=\langle$\ref{r23}, \ref{r1223},
\ref{r123-2}$\rangle:\left(H_1+\frac{h_3+h_4}{2}+h_7,
H_2+\frac{h_3+h_4}{2}+h_6, H_2+\frac{h_3+h_4}{2}+h_5\right)$
%\begin{eqnarray*}
%R_1=H_1+\frac{h_3+h_4}{2}+h_7,\quad R_2=H_2+\frac{h_3+h_4}{2}+h_6, \quad R_3=H_2+\frac{h_3+h_4}{2}+h_5.
%\end{eqnarray*}

Partition $\V_4$ into $\V_{4,1}$, $\V_{4,2}$, and $\V_{4,3}$ of
lengths $\ell_3$, $\frac{1}{2}(\ell_4-\ell_3)$ and
$\frac{1}{2}(\ell_4-\ell_3)$, respectively.
\begin{eqnarray*}
\Gamma_1:\V_1,\V_{4,1},\V_{4,2},\V_7  \quad \Gamma_2:\V_1,\V_2,\V_3\oplus\V_{4,1},\V_{4,2}\oplus\V_{4,3},\V_6,
%\end{eqnarray*}
%\begin{eqnarray*}
\quad
\Gamma_3:\V_1,\V_2,\V_3, \V_{4,3}, \V_5.
\end{eqnarray*}

This region and its corner points are shown in Fig.~\ref{reg3}.
\begin{figure}[ht]
\begin{center}
   \psfrag{x1}[Bc][Bc]{$Z_1$}
   \psfrag{x2}[Bc][bc]{$Z_2$}
   \psfrag{x3}[Bc][bc]{$Z_3$}
   \psfrag{x4}[Bc][bc]{$Z_4$}
   \psfrag{x5}[Bc][bc]{$Z_5$}
   \psfrag{x6}[Bc][bc]{$Z_6$}
   \psfrag{x7}[Bc][bc]{$Z_7$}
   \psfrag{x8}[Bc][bc]{$Z_8$}
   \psfrag{x9}[Bc][bc]{$Z_9$}
   \psfrag{x10}[Bc][bc]{$Z_{10}$}
   \psfrag{r1}[Bc][bc]{$R_1$}
   \psfrag{r2}[Bc][bc]{$R_2$}
   \psfrag{r3}[Bc][bc]{$R_3$}
   \includegraphics[width=11cm]{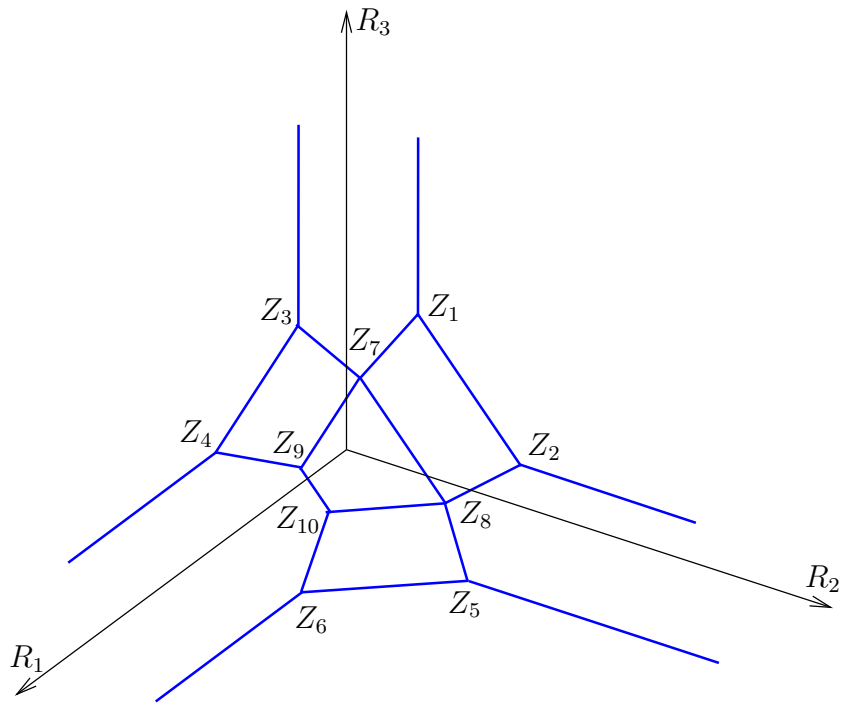}
\caption{Rate region for Regime III of the ordering level $\L_1$: $h_3\leq h_4$} \hspace{.1cm}
\label{reg3}
\end{center}
\end{figure}
\end{itemize}

The coding schemes proposed for these three cases give us the achievability proof of the theorem for the specific ordering  $\L_1$. As stated before, the coding scheme for other possible orderings listed in Table~\ref{tbl:ordering} are similar to that of the ordering $\L_1$. There are three main ingredients used in all of them; (1) converting the source sequences into bitstreams,  (2) partitioning the bit streams into sequences of proper length,  and (3) (if required) applying linear coding (binary \texttt{xor}) on them.  
This completes the proof of Theorem~\ref{thm:mld}. 
%%%%%%%%%%%%%%%%%%%%%%%%%%%%%%%%%%%%%%%%%%%%%%%%%%%%%%%%%%%%%%%%%%%%%%%%%%%%%%%%%%%%%

\section{Asymmetric Multiple Description Coding}

In this section we prove Theorems~\ref{thm:md-outer} and \ref{thm:md-inner}, which together give an approximate characterization for the admissible rate region of the A-MD problem.

\subsection{An Outer Bound for the Rate Region of A-MD: Proof of Theorem~\ref{thm:md-outer}}
 
In order to prove this theorem, we first show a parametric outer-bound for the A-MD rate  region. Then we specialize the parameters to obtain the bound claimed in the theorem. 

We first need to define a set of auxiliary random variables in order to state and prove the parametric bound, which are some noisy versions of the source. The strategy of expanding the probability space by a single auxiliary variable was used to characterize the two descriptions Gaussian MD region \cite{ozarow}, and later in \cite{TianMohDig_j} extended to include multiple auxiliary random variables with certain built-in Markov structure. We shall continue to use this extended strategy as used in \cite{TianMohDig_j}.

Let $N_i \sim \N(0,\sigma_i^2),\ i=1,\dots,6$, be mutually independent zero-mean Gaussian random variables with variance $\sigma_i^2$. They are also  assumed to be independent of $X$.  A noisy version of the source, $Y_i$, is defined as 
\begin{eqnarray}
Y_i=X+Z_i, \quad i=1,\dots,6 
\label{eq:def-y}
\end{eqnarray}
where $Z_i=\sum_{j=i}^6 N_j$ for $j=1,\dots,6$. Thus $d_i\triangleq \sum_{j=i}^6 \sigma_j^2$ would be the variance of the noises $Z_i$, for $i=1,\dots,6$. We also define  $Y_7=X$ and $d_7=0$ for convenience. Note that incremental noises are added to $X$ to build $Y_i$'s, and therefore they form a Markov chain as 
\begin{eqnarray}
(\Gamma_1,\Gamma_2,\Gamma_3)\leftrightarrow X^n \leftrightarrow Y_{6}^n \leftrightarrow Y_{5}^n \leftrightarrow \cdots \leftrightarrow Y_{1}^n.
\label{eq:Markov}
\end{eqnarray}

The following theorem provides a parametric  outer-bound for the rate region of the A-MLD problem, depending on $d_i$ variables, which are the noise variances defined above. Such bound holds for any choice of $d_1\geq d_2\geq \dots\geq d_6>0$, and can be further optimized to obtain a good non-parametric outer-bound for the rate region. However, we simply derive the bound in Theorem~\ref{thm:md-outer} by setting the values of $d_i$'s.

\begin{theorem}
For a given distortion vector $\D=(D_{\Gamma_1},\dots,D_{\Gamma_1\Gamma_2\Gamma_3})$ and a set of variables $d_1\geq d_2\geq \dots\geq d_6 > d_7=0$, denote by $\uR\MD^p(\D,\mathbf{d})$ the set of all rate triples $(R_1,R_2,R_3)$ satisfying
{\makeatletter
\newcounter{tempcntg}
\setcounter{tempcntg}{\the\c@equation}
\def\theequation{$\mathcal{PO}-$\@arabic\c@equation}\makeatother
\setcounter{equation}{0}
\begin{align}
R_j &\geq  \frac{1}{2} \log \frac{1}{D_{\Gamma_j}} \qquad j=1,2,3 \label{upR1}\\
R_i+R_j & \geq  \frac{1}{2} \log \frac{1+d_{\L(\Gamma_i) } }{D_{\Gamma_i}+d_{\L(\Gamma_i)}} \frac{1+d_{\L(\Gamma_j)}}{D_{\Gamma_j}+d_{\L(\Gamma_j)}} \frac{(D_{\Gamma_i \Gamma_j} +d_{\max\{\L(\Gamma_i),\L(\Gamma_j) \}})} {(1+d_{\max\{\L(\Gamma_i),\L(\Gamma_j) \}})D_{\Gamma_i \Gamma_j}} \qquad i\neq j \label{upR12}\\
2R_i+R_j +R_k & \geq \frac{1}{2} \log \left(\frac{1+d_{\L(\Gamma_i)}}{D_{\Gamma_i} + d_{\L(\Gamma_i)}}\right)^2  \frac{1+d_{\L(\Gamma_j)}}{D_{\Gamma_j} + d_{\L(\Gamma_j)}}   \frac{1+d_{\L(\Gamma_k)}}{D_{\Gamma_k} + d_{\L(\Gamma_k)}} \nonumber\\
&\phantom{\geq} +\frac{1}{2} \log \frac{(1+d_{\L(\Gamma_i, \Gamma_j) } ) (D_{\Gamma_i \Gamma_j}  + d_{\max\{\L(\Gamma_i), \L(\Gamma_j)\}}) } {(1+ d_{\max\{\L(\Gamma_i),\L(\Gamma_j)\} } )(D_{\Gamma_i \Gamma_j}+d_{\L(\Gamma_i, \Gamma_j)})}\nonumber\\
&\phantom{\geq} +\frac{1}{2} \log\frac{(1+d_{\L(\Gamma_i, \Gamma_k) } ) (D_{\Gamma_i \Gamma_k}  + d_{\max\{\L(\Gamma_i), \L(\Gamma_k)\}}) } {(1+ d_{\max\{\L(\Gamma_i),\L(\Gamma_k)\} } )(D_{\Gamma_i \Gamma_k}+d_{\L(\Gamma_i, \Gamma_k)})}
\nonumber\\
&\phantom{\geq} +\frac{1}{2} \log \frac{ (D_{\Gamma_i \Gamma_j \Gamma_k}  + d_{\max\{\L(\Gamma_i, \Gamma_j), \L(\Gamma_i, \Gamma_k)\}}) } {(1+ d_{\max\{\L(\Gamma_i, \Gamma_j),\L(\Gamma_i, \Gamma_k)\} } )D_{\Gamma_i \Gamma_j \Gamma_k}}
\qquad i\neq j\neq k\label{upR1123} \\
R_1+R_2 +R_3 &\geq   \frac{1}{2} \log \frac{1+d_{\L(\Gamma_1) } }{D_{\Gamma_1}+d_{\L(\Gamma_1)}} \frac{1+d_{\L(\Gamma_2)}}{D_{\Gamma_2}+d_{\L(\Gamma_2)}}  
\frac{1+d_{\min\{\L(\Gamma_1,\Gamma_2), \L(\Gamma_3)\} }} {D_{\Gamma_3}+d_{\min\{\L(\Gamma_1,\Gamma_2), \L(\Gamma_3)\}}}\nonumber\\
&\phantom{\geq} +\frac{1}{2} 
\log\frac{(1+d_{\min\{\L(\Gamma_1,\Gamma_2), \L(\Gamma_3)\}} )(D_{\Gamma_1 \Gamma_2} + d_{\L(\Gamma_2)}  ) }{ (1+d_{\L(\Gamma_2)}) (D_{\Gamma_1 \Gamma_2} + d_{\min\{\L(\Gamma_1,\Gamma_2), \L(\Gamma_3)\}} ) }\nonumber\\
&\phantom{\geq} +\frac{1}{2} 
\log\frac{D_{\Gamma_1 \Gamma_2 \Gamma_3}+d_{  \min\{ \L(\Gamma_1, \Gamma_2), \L(\Gamma_3)\} } }{(1+d_{  \min\{ \L(\Gamma_1, \Gamma_2), \L(\Gamma_3)\} }) D_{\Gamma_1 \Gamma_2 \Gamma_3}}\label{upR123-2}\\
R_1+R_2 +R_3 &\geq 
\frac{1}{2} \log \frac{1+d_{\L(\Gamma_1)}}{D_{\Gamma_1} + d_{\L(\Gamma_1)}}  \frac{1+d_{\L(\Gamma_2)}}{D_{\Gamma_2} + d_{\L(\Gamma_2)}}   \frac{1+d_{\L(\Gamma_3)}}{D_{\Gamma_3} + d_{\L(\Gamma_3)}} \nonumber\\
&\phantom{\geq} +\frac{1}{4}
\log\frac{(1+d_{\alpha} )(D_{\Gamma_1 \Gamma_2} + d_{\L(\Gamma_2)}  ) }{ (1+d_{\L(\Gamma_2)}) (D_{\Gamma_1 \Gamma_2} + d_{\alpha} ) }
+\frac{1}{4}
\log\frac{(1+d_{\alpha} )(D_{\Gamma_1 \Gamma_3} + d_{\L(\Gamma_3)}  ) }{ (1+d_{\L(\Gamma_3)}) (D_{\Gamma_1 \Gamma_3} + d_{\alpha} ) }
\nonumber\\
&\phantom{\geq} 
+\frac{1}{4}
\log\frac{(1+d_{\alpha} )(D_{\Gamma_2 \Gamma_3} + d_{\L(\Gamma_3)}  ) }{ (1+d_{\L(\Gamma_3)}) (D_{\Gamma_2 \Gamma_3} + d_{\alpha} ) }
+\frac{1}{2} 
\log\frac{D_{\Gamma_1 \Gamma_2 \Gamma_3}+d_{  \L(\Gamma_3) } }{(1+d_{   \L(\Gamma_3) }) D_{\Gamma_1 \Gamma_2 \Gamma_3}},\label{upR123-1}
\end{align}
\makeatletter
\setcounter{equation}{\the\c@tempcntg}\makeatother
}
where 
\begin{align*}
\alpha=\left\{\begin{array}{ll}
\L(\Gamma_3) & \textrm{if } \L(\Gamma_3)>\L(\Gamma_1,\Gamma_2),\\
\min\{\L(\Gamma_1,\Gamma_2),\L(\Gamma_1,\Gamma_3),\L(\Gamma_2,\Gamma_3)\}  & \textrm{if } \L(\Gamma_3)<\L(\Gamma_1,\Gamma_2).
\end{array}\right.
\end{align*}
Then any admissible rate triple belongs to $\uR\MD^p(\D,\mathbf{d})$, \emph{i.e.,} $\R\MD(\D)\subseteq\uR\MD^p(\D,\mathbf{d})$, for all choices of $d_1\geq d_2\geq \dots\geq d_6 \geq d_7=0$. 
\label{thm:md-outer-parametric}
\end{theorem}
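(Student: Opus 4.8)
The plan is to run the converse of Theorem~\ref{thm:mld} essentially verbatim, but with the lossless side-information variables $U_\ell^n$ replaced by the noisy variables $Y_\ell^n$ of \eqref{eq:def-y} and with Lemma~\ref{lm:fano} replaced by its quadratic-Gaussian counterparts. Two single-shot estimates carry the whole argument. The first is a maximum-entropy gain: if a description set $\S$ meets distortion $D_\S$ and $\ell\leq\L(\S)$, then the $\S$-reconstruction of $X$ also estimates $Y_\ell=X+Z_\ell$ within $D_\S+d_\ell$, so
$$I(\S;Y_\ell^n)=h(Y_\ell^n)-h(Y_\ell^n\,|\,\S)\ \geq\ \frac{n}{2}\log\frac{1+d_\ell}{D_\S+d_\ell}-n\delta_n,$$
which is the Gaussian analogue of the term $nH_{\L(\cdot)}$ produced by Lemma~\ref{lm:fano}. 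The second is a reconstruction estimate: since $Z_\ell$ is independent of $(\S,X^n)$, combining the $D_\S$-accurate estimate from $\S$ with the observation $Y_\ell$ as two independent noisy looks at $X$ yields a conditional MMSE of at most $D_\S d_\ell/(D_\S+d_\ell)$, hence $h(X^n\,|\,Y_\ell^n,\S)\leq\frac{n}{2}\log\!\big(2\pi e\,D_\S d_\ell/(D_\S+d_\ell)\big)+n\delta_n$ and therefore
$$H(\S\,|\,Y_\ell^n)\ \geq\ I(\S;X^n\,|\,Y_\ell^n)\ \geq\ \frac{n}{2}\log\frac{D_\S+d_\ell}{(1+d_\ell)D_\S}-n\delta_n.$$
It is this second estimate that replaces the trivial bound $H(\S\,|\,U^n_{\L(\S)})\geq0$ of the lossless proof: a lossy code does not determine $X$ even at the top level, so the residual $I(\S;X^n\,|\,Y^n_{\L(\S)})$ must be kept, and it is exactly this quantity that supplies the factors $(D_\S+d_\ell)/((1+d_\ell)D_\S)$ seen throughout \eqref{upR1}--\eqref{upR123-1}.

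I would also record the differential form that plays the exact role of Lemma~\ref{lm:fano}: for $i\leq j\leq\L(\S)$, writing $Y_i^n=Y_j^n+W^n$ with $W^n$ of variance $d_i-d_j$ and independent of $(\S,Y_j^n)$, the conditional entropy-power inequality gives $e^{\frac{2}{n}h(Y_i^n|\S)}\geq e^{\frac{2}{n}h(Y_j^n|\S)}+2\pi e(d_i-d_j)$, which together with the distortion bound $\tfrac{1}{2\pi e}e^{\frac{2}{n}h(Y_j^n|\S)}\leq D_\S+d_j$ yields
$$H(\S\,|\,Y_i^n)\ \geq\ H(\S\,|\,Y_j^n)+\frac{n}{2}\log\frac{(1+d_j)(D_\S+d_i)}{(1+d_i)(D_\S+d_j)}-n\delta_n.$$
This is the Gaussian counterpart of the gap $n(H_j-H_i)$, and it is consistent with the two estimates above: chaining it from level $\ell$ up to $\L(\S)$ and then applying the reconstruction estimate at level $\L(\S)$ reproduces the reconstruction estimate at level $\ell$.

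With these tools in hand, each of \eqref{upR1}--\eqref{upR123-1} is obtained by transcribing the matching block of the converse of Theorem~\ref{thm:mld}, performing two mechanical substitutions: every appeal to Lemma~\ref{lm:fano} becomes an appeal to the differential estimate, and every terminal term $H(\,\cdot\,|\,U^n_{\L(\cdot)})\geq0$ becomes the reconstruction estimate at the corresponding level. Concretely, \eqref{upR1} is the two estimates chained at $\ell=\L(\Gamma_j)$; for \eqref{upR12} one copies \eqref{ur12}, i.e.\ one writes $nR_i+nR_j\geq H(\Gamma_i)+H(\Gamma_j)$, applies the max-entropy gain to each singleton, degrades both conditionings to the cleaner level $\max\{\L(\Gamma_i),\L(\Gamma_j)\}$, uses subadditivity $H(\Gamma_i\,|\,Y^n)+H(\Gamma_j\,|\,Y^n)\geq H(\Gamma_i,\Gamma_j\,|\,Y^n)$, and closes with the reconstruction estimate for the pair; \eqref{upR1123} copies \eqref{ur1123} by duplicating $\Gamma_i$, forming the pairs $\{\Gamma_i,\Gamma_j\}$ and $\{\Gamma_i,\Gamma_k\}$ and merging them into the triple; \eqref{upR123-2} follows the block ending at \eqref{sum2}; and \eqref{upR123-1} follows the two-case analysis \eqref{sum1-1}--\eqref{sum1-2}, where the split $\L(\Gamma_3)\gtrless\L(\Gamma_1,\Gamma_2)$ is precisely what defines $\alpha$ and where the three pairwise terms are symmetrised by the same conditional Han's inequality \cite[page 491]{CoverThomas} used in the lossless proof.

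The one genuinely new step, and the place I expect the difficulty to sit, is the sharp reconstruction estimate: reducing the conditional MMSE to $D_\S d_\ell/(D_\S+d_\ell)$ rather than the loose value $D_\S$ is what generates the precise functional forms of the theorem, and it must be justified from the independence of $Z_\ell$ from $(\S,X^n)$ together with the orthogonality principle (a linear combination of the $\S$-estimate and of $Y_\ell$ attains this mean-squared error, and a Gaussian-maximises-entropy step then bounds the differential entropy). Once this estimate and the entropy-power gap are established, the remaining manipulations are a faithful, if lengthy, translation of the already-proved lossless converse, and letting $n\to\infty$ and $\e\to0$ absorbs every $\delta_n$ term and delivers the region $\uR\MD^p(\D,\mathbf{d})$ for all admissible $d_1\geq\cdots\geq d_6\geq d_7=0$.
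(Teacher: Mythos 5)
Your proposal is correct and follows essentially the same route as the paper: the paper introduces the same Markov chain of noisy sources $Y_i$, establishes exactly your two estimates as Lemma~\ref{lemma1} and Lemma~\ref{lemma2} (your ``reconstruction estimate'' is Lemma~\ref{lemma2} specialized to $Y_j=X$, $d_7=0$, and your ``differential form'' is Lemma~\ref{lemma2} itself), and then transcribes the A-MLD converse block by block --- including the same case split on $\L(\Gamma_3)\gtrless\L(\Gamma_1,\Gamma_2)$ and the conditional Han's inequality --- a correspondence the paper itself points out in a remark. The only cosmetic difference is in how the two lemmas are proved: you use an MMSE/orthogonality combination and the conditional entropy-power inequality, while the paper's Appendix~A uses a maximum-entropy argument and the worst-additive-noise lemma; both are standard and yield identical bounds.
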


The following two lemmas are extracted from \cite{TianMohDig08}, whose proofs can be found in Appendix~A for completeness. They are useful to bound the mutual information between the noisy versions of the source and the descriptions. 

\begin{lemma}\label{lemma1}
For any set of descriptions $\S\subseteq \{\Gamma_1,\Gamma_2,\Gamma_3\}$, and noisy version of the source $Y_i$, $i=1,2,\dots,7$, we have
\begin{eqnarray}
I( \S; Y_i^n ) \geq \frac{n}{2} \log \frac{1+d_i}{D_{\S} + d_i}.
\end{eqnarray}
\end{lemma}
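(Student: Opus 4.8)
The plan is to expand the mutual information as a difference of two differential entropies and to bound each term separately, which is the standard single-letter converse device (subtracting the decoder's reconstruction) adapted to the noisy observations $Y_i^n$. First I would write
\begin{equation*}
I(\S; Y_i^n) = h(Y_i^n) - h(Y_i^n \mid \S).
\end{equation*}
Since $Y_i = X + Z_i$ with $X \sim \N(0,1)$ independent of $Z_i \sim \N(0,d_i)$, each coordinate $Y_{i,t}$ is $\N(0,1+d_i)$ and the samples are i.i.d., so the first term is exact: $h(Y_i^n) = \frac{n}{2}\log\bigl(2\pi e(1+d_i)\bigr)$. The case $i=7$ (where $d_7=0$ and $Y_7=X$) is included without modification.

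The core of the argument is an upper bound on $h(Y_i^n \mid \S)$ that feeds in the distortion constraint. Given the descriptions in $\S$, the decoder $G_{\S}$ produces a reconstruction $\hat{X}_{\S}^n$ which is a deterministic function of $\S$ and satisfies $\frac{1}{n}\E\|X^n - \hat{X}_{\S}^n\|^2 = \Delta_{\S} \le D_{\S} + \e$; I would carry the slack $\e$ through and let it tend to zero at the end. Because $\hat{X}_{\S}^n$ is measurable with respect to the conditioning variable, subtracting it does not change the conditional entropy, and conditioning only reduces it, so $h(Y_i^n \mid \S) = h(Y_i^n - \hat{X}_{\S}^n \mid \S) \le h(Y_i^n - \hat{X}_{\S}^n)$. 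Applying the Gaussian maximum-entropy bound coordinatewise together with concavity of $\log$ (Jensen) then gives
\begin{equation*}
h(Y_i^n - \hat{X}_{\S}^n) \le \frac{n}{2}\log\!\Bigl(2\pi e \cdot \tfrac{1}{n}\E\|Y_i^n - \hat{X}_{\S}^n\|^2\Bigr).
\end{equation*}

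It then remains to control the average error power. Writing $Y_i^n - \hat{X}_{\S}^n = (X^n - \hat{X}_{\S}^n) + Z_i^n$ and invoking the Markov chain \eqref{eq:Markov}, which makes $Z_i^n$ independent of $X^n$ and hence of the $\S$-measurable quantity $\hat{X}_{\S}^n$, the cross term vanishes in expectation and the error power decomposes additively:
\begin{equation*}
\tfrac{1}{n}\E\|Y_i^n - \hat{X}_{\S}^n\|^2 = \tfrac{1}{n}\E\|X^n - \hat{X}_{\S}^n\|^2 + \tfrac{1}{n}\E\|Z_i^n\|^2 \le D_{\S} + d_i + \e.
\end{equation*}
Combining the three displays yields $h(Y_i^n \mid \S) \le \frac{n}{2}\log\bigl(2\pi e(D_{\S}+d_i+\e)\bigr)$, and subtracting from $h(Y_i^n)$ gives $I(\S;Y_i^n) \ge \frac{n}{2}\log\frac{1+d_i}{D_{\S}+d_i+\e}$, which tends to the claimed bound as $\e\to 0$. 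The only genuinely delicate point I anticipate is the vanishing of the cross term and the independence of the additive noise $Z_i^n$ from the reconstruction $\hat{X}_{\S}^n$; this is precisely where the Markov structure \eqref{eq:Markov} is essential, and it is also what makes the distortion slack $\e$ harmless in the large-$n$ limit. Everything else is routine Gaussian entropy computation.
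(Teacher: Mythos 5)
Your proof is correct and follows essentially the same route as the paper's: expand $I(\S;Y_i^n)=h(Y_i^n)-h(Y_i^n\mid\S)$, subtract the $\S$-measurable reconstruction $\hat{X}_{\S}^n$ and drop the conditioning, then single-letterize with the Gaussian maximum-entropy bound and Jensen's inequality, and finish with the distortion constraint. The only differences are presentational — you make explicit the vanishing cross term and carry the $\e$ slack, both of which the paper leaves implicit — so there is nothing to fix.
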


\begin{lemma}\label{lemma2}
For any subset of the descriptions $\S$, and two noisy versions of the source $Y_i$ and $Y_j$ with $i<j$, we have
\begin{eqnarray}
I(\S ; Y_j^n) - I(\S ; Y_i^n) \geq \frac{n}{2} \log \frac{(1+d_j)(D_{\S}+d_i)}{(1+d_i)(D_{\S}+d_j)}.
\end{eqnarray}
\end{lemma}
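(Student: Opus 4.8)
The plan is to split the mutual-information difference into an unconditional entropy difference, computed exactly, and a conditional entropy difference, controlled by the entropy power inequality. Writing
\[
I(\S;Y_j^n)-I(\S;Y_i^n)=\big[h(Y_j^n)-h(Y_i^n)\big]+\big[h(Y_i^n\mid\S)-h(Y_j^n\mid\S)\big],
\]
the first bracket is immediate, since each $Y_i$ is an i.i.d.\ zero-mean Gaussian sequence of per-letter variance $1+d_i$, giving $h(Y_j^n)-h(Y_i^n)=\frac{n}{2}\log\frac{1+d_j}{1+d_i}$. It therefore remains to prove that $h(Y_i^n\mid\S)-h(Y_j^n\mid\S)\ge\frac{n}{2}\log\frac{D_{\S}+d_i}{D_{\S}+d_j}$, after which the two pieces add up to exactly the claimed bound.

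First I would expose the degradation structure of the two observations. Since $i<j$ and $Z_i=\sum_{k=i}^{6}N_k$, I can write $Y_i^n=Y_j^n+W^n$, where $W^n$ is the incremental-noise block with per-letter law $\N(0,d_i-d_j)$ built from $N_i,\dots,N_{j-1}$. Because $W^n$ depends only on these noise variables, it is independent of $X^n$ and of $Z_j^n$, hence independent of the pair $(\S,Y_j^n)$ (recall $\S$ is a deterministic function of $X^n$). Thus $Y_j^n$ and $W^n$ are conditionally independent given $\S$, and the conditional entropy power inequality applies to $Y_i^n=Y_j^n+W^n$, yielding
\[
2^{\frac{2}{n}h(Y_i^n\mid\S)}\ \ge\ 2^{\frac{2}{n}h(Y_j^n\mid\S)}+2^{\frac{2}{n}h(W^n\mid\S)}\ =\ 2^{\frac{2}{n}h(Y_j^n\mid\S)}+2\pi e\,(d_i-d_j).
\]

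The last ingredient is an upper bound on $h(Y_j^n\mid\S)$, which is precisely the estimate already obtained inside the proof of Lemma~\ref{lemma1}: conditioning on $\S$ provides the reconstruction $\hat{X}_{\S}^n$, so $h(Y_j^n\mid\S)\le h(Y_j^n\mid\hat X_{\S}^n)\le\frac{n}{2}\log\big(2\pi e(D_{\S}+d_j)\big)$, using that $Y_j-\hat X_{\S}$ has per-letter second moment $D_{\S}+d_j$, the Gaussian maximum-entropy bound, and concavity of the logarithm. Writing $P=2^{\frac{2}{n}h(Y_j^n\mid\S)}/(2\pi e)$, so that $P\le D_{\S}+d_j$, and noting that $P\mapsto 1+(d_i-d_j)/P$ is nonincreasing, the entropy power inequality gives
\[
h(Y_i^n\mid\S)-h(Y_j^n\mid\S)\ \ge\ \frac{n}{2}\log\!\left(1+\frac{d_i-d_j}{P}\right)\ \ge\ \frac{n}{2}\log\!\left(1+\frac{d_i-d_j}{D_{\S}+d_j}\right)=\frac{n}{2}\log\frac{D_{\S}+d_i}{D_{\S}+d_j},
\]
which combined with the unconditional term completes the proof.

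The step I expect to require the most care is the application of the conditional entropy power inequality: one must carefully verify the conditional independence of $Y_j^n$ and the incremental-noise block $W^n$ given $\S$ (hinging on $\S$ being a function of $X^n$ alone and the $N_k$ being mutually independent and independent of $X$), and then correctly reuse the Lemma~\ref{lemma1} estimate to convert the \emph{additive} entropy-power bound into the desired \emph{multiplicative} (logarithmic) one via the monotonicity argument above. Everything else reduces to routine Gaussian entropy computations.
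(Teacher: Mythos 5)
Your proposal is correct, and while it shares the paper's outer skeleton---splitting $I(\S;Y_j^n)-I(\S;Y_i^n)$ into the exactly computable unconditional difference $h(Y_j^n)-h(Y_i^n)=\frac{n}{2}\log\frac{1+d_j}{1+d_i}$ plus the conditional difference $h(Y_i^n|\S)-h(Y_j^n|\S)$---it handles the conditional difference by a genuinely different route. The paper turns that difference into a conditional mutual information: using the degradedness $Y_i^n=Y_j^n+(Z_i^n-Z_j^n)$ and the independence of the incremental noise from $(\S,Y_j^n)$, it writes $h(Y_i^n|\S)-h(Y_j^n|\S)=I(Y_i^n;Z_i^n-Z_j^n|\S)$, removes the conditioning via data processing against $Y_i^n-\hat{X}_{\S}^n$, single-letterizes, invokes the worst-additive-noise lemma of Diggavi--Cover per letter, and recombines the per-letter bounds through convexity and monotonicity of $\log\frac{x+a}{x+b}$. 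You instead stay at the block level: the same degradedness plus conditional independence of $Y_j^n$ and $W^n=Z_i^n-Z_j^n$ given $\S$ lets you apply the conditional entropy power inequality, and you convert its additive conclusion into the required multiplicative bound by reusing the max-entropy estimate $h(Y_j^n|\S)\leq\frac{n}{2}\log\bigl(2\pi e(D_{\S}+d_j)\bigr)$ from the proof of Lemma~\ref{lemma1}, together with monotonicity of $P\mapsto 1+(d_i-d_j)/P$. The two tools are close cousins (the worst-noise lemma is itself typically proved by an EPI argument), but your version is more self-contained given Lemma~\ref{lemma1}: no single-letterization and no external lemma is needed, since the incremental-noise Gaussianity enters only through the EPI. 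The paper's route, conversely, avoids any explicit use of the EPI and isolates the one Gaussian-specific inequality in a single cited per-letter lemma; both yield exactly the same final bound. One point worth making fully explicit in your write-up is that the conditional EPI needs the \emph{joint} independence of $(N_i^n,\dots,N_{j-1}^n)$ from $(X^n,N_j^n,\dots,N_6^n)$, which is what gives independence of $W^n$ from the pair $(\S,Y_j^n)$ rather than from each separately; your parenthetical gestures at this, and it is immediate from the mutual independence assumptions in the construction \eqref{eq:def-y}.
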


We will use these results in several points in the proof of Theorem~\ref{thm:md-outer-parametric}, which are indicated by $(\dagger)$. Now, we are ready to prove the parametric outer-bound.

\begin{proof}[Proof of Theorem~\ref{thm:md-outer-parametric}] 
 
The single description levels inequalities are just straight forward result of Lemma~\ref{lemma1}. We have
\begin{eqnarray}
nR_i \geq H(\Gamma_i) = H(\Gamma_i)- H(\Gamma_i | X^n) = I( \Gamma_i ; X^n) \stackrel{(\dagger)}{\geq} \frac{n}{2} \log \frac{1}{D_{\Gamma_i}}
\end{eqnarray}
where we used Lemma~\ref{lemma1} for $Y_7=X$ and the fact $d_7=0$ in the last inequality. This proves (\ref{upR1}).

The bound for the two description rates in (\ref{upR12}) follows from 
\begin{align}
n(R_i+R_j&+2\e) \geq  H(\Gamma_i) + H(\Gamma_j)\nonumber\\
&\stackrel{(a)}{\geq}  H(\Gamma_i) + H(\Gamma_j) -H(\Gamma_i,\Gamma_j | X^n) \nonumber\\
&\phantom{=}- \left[H(\Gamma_i | Y_{\max\{\L(\Gamma_i),\L(\Gamma_j) \}}^n) + H(\Gamma_j | Y_{\max\{\L(\Gamma_i),\L(\Gamma_j) \}}^n) -H(\Gamma_i,\Gamma_j | Y_{\max\{\L(\Gamma_i),\L(\Gamma_j) \}}^n)\right] \nonumber\\
&= I(\Gamma_i; Y_{\max\{\L(\Gamma_i),\L(\Gamma_j) \}}^n) + I(\Gamma_j; Y_{\max\{\L(\Gamma_i),\L(\Gamma_j) \}}^n) \nonumber\\
&\phantom{=} + [I(\Gamma_i\Gamma_j; X^n) -I(\Gamma_i\Gamma_j; Y_{\max\{\L(\Gamma_i),\L(\Gamma_j) \}}^n)] \nonumber\\
&\stackrel{(b)}{\geq}  I(\Gamma_i; Y_{\L(\Gamma_i)}^n) + I(\Gamma_j; Y_{\L(\Gamma_j) }^n) + [I(\Gamma_i\Gamma_j; X^n) -I(\Gamma_i\Gamma_j; Y_{\max\{\L(\Gamma_i),\L(\Gamma_j) \}}^n)]\nonumber\\
&\stackrel{(\dagger)}{\geq}  \frac{n}{2} \log \frac{1+d_{\L(\Gamma_i) } }{D_{\Gamma_i}+d_{\L(\Gamma_i)}} \frac{1+d_{\L(\Gamma_j)}}{D_{\Gamma_j}+d_{\L(\Gamma_j)}} \frac{(1+0)(D_{\Gamma_i \Gamma_j} +d_{\max\{\L(\Gamma_i),\L(\Gamma_j) \}})} {(1+d_{\max\{\L(\Gamma_i),\L(\Gamma_j) \}})(D_{\Gamma_i \Gamma_j}+0)}
\label{eqn:R1R2}
\end{align}
where the subtracted terms in $(a)$ are positive due to the fact that $\Gamma_1$ and $\Gamma_2$ are functions of $X^n$ and non-negativity of mutual information, $(b)$ is by the data processing inequality and the Markov chain in \eqref{eq:Markov}. Finally, we have used Lemma~\ref{lemma1} and Lemma~\ref{lemma2} in $(\dagger)$.

The inequality (\ref{upR1123}) can be proved through the following chain of inequalities. 
\begin{align}
n(2R_i&+R_j+ R_k+4\e) 
\geq  
2H(\Gamma_i)+H(\Gamma_j)+H(\Gamma_k)\nonumber\\
&\stackrel{(a)}{\geq} 
2H(\Gamma_i)+H(\Gamma_j)+H(\Gamma_k)- H(\Gamma_i\Gamma_j \Gamma_k | X^n) \nonumber\\
&\phantom{\geq}-\left[H(\Gamma_i | Y_{\max\{\L(\Gamma_i),\L(\Gamma_j) \}}^n) +H(\Gamma_j | Y_{\max\{\L(\Gamma_i),\L(\Gamma_j) \}}^n)  -H(\Gamma_i \Gamma_j | Y_{\max\{\L(\Gamma_i),\L(\Gamma_j) \}}^n) \right]\nonumber\\ 
&\phantom{\geq}-\left[H(\Gamma_i | Y_{\max\{\L(\Gamma_i),\L(\Gamma_k) \}}^n) +H(\Gamma_k | Y_{\max\{\L(\Gamma_i),\L(\Gamma_k) \}}^n)  -H(\Gamma_i \Gamma_k | Y_{\max\{\L(\Gamma_i),\L(\Gamma_k) \}}^n) \right] \nonumber\\
&\phantom{\geq}-\Big[ H(\Gamma_i \Gamma_j | Y_{\max\{\L(\Gamma_i, \Gamma_j),\L(\Gamma_i, \Gamma_k) \}}^n) + H(\Gamma_i \Gamma_k | Y_{\max\{\L(\Gamma_i, \Gamma_j),\L(\Gamma_i, \Gamma_k) \} }^n)  \nonumber\\
&\phantom{=-\Big[ } -H(\Gamma_i \Gamma_j \Gamma_k | Y_{\max\{\L(\Gamma_i, \Gamma_j),\L(\Gamma_i, \Gamma_k) \}}^n)\Big]\nonumber\\
&= I(\Gamma_i ; Y_{\max\{\L(\Gamma_i),\L(\Gamma_j) \}}^n) + I(\Gamma_j ; Y_{\max\{\L(\Gamma_i),\L(\Gamma_j) \}}^n) \nonumber\\
&\phantom{=}+  I(\Gamma_i ; Y_{\max\{\L(\Gamma_i),\L(\Gamma_k) \}}^n)
+  I(\Gamma_k ; Y_{\max\{\L(\Gamma_i),\L(\Gamma_k) \}}^n) \nonumber\\
&\phantom{=}+  [ I( \Gamma_i \Gamma_j ; Y_{\max\{\L(\Gamma_i, \Gamma_j),\L(\Gamma_i, \Gamma_k) \}}^n) -I(\Gamma_i \Gamma_j ; Y_{\max\{\L(\Gamma_i),\L(\Gamma_j) \}}^n)]   \nonumber\\
&\phantom{=}+[I(\Gamma_i \Gamma_k ; Y_{\max\{\L(\Gamma_i, \Gamma_j),\L(\Gamma_i, \Gamma_k) \}}^n)- I(\Gamma_i \Gamma_k ; Y_{\max\{\L(\Gamma_i),\L(\Gamma_k) \}}^n)] \nonumber\\
&\phantom{=} +[I(\Gamma_i \Gamma_j \Gamma_k ; X^n) - I(\Gamma_i \Gamma_j \Gamma_k ; Y_{\max\{\L(\Gamma_i, \Gamma_j),\L(\Gamma_i, \Gamma_k) \}}^n)] \label{eq:pr:upR1123-1}
\end{align}
where in $(a)$ we have used the fact that all the brackets are non-negative.
Now, we will bound each term in \eqref{eq:pr:upR1123-1} individually. The single description terms can be bounded as
\begin{align}
I(\Gamma_i ; Y_{\max\{\L(\Gamma_i),\L(\Gamma_k) \}}^n) &\geq  I(\Gamma_i ; Y_{\L(\Gamma_i)}^n) 
\stackrel{(\dagger)}{\geq} \frac{n}{2} \log \frac{1+d_{\L(\Gamma_i)}}{D_{\Gamma_i} + d_{\L(\Gamma_i)}},
  \label{eq:pr:upR1123-2}
\end{align}
and similarly for $j$ and $k$. Also we can bound the differential terms as
\begin{align}
 I( \Gamma_i \Gamma_j ; Y_{\max\{\L(\Gamma_i, \Gamma_j),\L(\Gamma_i, \Gamma_k) \}}^n) &-I(\Gamma_i \Gamma_j ; Y_{\max\{\L(\Gamma_i),\L(\Gamma_j) \}}^n) \nonumber\\
 &\stackrel{(b)}{\geq}  
   I( \Gamma_i \Gamma_j ; Y_{\L(\Gamma_i \Gamma_j)}^n) -I(\Gamma_i \Gamma_j ; Y_{\max\{\L(\Gamma_i),\L(\Gamma_j) \}}^n)\nonumber\\
&\stackrel{(\dagger)}{\geq}  \frac{n}{2} \log \frac{(1+d_{\L(\Gamma_i \Gamma_j) } ) (D_{\Gamma_i \Gamma_j}  + d_{\max\{\L(\Gamma_i), \L(\Gamma_j)\}}) } {(1+ d_{\max\{\L(\Gamma_i),\L(\Gamma_j)\} } )(D_{\Gamma_i \Gamma_j}+d_{\L(\Gamma_i \Gamma_j)})}
\label{eq:pr:upR1123-3}
\end{align}
where $(b)$ is due to the data processing inequality implied by the Markov chain 
\begin{eqnarray*}
(\Gamma_i \Gamma_j) \leftrightarrow Y_{\max\{\L(\Gamma_i, \Gamma_j),\L(\Gamma_i, \Gamma_k) \}}^n \leftrightarrow Y_{\L(\Gamma_i \Gamma_j)}^n
\end{eqnarray*}
implied by \eqref{eq:Markov}. We also have
\begin{align}
  I( \Gamma_i \Gamma_j \Gamma_k ; X^n) &-I(\Gamma_i \Gamma_j \Gamma_k; Y_{\max\{\L(\Gamma_i, \Gamma_j),\L(\Gamma_i, \Gamma_k) \}}^n) \nonumber\\
&\stackrel{(\dagger)}{\geq}  \frac{n}{2} \log \frac{(1+0) (D_{\Gamma_i \Gamma_j \Gamma_k}  + d_{\max\{\L(\Gamma_i, \Gamma_j), \L(\Gamma_i, \Gamma_k)\}}) } {(1+ d_{\max\{\L(\Gamma_i, \Gamma_j),\L(\Gamma_i, \Gamma_k)\} } )(D_{\Gamma_i \Gamma_j \Gamma_k}+0)}.
\label{eq:pr:upR1123-4}
\end{align}
By replacing \eqref{eq:pr:upR1123-2}--\eqref{eq:pr:upR1123-4} in \eqref{eq:pr:upR1123-1} we get the desired inequality.

In order to derive the sum-rate bound in \eqref{upR123-2}, we can write 
\begin{align}
n(R_1+R_2+&R_3+3\e) \geq  H(\Gamma_1)+H(\Gamma_2)+H(\Gamma_3) \nonumber\\
&\geq  H(\Gamma_1)+H(\Gamma_2)+H(\Gamma_3) -H(\Gamma_1 \Gamma_2 \Gamma_3 | X^n)\nonumber\\
&\phantom{\geq}- \left[H(\Gamma_1 | Y_{\L(\Gamma_2)}^n) +H(\Gamma_2 | Y_{\L(\Gamma_2)}^n)  -H(\Gamma_1 \Gamma_2 | Y_{\L(\Gamma_2)}^n) \right] \nonumber\\
&\phantom{\geq}-\Big[ H(\Gamma_1 \Gamma_2 | Y_{\min\{\L(\Gamma_1,\Gamma_2), \L(\Gamma_3)\} }^n) + H(\Gamma_3 | Y_{\min\{\L(\Gamma_1,\Gamma_2), \L(\Gamma_3)\} }^n) \nonumber\\
&\phantom{\geq -\Big[}-H(\Gamma_1 \Gamma_2 \Gamma_3 | Y_{\min\{\L(\Gamma_1,\Gamma_2), \L(\Gamma_3)\} }^n) \Big] \nonumber\\
%&=  I(\S_{001} ; Y_2^n) + I(\S_{010} ; Y_2^n) + I(\S_{100} ; Y_4^n) \nonumber\\
%&\phantom{\geq}+ [I(\S_{011} ; Y_4^n)- I(\S_{011} ; Y_2^n)] + [I(\S_{111} ; X^n)- I(\S_{111} ; Y_4^n)]\nonumber\\
&\geq  I(\Gamma_1 ; Y_{\L(\Gamma_1)}^n) + I(\Gamma_2 ; Y_{\L(\Gamma_2)}^n) + I(\Gamma_3 ; Y_{\min\{\L(\Gamma_1, \Gamma_2), \L(\Gamma_3)\}}^n) \nonumber\\
&\phantom{\geq}+ \left[I(\Gamma_1 \Gamma_2 ; Y_{\min\{\L(\Gamma_1,\Gamma_2), \L(\Gamma_3)\} }^n)- I(\Gamma_1 \Gamma_2 ; Y_{\L(\Gamma_2)}^n)\right]\nonumber\\
&\phantom{=} + \left[I(\Gamma_1 \Gamma_2 \Gamma_3; X^n)- I(\Gamma_1 \Gamma_2 \Gamma_3 ; Y_{\min\{\L(\Gamma_1,\Gamma_2), \L(\Gamma_3)\} }^n)\right].
\label{eq:pr:upR123-2}
\end{align}
Again, applying Lemma~\ref{lemma1} and Lemma~\ref{lemma2} we can bound each term in \eqref{eq:pr:upR123-2}, and obtain \eqref{upR123-2}. 

It remains to show the bound in \eqref{upR123-1}. Recall the proof of \eqref{ur123-1}, and consider two cases. 
If $\L(\Gamma_3)>\L(\Gamma_1, \Gamma_2)$, then 
Using the similar argument as in the proof of \eqref{upR1123}, we obtain
\begin{align}
n(R_1+R_2+R_3+3\e) %\geq  H(\Gamma_1)+H(\Gamma_2)+H(\Gamma_3)\nonumber\\
&\geq  H(\Gamma_1)+H(\Gamma_2)+H(\Gamma_3) -H(\Gamma_1 \Gamma_2 \Gamma_3 | X^n)\nonumber\\
&\phantom{\geq} - \frac{1}{2}\left[H(\Gamma_1 | Y_{\L(\Gamma_2)}^n) +H(\Gamma_2 | Y_{\L(\Gamma_2)}^n)  -H(\Gamma_1 \Gamma_2 | Y_{\L(\Gamma_2)}^n) \right]\nonumber\\ 
&\phantom{\geq} -\frac{1}{2}\left[H(\Gamma_1 | Y_{\L(\Gamma_3)}^n) +H(\Gamma_3 | Y_{\L(\Gamma_3)}^n)  -H(\Gamma_1 \Gamma_3 | Y_{\L(\Gamma_3)}^n) \right] \nonumber\\
&\phantom{\geq}-\frac{1}{2}\left[H( \Gamma_2 | Y_{\L(\Gamma_3)}^n) +H(\Gamma_3 | Y_{\L(\Gamma_3)}^n)  -H(\Gamma_2 \Gamma_3 | Y_{\L(\Gamma_3)}^n) \right]\nonumber\\
&\phantom{\geq}  - \frac{1}{2} \left[ H(\Gamma_1 \Gamma_2 | Y_{\L(\Gamma_3)}^n) + H(\Gamma_1 \Gamma_3 | Y_{\L(\Gamma_3)}^n) + H(\Gamma_2 \Gamma_3 | Y_{\L(\Gamma_3)}^n) \right.\nonumber\\
&\phantom{\geq  - \frac{1}{2} \left[\right.} \left.  - 2 H(\Gamma_1 \Gamma_2 \Gamma_3 | Y_{\L(\Gamma_3)}^n) \right]\nonumber\\
%&= \frac{1}{2} [ I(\S_{001} ; Y_2^n) + I(\S_{001} ; Y_3^n) + I(\S_{010} ; Y_2^n)+ I(\S_{010} ; Y_3^n)  + 2I(\S_{100} ; Y_3^n)]\nonumber\\
%&\phantom{\geq} + \frac{1}{2} [I(\S_{011} ; Y_4^n) - I(\S_{011} ; Y_2^n)] + \frac{1}{2}[I(\S_{101} ; Y_4^n) - I(\S_{101} ; Y_3^n)]\nonumber\\
%&\phantom{\geq} + \frac{1}{2}[ I(\S_{110} ; Y_4^n) - I(\S_{110} ; Y_3^n)] +[I (\S_{111} ; X^n) - I(\S_{111} ; Y_4^n)]\nonumber\\
&\geq  I(\Gamma_1 ; Y_{\L(\Gamma_1)}^n) + I(\Gamma_2 ; Y_{\L(\Gamma_2)}^n) + I(\Gamma_3 ; Y_{\L(\Gamma_3)}^n)\nonumber\\
&\phantom{\geq} + \frac{1}{2}  [I(\Gamma_1 \Gamma_2 ; Y_{\L(\Gamma_3)}^n) - I(\Gamma_1 \Gamma_2 ; Y_{\L(\Gamma_2)}^n)]  \nonumber\\
&\phantom{\geq} +[I (\Gamma_1 \Gamma_2 \Gamma_3 ; X^n) - I(\Gamma_1 \Gamma_2 \Gamma_3 ; Y_{\L(\Gamma_3)}^n)], 
\label{pr:out10-1}
\end{align}
which gives us the desired inequality by using Lemma~\ref{lemma1} and Lemma~\ref{lemma2} to bound each individual term. Similarly, for the case where $\L(\Gamma_3)<\L(\Gamma_1, \Gamma_2)$ we can write
%%%%%%%

\begin{align}
n(R_1+R_2+R_3+3\e) 
&\geq  H(\Gamma_1)+H(\Gamma_2)+H(\Gamma_3) -H(\Gamma_1 \Gamma_2 \Gamma_3 | X^n)\nonumber\\
&\phantom{\geq} - \frac{1}{2}\left[H(\Gamma_1 | Y_{\L(\Gamma_2)}^n) +H(\Gamma_2 | Y_{\L(\Gamma_2)}^n)  -H(\Gamma_1 \Gamma_2 | Y_{\L(\Gamma_2)}^n) \right]\nonumber\\ 
&\phantom{\geq} -\frac{1}{2}\left[H(\Gamma_1 | Y_{\L(\Gamma_3)}^n) +H(\Gamma_3 | Y_{\L(\Gamma_3)}^n)  -H(\Gamma_1 \Gamma_3 | Y_{\L(\Gamma_3)}^n) \right] \nonumber\\
&\phantom{\geq}-\frac{1}{2}\left[H( \Gamma_2 | Y_{\L(\Gamma_3)}^n) +H(\Gamma_3 | Y_{\L(\Gamma_3)}^n)  -H(\Gamma_2 \Gamma_3 | Y_{\L(\Gamma_3)}^n) \right]\nonumber\\
&\phantom{\geq}  - \frac{1}{2} \left[ H(\Gamma_1 \Gamma_2 | Y_{\beta}^n) + H(\Gamma_1 \Gamma_3 | Y_{\beta}^n)  
+ H(\Gamma_2 \Gamma_3 | Y_{\beta}^n) - 2 H(\Gamma_1 \Gamma_2 \Gamma_3 | Y_{\beta}^n) \right]\nonumber\\
&\geq  I(\Gamma_1 ; Y_{\L(\Gamma_1)}^n) + I(\Gamma_2 ; Y_{\L(\Gamma_2)}^n) + I(\Gamma_3 ; Y_{\L(\Gamma_3)}^n)\nonumber\\
&\phantom{\geq} + \frac{1}{2}  [I(\Gamma_1 \Gamma_2 ; Y_{\beta}^n) - I(\Gamma_1 \Gamma_2 ; Y_{\L(\Gamma_2)}^n)] \nonumber\\
&\phantom{\geq} + 
\frac{1}{2}  [I(\Gamma_1 \Gamma_3 ; Y_{\beta}^n) - I(\Gamma_1 \Gamma_3 ; Y_{\L(\Gamma_3)}^n)] \nonumber\\
&\phantom{\geq} + 
\frac{1}{2}  [I(\Gamma_2 \Gamma_3 ; Y_{\beta}^n) - I(\Gamma_2 \Gamma_3 ; Y_{\L(\Gamma_3)}^n)]  \nonumber\\
&\phantom{\geq} +
[I (\Gamma_1 \Gamma_2 \Gamma_3 ; X^n) - I(\Gamma_1 \Gamma_2 \Gamma_3 ; Y_{\beta}^n)].
\label{pr:out10-3}
\end{align}
where $\beta=\min\{ \L(\Gamma_1,\Gamma_2),\L(\Gamma_1,\Gamma_3),\L(\Gamma_2,\Gamma_3)\}$. Now, we can use the above-mentioned lemmas again to bound each individual term.
It is clear that  \eqref{pr:out10-1} and \eqref{pr:out10-3} give \eqref{upR123-1}. 
\end{proof}

\begin{remark}
Note that there is an one-to-one correspondence between the converse proof of Theorem~\ref{thm:mld} and that of Theorem~\ref{thm:md-outer-parametric}. In fact,  here we use the description subsets and their capability of lossy recovering the noisy source layers, where they have been used to losslessly reconstruct the source levels in the A-MLD.
\end{remark}

Now we are ready to prove Theorem~\ref{thm:md-outer}, which is a direct consequence of Theorem~\ref{thm:md-outer-parametric}.

\begin{proof}[Proof of Theorem~\ref{thm:md-outer}] 
We can choose arbitrary values of $d_i$'s, the variance of the additive noise in Theorem~\ref{thm:md-outer-parametric}, such that $d_1\geq d_2\geq \dots\geq d_6>0$. One can optimize the bound in Theorem~\ref{thm:md-outer-parametric} with respect to the values of $d_i$'s, and obtain a bound isolated from $d_i$'s, by replacing them with the optimal choices. Such bound would be the best that can be found using this method. However instead of solving such a difficult optimization problem, we choose $d_i=D_{\L^{-1}(i)}$, for $i=1,\dots,6$. It is clear the $d_i$'s satisfy the desired non-increasing order due to the definition of the ordering level. We will later show that this choice gives a bound which is within constant bit gap from the inner bound  in Theorem~\ref{thm:md-inner}.
 
The single description rate inequalities are exactly the same. The proof of the other inequalities is by straightforward evaluation of their counterparts in Theorem~\ref{thm:md-outer-parametric}, for $d_i=D_{\L^{-1}(i)}$, and applying simple bounds. We do not repeat the same arguments here, and only illustrate such derivation for one simple case. For the sum of two description rates, we can start with \eqref{upR12} and use $d_i=D_{\L^{-1}(i)}$ to get
\begin{align}
R_i+R_j+2\e
&\stackrel{(a)}{\geq}  \frac{1}{2} \log \frac{1+D_{\Gamma_i } }{D_{\Gamma_i}+D_{\Gamma_i}} \frac{1+D_{\Gamma_j}}{D_{\Gamma_j}+D_{\Gamma_j}} \frac{D_{\Gamma_i \Gamma_j} +\min( D_{\Gamma_i} , D_{\Gamma_j}) } {(1+\min( D_{\Gamma_i} , D_{\Gamma_j}) ) D_{\Gamma_i \Gamma_j} }\nonumber\\
&\stackrel{(b)}{=}  
\frac{1}{2}
\log \frac{1+\max( D_{\Gamma_i} , D_{\Gamma_j})  }{4 D_{\Gamma_1} D_{\Gamma_j}}  \frac{D_{\Gamma_i \Gamma_j} +\min( D_{\Gamma_i} , D_{\Gamma_j}) } { D_{\Gamma_i \Gamma_j} }\nonumber\\
&\stackrel{(c)}{\geq}
\frac{1}{2} \log \frac{\min(D_{\Gamma_i} , D_{\Gamma_j} ) }{4D_{\Gamma_i} D_{\Gamma_j} D_{\Gamma_i \Gamma_j}}\nonumber\\
&=-1 +\frac{1}{2} \log\frac{1}{\max (D_{\Gamma_i}, D_{\Gamma_j})} +\frac{1}{2}\log\frac{1}{ D_{\Gamma_i \Gamma_j}},
\label{eqn:R1R2-par}
\end{align}
where we have also used the fact  $d_{\max(a,b)}=\min(d_a,d_b)$ in $(a)$ which is implied by decreasing ordering of $d_i$'s, $(b)$ is due to the fact that $(1+x)(1+y)=(1+\min(x,y))(1+\max(x,y))$, and $(c)$ holds since $D_{\S}$'s are non-negative.  Similar simple manipulations give the other bounds in Theorem~\ref{thm:md-outer}. 

\end{proof}

%%%%%%%%%%%%%%%%%%%%%%%%%%%%%%%%%%%%%%%%%%%%%%%%%%%%%%%%%%%
\subsection{A Simple Coding Scheme for $3$-Description A-MD: Proof of Theorem~\ref{thm:md-inner}}
Our approach to prove Theorem~\ref{thm:md-inner} is to present a simple scheme with description rates satisfying \eqref{oR1}--\eqref{oR123-1} which  guarantees the distortion constraints. This scheme is based on the successive refinability of Gaussian sources \cite{EquitzCover:91, Bixio}, as well as the asymmetric multilevel diversity coding result presented in the previous section. In the encoding scheme, we first produce seven successive refinement layers of the source, and then encode them losslessly. 

\noindent\textbf{Successive refinement coding}\\
Consider the non-increasing sequence of distortion constraints $\D'=(D_{\L^{-1}(1)}, D_{\L^{-1}(2)}, \dots,D_{\L^{-1}(7)})$. Produce seven layers of successive refinement (SR), $\Psi_k$ for $ k=1,2,\dots,7$, such that one can reconstruct the source sequence within distortion constraint $D_{\L^{-1}(k)}$ using $\Psi_1,\dots,\Psi_{k}$. Since the Gaussian source is successively refinable \cite{EquitzCover:91}, it is clear that $\Psi_k$ can be encoded to a binary block  of length arbitrary close to 
\begin{eqnarray}
nh'_k\triangleq nR(D_{\L^{-1}(k)})-nR(D_{\L^{-1}(k-1)})
\label{eq:new_h}
\end{eqnarray} 
where $R(D)=-\frac{1}{2}\log D$ is the unit variance Gaussian R-D function, and $D_{\L^{-1} (0)}\triangleq 1$. Note that by using fixed length code in SR coding, these blocks are block-wise independently and identically distributed.

\noindent\textbf{Multilevel diversity coding}\\
Now, it only remains to produce the descriptions such that the decoder at level $\L(\S)$ can losslessly recover the  precoded bitstream SR layers $\Psi_1,\dots,\Psi_{\L(\S)}$, and then reconstruct the Gaussian source sequence within distortion $D_{\S}$. Encoding and decoding of the precoded SR layers are exactly the A-MLD problem. We can simply use the rate region characterization of the A-MLD problem in Theorem~\ref{thm:mld} to find the achievable rate region of the proposed scheme for A-MD, where only substitution of $V_k=\Psi_k$ and $U_k=(\Psi_1,\dots,\Psi_k)$ is needed.  Therefore we have
\begin{align}
h_k=\frac{1}{2}\log\frac{D_{\L^{-1}(k-1)} }{D_{\L^{-1}(k})}
\end{align}
and 
\begin{align}
H_k=\sum_{j=1}^k h_j=\frac{1}{2}\log\frac{1}{D_{\L^{-1}(k)}}.
\end{align}
Replacing the values of $H_k$'s in Theorem~\ref{thm:mld}, we obtain Theorem~\ref{thm:md-inner}. 

It is worth mentioning that although the successive refinement part of the scheme is well-known, producing the descriptions and their rate characterization is not an easy task without the A-MLD result. As an example, consider a system with ordering level $\L_1$ and assume $D_{\Gamma_2}D_{\Gamma_1\Gamma_3} \leq D_{\Gamma_3}^2\leq D_{\Gamma_2}D_{\Gamma_1 \Gamma_2}$. An achievable rate triple is
\begin{align}
(R_1,R_2,R_3)=(\frac{1}{2}\log\frac{D_{\Gamma_2} D_{\Gamma_2 \Gamma_3}}{D_{\Gamma_1} D_{\Gamma_3} D_{\Gamma_1 \Gamma_2 \Gamma_3}},\frac{1}{2}\log\frac{D_{\Gamma_1 \Gamma_3}}{D_{\Gamma_3}D_{\Gamma_2 \Gamma_3}},\frac{1}{2}\log\frac{D_{\Gamma_3}}{D_{\Gamma_2}D_{\Gamma_1 \Gamma_3}}),
\end{align}
 which corresponds to the corner point $Y_{12}$ in regime II of the A-MLD coding problem. The description encoding for this corner point is illustrated in Fig~\ref{fig:ex-2}. Clearly, the coding scheme for this point matches that for $Y_{12}$ closely, and the SR encoded information in the $3$-rd, $4$-th and $5$-th layers needs to be strategically re-processed using linear codes. Without the underlining A-MLD coding scheme, it appears difficult to devise this coding operation directly.

\begin{figure}[ht]
\begin{center}
	\psfrag{U1}[Bc][Bc]{$\tilde{V}_1$}
	\psfrag{U2}[Bc][bc]{$\tilde{V}_2$}
	\psfrag{U3}[Bc][bc]{$\tilde{V}_{3,1}$}
	\psfrag{U32}[Bc][bc]{$\tilde{V}_{3,2}$}
	\psfrag{U4}[Bc][bc]{$\tilde{V}_4$}
	\psfrag{U5}[Bc][bc]{$\tilde{V}_{5,1}$}
	\psfrag{U52}[Bc][bc]{$\tilde{V}_{5,2}$}
	\psfrag{U6}[Bc][bc]{$\tilde{V}_6$}
	\psfrag{U7}[Bc][bc]{$\tilde{V}_7$}
	\psfrag{l1}[Bc][bc]{$\ell_1$}
	\psfrag{l2}[Bc][bc]{$\ell_2$}
	\psfrag{l31}[Bc][bc]{$\ell_3-\ell_4$}
	\psfrag{l32}[Bc][bc]{$\ell_4$}
	\psfrag{l4}[Bc][bc]{$\ell_4$}
	\psfrag{l51}[Bc][bc]{$\ell_3-\ell_4$}
	\psfrag{l52}[Bc][bc]{$\ell_5+\ell_4-\ell_3$}
	\psfrag{l6}[Bc][bc]{$\ell_6$}
	\psfrag{l7}[Bc][bc]{$\ell_7$}
	\includegraphics[width=150mm]{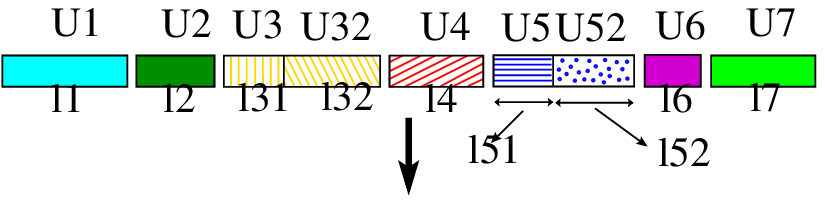}
\end{center}
\begin{center}
\hspace{-25pt}
	\psfrag{g1}[Bc][Bc]{$\Gamma_1:\ $}
	\psfrag{g2}[Bc][bc]{$\Gamma_2:\ $}
	\psfrag{g3}[Bc][bc]{$\Gamma_3:\ $}
	\includegraphics[width=159mm]{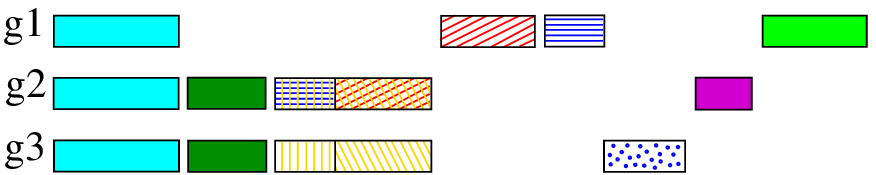}
\end{center}
\caption{Description encoding lossless reconstruction for a system with ordering level $\L_1$ and $D_{\Gamma_2}D_{\Gamma_1\Gamma_3} \leq D_{\Gamma_3}^2\leq D_{\Gamma_2}D_{\Gamma_1 \Gamma_2}$} 
\label{fig:ex-2}
\end{figure}

%%%%%%%%%%%%%%%%%%%%%%%%%%%%%%%%%%%%%%%%%%%%%%%%%%%%%%%%%%%%%%%%%

\section{Conclusion}
We formulated the asymmetric multilevel diversity coding problem, an
asymmetric counterpart for the symmetric version of the problem. A
complete characterization of the admissible rate region is given for
the three-description case. We partition the data and apply
linear network coding (binary \texttt{xor}) on the partitioned subsequences, as
a part of the proposed encoding scheme to achieve the upper bound. It
turns out that using such a strategy of jointly encoding the independent
data streams is crucial, and the outer bound is not achievable without
using it, in contrast to the symmetric problem, in which the
source-separation coding is known to be optimal.

Using the intuition gained through A-MLD coding problem, we consider the Gaussian asymmetric three description problem. Inner and outer bounds for
the admissible rate region are given, and the difference between them
are shown to be bounded by small universal constants. 
Though the general asymmetric Gaussian MD rate distortion
region is hard to characterize, it is satisfying to see that a simple
coding architecture is almost optimal. The A-MLD coding problem plays
a key role in establishing these results, which further strengthens
the connection between the MLD coding and the MD
problem. Philosophically, this work is related to the 
approximation results obtained in the context of the interference and
relay networks \cite{ETW:08, ADT:09}, and further illustrates the effectiveness of 
the general approach of first treating the lossless (deterministic) coding problem,
 and then deriving approximate characterization for its lossy (noisy) counterpart.

\section*{Appendix A}\label{app-proof}

\begin{IEEEproof}[Proof of Lemma~\ref{lemma1}]
\begin{align}
I( \S ; Y_i^n ) &= h(Y_i^n) - h(Y_i^n | \S) \nonumber\\
&= h(Y_i^n) - h(Y_i^n -\hat{X}^n_{\S} | \S) \nonumber\\
&\geq  h(Y_i^n) - h(X^n+Z_i^n -\hat{X}^n_{\S})\nonumber\\
&\geq  h(Y_i^n) - \sum_{t=1}^n h(X(t)-\hat{X}_{\S}(t)+Z_i(t))\nonumber\\
&\stackrel{(a)}{\geq}  \frac{n}{2} \log (2 \pi e (1+d_i)) - \sum_{t=1}^n \frac{1}{2} \log \left( 2\pi e (\E(X(t)-\hat{X}_{\S}(t))^2 + d_i)\right)\nonumber\\
&\stackrel{(b)}{\geq}  \frac{n}{2} \log (1+d_i) - \frac{n}{2} \log \left( \E d(X^n,\hat{X}_{\S}^n) + d_i\right)\nonumber\\
&\stackrel{(c)}{\geq}  \frac{n}{2} \log \frac{1+d_i}{D_{\S} + d_i}
\end{align}
where $(a)$ is due to the fact that the entropy of any random variable is upper bounded by that of a Gaussian variables with the same variance; $(b)$ is implied by   concavity of the function $\log(x)$; and in $(c)$ we have used the fact that $\log(x+a)$ is an increasing function in $x$.  
\end{IEEEproof}

\begin{IEEEproof}[Proof of Lemma~\ref{lemma2}] 
Note that
\begin{align}
%I(\L_i | Y_k^n) -I(\L_i | Y_j^n) &=&
h(Y_i^n|\S)-h(Y_j^n|\S)
&\stackrel{(a)}{=} h(Y_i^n|\S)-h(Y_j^n|\S, Z_i^n-Z_j^n)\nonumber\\
&\stackrel{(b)}{=} h(Y_i^n|\S)-h(Y_i^n|\S, Z_i^n-Z_j^n) \nonumber\\
&= I(Y_i^n; Z_i^n-Z_j^n | \S)\nonumber\\
&= h(Z_i^n-Z_j^n | \S) - h(Z_i^n-Z_j^n | \S, Y_i^n)\nonumber\\
&\stackrel{(c)}{\geq} h(Z_i^n-Z_j^n) - h(Z_i^n-Z_j^n | Y_i^n-\hat{X}_{\S}^n)\nonumber\\
&\geq \sum_{t=1}^n h(Z_i(t)-Z_j(t)) - h(Z_i(t)-Z_j(t) | Y_i(t)-\hat{X}_{\S}(t))\nonumber\\
&= \sum_{t=1}^n I(Z_i(t)-Z_j(t) ; X(t) -\hat{X}_{\S}(t) + Z_i(t))\nonumber\\
&\stackrel{(d)}{\geq} \sum_{t=1}^n \frac{1}{2}\log\frac{\E(X(t)-\hat{X}_{\S}(t))^2+d_i}{\E(X(t)-\hat{X}_{\S}(t))^2+d_j}\nonumber\\
&\stackrel{(e)}{\geq} \frac{n}{2} \log \frac{D_{\S}+d_i}{D_{\S}+d_j}
\end{align}
where $(a)$ holds because $Y_j^n$ is independent of $Z_i^n-Z_j^n=N_i^n+\cdots+N_{j-1}^n$ for $i<j$; the equality in $(b)$ is because of  $Y_i=Y_j+(Z_i-Z_j)$; $(c)$ is due to the data processing inequality and the fact that $Z_i^n-Z_j^n$ is purely noise and independent of $X^n$ and therefore $\S$; in $(d)$ we use the worst noise lemma in \cite{CoverThomas,DigCov01}; and $(e)$ is due to convexity and monotonicity of $\log (x+a)/(x+b)$ in $x$ when $a\geq b$. 
Therefore, we simply have
\begin{align}
I(\S ; Y_j^n) -I(\S ; Y_i^n) &= h(Y_j^n) -h(Y_j^n|\S) -h(Y_i^n) + h(Y_i^n|\S)\nonumber\\
&\geq  \frac{n}{2} \log\frac{1+d_j}{1+d_i} + \frac{n}{2} \log \frac{D_{\S}+d_i}{D_{\S}+d_j}\nonumber\\
&=\frac{n}{2} \log \frac{(1+d_j)(D_{\S}+d_i)}{(1+d_i)(D_{\S}+d_j)}.
\end{align}
\end{IEEEproof}

\end{document}